\newtheorem{theorem}{Theorem}[section]
\newtheorem{corollary}{Corollary}[section]
\newtheorem{proposition}{Proposition}[section]
\newtheorem{lemma}{Lemma}[section]
\theoremstyle{remark}
\newtheorem*{remark}{Remark}
\theoremstyle{definition}
\newtheorem{example}{Example}[section]
\definecolor{brickred}{cmyk}{0,0.89,0.94,0.28}
\definecolor{goldenrod}{cmyk}{0,0.10,0.84,0}
\definecolor{purple}{cmyk}{0.45,0.86,0,0}
\definecolor{rawsienna}{cmyk}{0,0.72,1,0.45}
\definecolor{olivegreen}{cmyk}{0.64,0,0.95,0.40}
\definecolor{peach}{cmyk}{0,0.5,0.7,0}
\definecolor{darkolive}{rgb}{0.,0.4,0.}
\colorlet{grey}{gray!40}
\global\long\def\E{\mathbb{E}}
\newcommand{\ostar}{\mathbin{\mathpalette\make@circled\star}}
\newcommand{\make@circled}[2]{%
	\ooalign{$\m@th#1\smallbigcirc{#1}$\cr\hidewidth$\m@th#1#2$\hidewidth\cr}%
}
\newcommand{\smallbigcirc}[1]{%
	\vcenter{\hbox{\scalebox{0.77778}{$\m@th#1\bigcirc$}}}%
}
\begin{document}

\title{On Achievable Rates Over Noisy Nanopore Channels}
\author{
\IEEEauthorblockN{V.~Arvind~Rameshwar,~\IEEEmembership{Member,~IEEE}\ and\ Nir Weinberger,~\IEEEmembership{Senior Member,~IEEE}}
%\and
%\IEEEauthorblockN{Nir Weinberger}
\thanks{V. A. Rameshwar is with the India Urban Data Exchange Program Unit, Indian Institute of Science, Bengaluru, India, email: \texttt{arvind.rameshwar@gmail.com}. N. Weinberger is with the Department of Electrical and Computer Engineering, Technion, Haifa 3200003, Israel, email: \texttt{nirwein@technion.ac.il}. The research of N. Weinberger was partially supported by the Israel Science Foundation (ISF), grant no. 1782/22.}
}
\IEEEoverridecommandlockouts
%\author{V.~Arvind~Rameshwar,~\IEEEmembership{Student Member,~IEEE,}
%        and~Navin~Kashyap,~\IEEEmembership{Senior~Member,~IEEE}% <-this % stops a space
%\thanks{This work was supported in part by a Qualcomm Innovation Fellowship India 2020. The work of V.~A.~Rameshwar was supported by a Prime Minister's Research Fellowship, from the Ministry of Education, Govt. of India. The work of N.~Kashyap was supported in part by MATRICS grant \ MTR/2017/000368 from the Science and Engineering Research Board (SERB), Govt. of India. This work was presented in part at the 2021 IEEE International Symposium on Information Theory (ISIT).}
%\thanks{The authors are with the Department of Electrical Communication Engineering, Indian Institute of Science, Bengaluru 560012, India (e-mail: vrameshwar@iisc.ac.in;~nkashyap@iisc.ac.in).}
%}
%

% If you want to put a publisher's ID mark on the page you can do it like
% this:
%\IEEEpubid{0000--0000/00\$00.00~\copyright~2015 IEEE}
% Remember, if you use this you must call \IEEEpubidadjcol in the second
% column for its text to clear the IEEEpubid mark.
\markboth{}%
{Rameshwar and Weinberger: Achievable Rates Over Noisy Nanopore Channels}

\maketitle

% As a general rule, do not put math, special symbols or citations
% in the abstract or keywords.
\begin{abstract}
	In this paper, we consider a recent channel model of a nanopore sequencer proposed by McBain, Viterbo, and Saunderson (2024), termed the \emph{noisy nanopore channel} (NNC). In essence, an NNC is a duplication channel with structured, Markov inputs, that is corrupted by memoryless noise. We first discuss a (tight) lower bound on the capacity of the NNC in the absence of random noise. Next, we present lower and upper bounds on the channel capacity of general noisy nanopore channels. We then consider two interesting regimes of operation of an NNC: first, where the memory of the input process is large and the random noise introduces erasures, and second, where the rate of measurements of the electric current (also called the sampling rate) is high. For these regimes, we show that it is possible to achieve information rates close to the noise-free capacity, using  low-complexity encoding and decoding schemes. In particular, our decoder for the regime of high sampling rates makes use of a change-point detection procedure -- a subroutine of immediate relevance for practitioners.
\end{abstract}

% Note that keywords are not normally used for peerreview papers.
%\begin{IEEEkeywords}
%Feedback capacity, capacity without feedback, constrained coding, binary erasure channel, Markov decision processes, runlength limited (RLL) constraints, Reed-Muller codes.%, Weight distribution.
%\end{IEEEkeywords}

% For peer review papers, you can put extra information on the cover
% page as needed:
% \ifCLASSOPTIONpeerreview
% \begin{center} \bfseries EDICS Category: 3-BBND \end{center}
% \fi

\IEEEpeerreviewmaketitle
\section{Introduction}
In the last decade, significant progress has been made in the problem of storing information on synthetically generated DNA strands \cite{dnachurch,dnagoldman,dnagrass,dnaerlich,dnayazdi,dnaorganick}, leading to widespread interest in DNA as a viable medium for the storage of archival data. In this light, various works, for example \cite{shomorony,nir_merhav,lenz} considered the fundamental information-theoretic limits of a channel model for DNA-based storage, which takes into account processes such as Polymerase Chain Reaction (PCR) amplification, random sampling from a pool of DNA strands, and subsequent reconstruction from noisy reads. Such a model assumes a sequencer that can only read short DNA strands, which are typically a few hundred bases long. More recently, the field of DNA sequencing witnessed a new revolution via nanopore sequencers \cite{sequencingsurvey,oxford} that can sequence DNA strands of lengths that are roughly $10$--$100$ Kilo-bases. We also refer the reader to other interesting experimental works on nanopore sequencers \cite{yazdi-nanopore,chakra-nanopore}.

Given the growing interest in nanopore sequencing, various papers\cite{mao,chandaknanopore,hamoumnanopore,mcbaininfo1,mcbainsurvey} proposed channel models for the sequencer, in an attempt to model the several sources of inaccuracies during reading. These include intersymbol interference (ISI), random dwell times of bases in the motor protein of the nanopore, ``backtracking'' and ``skipping'' (or equivalently, base insertions and deletions), fading, and so on. With the aid of simulation studies conducted using the Scrappie technology demonstrator \cite{scrappie} (now archived) of Oxford Nanopore Technologies, \cite{mcbaininfo1,mcbainsurvey} introduced a channel model that seemingly accurately models the physical nanopore channel at the raw signal (or sample) level. Essentially, such a \emph{noisy nanopore channel} (NNC) is given by the cascade of a duplication channel with a memoryless channel; further, the input to the duplication channel is a sequence of $\tau$-tuples of bases (also called $\tau$-mers), which has a specific Markov structure. The lumping of bases into $\tau$-mers models ISI, with $\tau$ representing the ``memory" (or ``stationarity") of the pore model; the duplication channel reflects the random dwell times of $\tau$-mers; and the memoryless channel is a model for the noise in the sequencing process. 

After the introduction of the NNC in \cite{mcbainmodel}, the authors in \cite{mcbaininfo2} established that the classical Shannon capacity, given by the maximum mutual information between (constrained) inputs and outputs, equals the channel capacity of the NNC (and, more generally, of noisy duplication channels with a Markov source). Furthermore, preliminary numerical estimates of the capacity were obtained for simple Markov-constrained noisy duplication channels; however, these do not accurately model the ISI effects in the NNC setting. This leaves open the question of accurately characterizing, or obtaining explicit estimates of, the capacity of the NNC. The goal of the current paper is to make some progress towards this goal.

In particular, we make use of tools and results from the literature on capacity computation for channels with synchronization errors (such as insertion and deletion channels) to obtain estimates of the capacity of selected NNCs. Starting from the seminal paper by Dobrushin \cite{Dob67}, much work has been carried out on such channels; a selection of papers on capacity computation over such channels is \cite{diggavidel,diggaviupper,drinea1,drinea3,drinea2,kirsch,fertonani,aravind,mercier,rahmati1}. On a related note, several works \cite{coding-1,coding-2,coding-3,coding-4} (see also \cite{sloanedel}) have also considered the question of constructing explicit codes over channels with synchronization errors.

In this paper, we first present a lower bound on the capacity of the NNC when there is no (memoryless) noise in the sequencing process. The proof of our bound for the \emph{noiseless} NNC is a much simplified presentation of a result that can also be adapted from the main result in \cite{kirsch}; the arguments in \cite{kirsch} in fact show that this lower bound is tight. Next, we present simple, computable lower and upper bounds on the capacity of general, noisy nanopore channels, which are the first, non-trivial bounds on the capacity of such channels. 
%However, we mention that these bounds are somewhat poor for most practical channel models, which, invariably, require large memory lengths \cite{mcbainsurvey} -- sharpening these bounds is a fertile topic for future research.

The bounds above are most effective for short memory lengths, and refining them for the long memory lengths required for practical channel models \cite{mcbainsurvey} appears to be challenging. We ameliorate this issue for the class of NNCs where the sequencing errors are introduced as erasures. For such a channel, we show that in fact information rates \emph{close to the noise-free capacity} are achievable, in the limit of large $\tau$-mer lengths, and in particular can be achieved by practical encoding and decoding algorithms. We then shift our attention to another regime of operation of general NNCs, wherein the outputs are read by sampling at a very high rate. The sampling rates are under the control of the system designer, who can set them to be as high as required, possibly at high cost \cite{mcbainsurvey}. Indeed, high sampling rates were also assumed in the early work \cite{mao} on nanopore channel modelling. For such channels, we show that information rates close to the noise-free capacity are achievable, again via simple encoding and decoding algorithms. Our decoding algorithm for the setting of high sampling rates uses a \emph{change-point detection} procedure for estimating the boundaries of runs of output symbols that arise from the same input. As this procedure is similar to that used in practice \cite{laszlo}, we believe that our analyses for these regimes will be of immediate interest for practitioners and theorists alike.
\section{Notation and Preliminaries}
\label{sec:notation}
\subsection{Notation}
For a positive integer $n$, we use $[n]$ as shorthand for $[1:n]$. Random variables are denoted by capital letters, e.g., $X, Y$, and small letters, e.g., $x, y$, denote their realizations. Sets are denoted by calligraphic letters, e.g., $\mathcal{X}, \mathcal{Y}$; the notation $\mathcal{X}^c$ denotes the complement of the set $\mathcal{X}$, when the universal set is clear from the context. Notation such as $P(x), P(y|x)$ are used to denote the probabilities $P_X(x), P_{Y|X}(y|x)$, when it is clear which random variables are being referred to. The notations ${H}(X) :=  \mathbb{E}[-\log P(X)], {H}(Y\mid X):= \mathbb{E}[-\log P(Y\mid X)]$, and $I(X;Y) := H(Y)-H(Y\mid X)$ denote the entropy of $X$, conditional entropy of $Y$ given $X$, and mutual information between $X$ and $Y$, respectively. Given any real $p\in [0,1]$, we let $h_b(p):=-p\log p -(1-p)\log(1-p)$, where $h_b$ denotes the binary entropy function; here, the base of the logarithm will be made clear from the context (we use $\ln$ to refer to the natural logarithm). The notation Ber$(p)$ and Bin$(n,p)$ refer, respectively, to the Bernoulli distribution with parameter $p$ and the Binomial distribution with parameters $n$ and $p$, where $p\in [0,1]$ and $n$ is a positive integer. Given sequences $(a_n)_{n\geq 1}$ and $(b_n)_{n\geq 1}$, we say that $a_n = O(b_n)$, if $a_n\leq C\cdot b_n$, for some fixed constant $C\geq 0$, for sufficiently large $n$, and $a_n = o(b_n)$, if $\lim_{n\to \infty} \frac{a_n}{b_n} = 0$.

%For sequences $(a_n)_{n\geq 1}$ and $(b_n)_{n\geq 1}$ of positive reals, $a_n = o(b_n)$ if $\lim_{n\to \infty} \frac{a_n}{b_n} = 0$.
%Given integers $a, b$, we use the notation $[a:b]$ to denote the set $\{a,a+1,\ldots,b\}$, for $a\leq b$. We use $[n]$ as shorthand for $[1:n]$. For a vector $\mathbf{a}\in \{0,1\}^n$, we denote by $w(\mathbf{a})$ its Hamming weight (or, simply, weight), i.e., $w(\mathbf{a})$ is the number of ones present in $\mathbf{a}$. %and we use the notation supp$(\mathbf{a})$ to denote the set of coordinates $\{i\in[n]:\ a_i = 1\}$. We say that a length-$n$ vector $\mathbf{a}$ is supported on $S\subseteq [n]$ if supp$(\mathbf{a})\subseteq S$. 
%Given a pair of vectors $\mathbf{a},\mathbf{b} \in \mathbb{R}^n$, we define the Hamming distance (or, simply, distance) $d(\mathbf{a},\mathbf{b})$ between $\mathbf{a}$ and $\mathbf{b}$ as the number of coordinates in which they differ, i.e., $d(\mathbf{a},\mathbf{b}) = |\{i\in [n]:\ a_i\neq b_i\}|$.  
%The notation $a^n$ is used to denote the length-$n$ vector $(a_1,\ldots,a_n)$. 
Given a vector $\mathbf{b}\in \mathcal{X}^n$, for some finite alphabet $\mathcal{X}$ and integer $n\geq 1$, we let $\ell(\mathbf{b})$ denote its length. We define a \emph{run} of a symbol $x\in \mathcal{X}$ in $\mathbf{b}$ to be any vector of contiguous indices $(i,i+1,\ldots,i+k-1)$, such that $b_j = x$, for all $i\leq j\leq i+k-1$; here, we call $k$ the \emph{runlength} of the run of the symbol $x$ of interest, starting at position $i$. Next, we let $\rho(\mathbf{b})$ denote the vector of runlengths of runs in $\mathbf{b}$, in the order that the runs appear, and $\iota(\mathbf{b})$ to be the vector of symbols associated with each run, again in the order that the runs appear. For example, if $\mathbf{b} = (1,3,1,1,1,2,2,4)$, we have $\rho(\mathbf{b}) = (1,1,3,2,1)$ and $\iota(\mathbf{b}) = (1,3,1,2,4)$. Further, given the vector of runlengths $\rho(\mathbf{b})$, we define the vector $\rho^{(x)}(\mathbf{b})$ to be the vector of runlengths in $\mathbf{b}$ corresponding to the symbol $x\in \mathcal{X}$, in the order of appearance of the runs. In the previous example, for instance, we will have $\rho^{(1)} = (1,3)$ and $\rho^{(3)} = 1$.

\subsection{Channel Model}
\label{sec:channelmodel}
\begin{figure}[!t]
	\centering
	\resizebox{0.8\linewidth}{!}{
		
		\tikzset{every picture/.style={line width=0.75pt}} %set default line width to 0.75pt        
		
		\begin{tikzpicture}[x=0.75pt,y=0.75pt,yscale=-1,xscale=1]
			%uncomment if require: \path (0,363); %set diagram left start at 0, and has height of 363
			
			%Shape: Rectangle [id:dp975352940999781] 
			\draw   (305,96) -- (406,96) -- (406,163) -- (305,163) -- cycle ;
			%Straight Lines [id:da6590512382497793] 
			\draw    (203,131) -- (301,131) ;
			\draw [shift={(304,131)}, rotate = 180] [fill={rgb, 255:red, 0; green, 0; blue, 0 }  ][line width=0.08]  [draw opacity=0] (8.93,-4.29) -- (0,0) -- (8.93,4.29) -- cycle    ;
			%Straight Lines [id:da5821609254127973] 
			\draw    (406,131) -- (574,131) ;
			\draw [shift={(577,131)}, rotate = 180] [fill={rgb, 255:red, 0; green, 0; blue, 0 }  ][line width=0.08]  [draw opacity=0] (8.93,-4.29) -- (0,0) -- (8.93,4.29) -- cycle    ;
			%Straight Lines [id:da7789905570129427] 
			\draw    (354,236) -- (354,166) ;
			\draw [shift={(354,163)}, rotate = 90] [fill={rgb, 255:red, 0; green, 0; blue, 0 }  ][line width=0.08]  [draw opacity=0] (8.93,-4.29) -- (0,0) -- (8.93,4.29) -- cycle    ;
			%Shape: Rectangle [id:dp9515173985383254] 
			\draw   (576,96) -- (677,96) -- (677,163) -- (576,163) -- cycle ;
			%Straight Lines [id:da8569160944926424] 
			\draw    (676,131) -- (774,131) ;
			\draw [shift={(777,131)}, rotate = 180] [fill={rgb, 255:red, 0; green, 0; blue, 0 }  ][line width=0.08]  [draw opacity=0] (8.93,-4.29) -- (0,0) -- (8.93,4.29) -- cycle    ;
			
			% Text Node
			\draw (318,105) node [anchor=north west][inner sep=0.75pt]   [align=left] {\begin{minipage}[lt]{53.19pt}\setlength\topsep{0pt}
					\begin{center}
						 Duplication\\ channel
					\end{center}
					
			\end{minipage}};
			% Text Node
			\draw (215,107.4) node [anchor=north west][inner sep=0.75pt]    {$S_{1} ,\dotsc ,S_{m}$};
			% Text Node
			\draw (230,85) node [anchor=north west][inner sep=0.75pt]   [align=left] { $\tau$-mers};
			% Text Node
			\draw (233,140.4) node [anchor=north west][inner sep=0.75pt]    {$P_{S |S^-}$};
			% Text Node
			\draw (300,241.4) node [anchor=north west][inner sep=0.75pt]    {$K_{1} ,\dotsc ,K_{m} \in \Lambda \setminus \{0\}$};
			% Text Node
			\draw (334,265) node [anchor=north west][inner sep=0.75pt]   [align=left] { i.i.d.};
			% Text Node
			\draw (448,107.4) node [anchor=north west][inner sep=0.75pt]    {$Z_{1} ,\dotsc ,Z_{T_{m}}$};
			% Text Node
			\draw (609,105) node [anchor=north west][inner sep=0.75pt]   [align=left] {\begin{minipage}[lt]{25.96pt}\setlength\topsep{0pt}
					\begin{center}
						DMC
					\end{center}
					
			\end{minipage}};
			% Text Node
			\draw (618,135) node [anchor=north west][inner sep=0.75pt]    {$W$};
			% Text Node
			\draw (685,107.4) node [anchor=north west][inner sep=0.75pt]    {$Y_{1} ,\dotsc ,Y_{T_{m}}$};
		\end{tikzpicture}
	}
	\caption{The noisy nanopore channel $W_\text{nn}$ }
	\label{fig:nanoporemodel}
\end{figure}

The noisy nanopore channel (NNC), as mentioned earlier, is a noisy duplication channel with an input source that is constrained to have a specific first-order Markov structure. The NNC $W_\text{nn} = W_\text{nn}(\mathcal{X},\mathcal{Y},\tau,P_K,W)$ that we describe here is that introduced in \cite{mcbaininfo1}, with the difference that we assume that the noise arises from a general memoryless channel, and not specifically from an additive white Gaussian noise (AWGN) channel. We shall define each of the parameters of $W_\text{nn}$, below.

Let $\mathcal{X}$ denote the alphabet of possible bases $B$; a natural choice of $\mathcal{X}$ is the set $\{\mathsf{A},\mathsf{T},\mathsf{G},\mathsf{C}\}$ of nucleotides. The input to the channel is a sequence $(S_1,\ldots,S_m)$ of ``states'' or ``$\tau$-mers'', where each $S_i\in \mathcal{X}^\tau$, $i\in [m]$, for some fixed integer $\tau\geq 1$. The integer $\tau$ models the memory (also called ``stationarity") of the nanopore; while small values of $\tau$ lead to a smaller state alphabet and hence more tractable detection algorithms \cite{mcbainsurvey}, we mention that the model in Scrappie assumes that $\tau$ is large.

The $\tau$-mers $S_i$, $i\in [m-1]$, are such that if $S_i = (B_1,\ldots,B_\tau)$, for some (random) bases $B_j\in \mathcal{X}$, $j\in [\tau]$, then its must hold that $S_{i+1} = (B_2,\ldots,B_\tau,B_{\tau+1})$, for some $B_{\tau+1}\in \mathcal{X}$. In other words, the $\tau$-mer $S_{i+1}$ is a left-shifted version of $S_i$, for $i\in [m-1]$. The random process $S^m$ hence is a structured first-order Markov process (or \emph{one-step} Markov process), which we call a \emph{de Bruijn Markov process}. Following \cite{mcbaininfo2}, we assume here that $S^m$ is stationary and ergodic, i.e., irreducible and aperiodic. Let $P_{S|S^-}$ denote its (stationary) transition kernel.

Now, consider an independent and identically distributed (i.i.d.) duplication process $K^m = (K_1,\ldots,K_m)$, with $T_i:= \sum_{j\leq i} K_j$, for $i\in [m]$; here, we let $K_i\stackrel{\text{i.i.d.}}{\sim} P_K$, where the distribution $P_K$ is supported on a set $\Lambda\setminus\{0\}$ of positive integers. We set $T_0:=0$, for convenience. The input sequence $S^m$ is passed through the duplication channel, resulting in the output $Z^{T_m}$, where $(Z_1,\ldots,Z_{T_1}) = (S_1,\ldots,S_1)$, and 
\begin{equation}
(Z_{T_{i-1}+1},\ldots,Z_{T_{i}}) = (S_i,\ldots,S_i),
\end{equation}
for $i\geq 2$. We emphasize that the duplication channel repeats entire $\tau$-mers and not individual bases that constitute a $\tau$-mer. Finally, the sequence $Z^{T_m}$ (of random length) is passed through a memoryless channel $W$, resulting in the final output sequence $Y^{T_m} = (Y_1,\ldots,Y_{T_m})$, where each $Y_j\in \mathcal{Y}$; here, $\mathcal{Y}$ is called the output alphabet. The channel law of the DMC $W$ obeys
\begin{align}
P(Y^{T_m} = \mathbf{y}\mid Z^{T_m} = \mathbf{z}) &= P(Y^{T_m} = \mathbf{y}\mid Z^{T_m} = \mathbf{z}, T_m = \ell(\mathbf{z}))\\ &= \prod_{j=1}^{\ell(\mathbf{z})} W(y_i\mid z_i).
\end{align}
A pictorial depiction of the channel is shown in Fig. \ref{fig:nanoporemodel}. 
\subsection{Channel Capacity}
We define the ergodic-capacity $C(W_\text{nn})$ of $W_\text{nn}$ as the supremum over all rates achievable with vanishing error probability, when the de Bruijn Markov input process $S^m$ is constrained to be stationary and ergodic\footnote{In other words, the rates are obtained via random coding schemes with codewords generated using stationary and ergodic de Bruijn Markov input processes.}. Via techniques from the theory of information stability \cite{han-verdu}, the authors of \cite{mcbaininfo2} establish the equivalence between the above operational definition of capacity and the supremum of a multi-letter mutual information expression. More precisely, the following theorem holds as a corollary of \cite[Thm. 4]{mcbaininfo2}:

\begin{theorem}
	\label{thm:capacity}
	The ergodic-capacity $C(W_\text{\normalfont nn})$ is given by
	\[
	C(W_\text{\normalfont nn}) = \sup_{P_{S|S^-}} \lim_{m\to \infty}\frac{1}{m} I(S^m;Y^{T_m}),
	\]
	where the supremum is over all stationary and ergodic transition kernels $P_{S|S^-}$ of the de Bruijn Markov process $S^m$.
\end{theorem}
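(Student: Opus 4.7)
The plan is to obtain Theorem \ref{thm:capacity} as a direct specialization of \cite[Thm.~4]{mcbaininfo2}, which addresses the capacity of general noisy duplication channels with stationary-ergodic Markov inputs. Since the de Bruijn Markov source, the i.i.d. duplication layer, and the memoryless channel $W$ that together define the NNC $W_\text{nn}$ are particular instances of the more general setup treated there, the work reduces to (i) verifying that the hypotheses of that general theorem are met in our setting, and (ii) rewriting the conclusion in the notation of Section~\ref{sec:channelmodel}.

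First I would check the hypotheses. The input $S^m$ is a first-order Markov chain on the finite state space $\mathcal{X}^\tau$ (with de Bruijn transition structure), which we restrict to be stationary and ergodic (irreducible and aperiodic). The duplication lengths $K_1,\ldots,K_m$ are i.i.d.\ with $P_K$ supported on $\Lambda \setminus \{0\}$, so in particular $\E[K] \in [1,\infty)$ and $K_i \geq 1$, ensuring that every input $\tau$-mer produces at least one output symbol and that $T_m/m \to \E[K]$ almost surely by the strong law of large numbers. Finally, $W$ is a DMC with finite output alphabet $\mathcal{Y}$. All of these exactly match the regularity assumptions placed in \cite{mcbaininfo2}.

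Second, I would invoke the information-stability argument of \cite{mcbaininfo2}, itself based on the Han–Verd\'u machinery \cite{han-verdu}. The key step is to show that for any admissible $P_{S|S^-}$, the sequence of normalized information densities $\tfrac{1}{m}\log \tfrac{dP_{Y^{T_m}|S^m}}{dP_{Y^{T_m}}}$ converges in probability to $\lim_{m\to\infty}\tfrac{1}{m}I(S^m; Y^{T_m})$; this uses the stationarity/ergodicity of $S^m$, the i.i.d.\ duplication process, and the memoryless action of $W$, together with a Shannon–McMillan–Breiman-type result applied to the compound process $(S^m, K^m, Y^{T_m})$. Subadditivity (via stationarity of the joint source-channel process) then yields existence of the $\liminf = \limsup$ limit in $m$. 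The operational capacity of an information-stable channel coincides with this limit, and taking the supremum over admissible transition kernels $P_{S|S^-}$ delivers the claimed expression.

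The main obstacle, which is essentially discharged by \cite{mcbaininfo2}, is handling the \emph{random} output length $T_m$: conventional coding-theorem arguments for memoryless or fixed-length channels do not apply, and one has to work with a block of inputs whose induced output block size is a random stopping time. Once information stability is established for this random-length setting, the corollary form stated in Theorem \ref{thm:capacity} follows mechanically. I would therefore present the proof as a short verification that the NNC is a special case, deferring the information-stability analysis to the cited theorem.
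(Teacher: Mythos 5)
Your proposal matches the paper's own treatment: the paper does not give an independent proof but states Theorem~\ref{thm:capacity} as a corollary of \cite[Thm.~4]{mcbaininfo2}, obtained by specializing that general information-stability result on noisy duplication channels with stationary-ergodic Markov inputs to the NNC setup. Your additional discussion of verifying the hypotheses (finite de Bruijn state space, i.i.d.\ duplications with $K_i\geq 1$, memoryless $W$, random output length $T_m$) is a reasonable elaboration of what that specialization entails, and is consistent with the paper's intent.
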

\begin{remark}
	Note that for any fixed alphabet $\mathcal{X}$ and $\tau = o(m)$, for any kernel $P_{S|S^-}$, we have that $\lim_{m\to \infty} \frac{1}{m}I(S^m;Y^{T_m}) = \lim_{m\to \infty} \frac{1}{m}I(B^{m+\tau};Y^{T_m})$, where $B^{m+\tau}$ represents the sequence of bases corresponding to $S^m$.
\end{remark}
In what follows, we use the terms ``ergodic-capacity" and ``capacity" interchangeably. Observe that the capacity of a nanopore channel is a multi-letter mutual information expression. Our objective in this work is to derive \emph{explicit, computable} estimates of this expression (via bounds) and novel results pertaining to the ``denoising" of such a channel in parameter regimes that are of immediate practical relevance, with the aid of practical encoding and decoding schemes.
\subsection{Organization of the Paper}
In Section \ref{sec:noiseless}, we first state a (tight) lower bound on the ergodic-capacity of the \emph{noiseless} nanopore channel, which follows from results on rates achieved over duplication channels. Next, in Section \ref{sec:up}, we establish general lower and upper bounds for the setting with noise, via direct manipulations of the mutual information in Theorem \ref{thm:capacity} using information-theoretic inequalities. We then proceed to analyzing the rates achievable over NNCs with erasure noise and large $\tau$-mer lengths in Section \ref{sec:erasure}; specifically, we construct a sequence of $\tau$-mer lengths (which depend on the input lengths) that lead to a ``denoising" of the NNC. In Section \ref{sec:change-point}, we take up the study of general NNCs with high sampling rates (or high numbers of duplications) and discuss a change-point detection-based decoder, which again helps in ``denoising" of the NNC.
\section{Capacity of the Noiseless Nanopore Channel}
\label{sec:noiseless}
In this section, we consider the noiseless nanopore channel $\overline{W}_{\text{nn}}$, which is a special case of $W_{\text{nn}}$ where the DMC $W$ is a ``clean" channel, i.e.,
\begin{equation}
W(y\mid z) = \mathds{1}\{y = z\},
\end{equation}
for all $z\in \mathcal{X}$ and $y\in \mathcal{Y}$. For this special case, we shall derive a single-letter lower bound on the mutual information $I(S^m;Y^{T_m}) = I(S^m;Z^{T_m})$, which will then directly give rise to a lower bound for the ergodic-capacity. 

A proof of the capacity lower bound that we derive can also be obtained after some manipulation from the work in \cite[Thm. 1]{kirsch}, which employs sophisticated tools, but we present a self-contained and simple exposition, which may be of independent interest. Furthermore, while the arguments here only show that the expression we derive is a lower bound on the capacity, the arguments in \cite[Thm. 1]{kirsch} show that this lower bound is in fact the exact expression for $C(\overline{W}_\text{nn})$. We mention however that identifying the exact capacity of $W_\text{nn}$ when $W$ is noisy is a significantly harder problem, primarily because of loss of information about runs of symbols in the input sequence.

%\nir{Is it very difficult to show tightness in the NNC case too ? If so, then it might be interesting to add a few words on this. Also, doesn't it make more sense to start with the statement of Theorem, and then explain how it is proved?} \arvind{done}

To this end, we next introduce some further notation. For any symbol (or base) $b\in \mathcal{X}$, we let $\nu_b(S^m)$ denote the number of runs in $S^m$ corresponding to the $\tau$-mer $^\tau b:= (b,b,\ldots,b)$, and let $\nu(S^m)$ denote the vector $(\nu_b(S^m):\ b\in \mathcal{X})$. For notational ease, we let $G_b\sim P_{G_b}$ denote the common random variable representing the runlengths $\rho^{({^\tau b})}_j:= \left(\rho^{({^\tau b})}(S^m)\right)_j$, $1\leq j\leq \nu_b(S^m)$, corresponding to the $\tau$-mer $^\tau b$, i.e., let $\rho^{({^\tau b})}_j\stackrel{\text{i.i.d.}}{\sim} P_{G_b}$, for all $1\leq j\leq \nu_b(S^m)$.\footnote{The fact that the $\rho^{({^\tau b})}_j$ random variables are i.i.d. follows from the first-order Markov structure of $S^m$.} $G_b$ is a geometric random variable with mean defined to be $1/p_b$, where  $p_b$ is the probability (under $P_{S|S^-}$) of ``leaving" the state (or $\tau$-mer) $^\tau b$ in the corresponding Markov chain. Further, for a fixed (de Bruijn Markov) kernel $P_{S|S^-}$, let $\pi$ denote the stationary distribution of the corresponding Markov chain and let $H(\mathscr{S})$ denote its entropy rate.

Our main result in this section is encapsulated in the following theorem:
\begin{theorem}
	\label{thm:noiseless}
	The ergodic-capacity of the noiseless nanopore channel $\overline{W}_\text{ \normalfont nn}$ is lower bounded as
	\begin{align*}
		C(\overline{W}_\text{\normalfont nn})\geq \max_{P_{S|S^-}} \left[H(\mathscr{S}) - \sum_{b\in \mathcal{X}} \pi(^\tau b) H\bigg(G_b\ \bigg \vert \sum_{j=1}^{G_b} K_j\bigg)\cdot p_b\right].
	\end{align*}
\end{theorem}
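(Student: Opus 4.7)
The plan is to evaluate $\frac{1}{m} I(S^m; Z^{T_m})$ directly for an arbitrary stationary and ergodic de Bruijn kernel $P_{S|S^-}$ and then take the supremum. Writing
\[
I(S^m; Z^{T_m}) = H(S^m) - H(S^m \mid Z^{T_m}),
\]
the first term contributes $H(\mathscr{S})$ in the limit by the standard convergence of per-symbol entropy to the entropy rate of a stationary ergodic Markov chain. The task therefore reduces to showing that $\frac{1}{m} H(S^m \mid Z^{T_m})$ converges to $\sum_{b} \pi({^\tau b})\, p_b\, H\bigl(G_b \,\big|\, \sum_{j=1}^{G_b} K_j\bigr)$.

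The key structural observation is a consequence of the de Bruijn shift constraint: since $S_{i+1}$ is always a left shift of $S_i$ by one symbol, the equality $S_{i+1} = S_i$ holds if and only if $S_i = {^\tau b}$ for some $b \in \mathcal{X}$. Hence the only possible runs of repeated consecutive $\tau$-mers in $S^m$ are of the form $({^\tau b}, \ldots, {^\tau b})$, and every other run boundary in $S^m$ is manifestly visible in $Z^{T_m}$. It follows that the compressed sequence $\iota(Z^{T_m})$ is a deterministic function of $Z^{T_m}$ and equals $\iota(S^m)$; consequently, the only uncertainty about $S^m$ given $Z^{T_m}$ lies in the runlengths $R^{(b)}_1,\ldots,R^{(b)}_{\nu_b(S^m)}$ of the $^\tau b$-runs, for each $b$. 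From $Z^{T_m}$ one also reads off the associated duplicated-block lengths $L^{(b)}_j$, which are precisely the partial sums of $K_i$'s belonging to the $j$-th $^\tau b$-run.

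I would next argue that, conditional on $Z^{T_m}$, the family $\{R^{(b)}_j\}_{b,j}$ is mutually independent, and each pair $(R^{(b)}_j, L^{(b)}_j)$ has the same joint law as $(G_b, \sum_{k=1}^{G_b} K_k)$. This follows because the runlengths in the Markov chain are i.i.d.\ $G_b$-distributed by the first-order Markov property, the $K_i$'s are i.i.d.\ and independent of $S^m$, and different runs involve disjoint sets of indices. Combining this with the chain rule yields
\[
H(S^m \mid Z^{T_m}) \;=\; \sum_{b \in \mathcal{X}} \mathbb{E}\bigl[\nu_b(S^m)\bigr]\cdot H\bigg(G_b \,\bigg\vert\, \sum_{j=1}^{G_b} K_j\bigg).
\]

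To finish, I would invoke the ergodic theorem for the stationary ergodic chain $(S_i)$: a run of $^\tau b$ is initiated at time $i$ exactly when the chain transitions into state $^\tau b$ at step $i$, and the long-run frequency of such transitions equals $\pi({^\tau b})\, p_b$ (the expected fraction of time in state $^\tau b$, multiplied by the per-step exit probability). Thus $\frac{1}{m}\mathbb{E}[\nu_b(S^m)] \to \pi({^\tau b})\, p_b$, and assembling the pieces gives the bound upon taking the supremum over $P_{S|S^-}$. The main technical obstacle will be a careful justification of the conditional-independence/factorization step above: one must verify that conditioning on the \emph{entire} observation $Z^{T_m}$ (rather than only on the local block-length $L^{(b)}_j$) does not perturb the posterior of each individual $R^{(b)}_j$, which amounts to isolating the mutually independent collections of $K_i$'s and runlengths across runs. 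The boundary runs (the first and last $^\tau b$-runs, which may be truncated) contribute only $O(1)$ corrections that vanish after normalization by $m$.
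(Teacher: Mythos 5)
Your decomposition, structural observation about de Bruijn runs, and the counting $\tfrac{1}{m}\E[\nu_b(S^m)] \to \pi({^\tau b})\,p_b$ all align with the paper's proof (which routes through Lemmas~\ref{lem:noiseless} and~\ref{lem:noiselesshelper}). The divergence, and the gap, is in the factorization claim.

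You propose to show an \emph{equality} $H(S^m\mid Z^{T_m}) = \sum_b \E[\nu_b(S^m)]\cdot H\bigl(G_b \,\bigm|\, \sum_{j=1}^{G_b} K_j\bigr)$ by arguing that, conditional on $Z^{T_m}$, the runlengths $\{R^{(b)}_j\}$ are mutually independent with each posterior depending only on the observed local block length $L^{(b)}_j$. This is not quite true for finite $m$: the decoder knows $m$, and from $Z^{T_m}$ it can read off $\iota(S^m)$ and hence the number $m_1$ of length-one (non-$^\tau b$) runs, giving the hard constraint $\sum_{b,j} R^{(b)}_j = m - m_1$. This single linear constraint couples all the $\tau b$-runlengths, so the posterior is a conditioned product measure, not a product. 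You flag this yourself as ``the main technical obstacle,'' but you do not resolve it; a careful treatment would have to show the constraint perturbs the per-symbol conditional entropy by only $o(1)$ (plausible, since it removes one degree of freedom from $\Theta(m)$ of them, but this needs an argument), or you would need to relocate to a model where $m$ is not observed.

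The cleaner route, and the one the paper takes, is to notice that you do not need equality to prove a \emph{lower} bound on capacity: you only need $H(S^m\mid Z^{T_m}) \leq \sum_b \E[\nu_b(S^m)]\cdot H\bigl(G_b \,\bigm|\, \sum_{j} K_j\bigr)$. That follows immediately from ``conditioning reduces entropy'' once you drop from conditioning on all of $Z^{T_m}$ to conditioning on only the local block length $\bigl(\rho^{({}^\tau b)}(Z^{T_m})\bigr)_j$, since each pair $\bigl(R^{(b)}_j, L^{(b)}_j\bigr)$ is marginally distributed as $\bigl(G_b, \sum_{k=1}^{G_b} K_k\bigr)$. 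Swapping your equality claim for this inequality removes the gap and recovers exactly the paper's Lemma~\ref{lem:noiseless}, after which your limit computations close the argument. (The claim that the bound is in fact tight is true, but the paper defers that to~\cite{kirsch} rather than proving it directly, precisely because of the coupling issue above.)
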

\begin{remark}
	Intuitively, for large $\tau$, one expects that optimizing distribution in Theorem \ref{thm:noiseless} should make the stationary probabilities $\pi(^\tau b)$ of $\tau$-mers of the form $^\tau b$ small for all $b\in \mathcal{X}$. This then implies that the capacity $C(\overline{W}_\text{nn})$ should increase to the entropy rate $H(\mathscr{S})$, as $\tau$ increases. This intuition is formalized in Section \ref{sec:no-loop}.
\end{remark}
We reiterate that the lower bound in Theorem \ref{thm:noiseless} is actually tight, following the proof in \cite[Thm. 1]{kirsch}. As a corollary, we obtain the following analytical lower bound on $C(\overline{W}_\text{nn})$:
\begin{corollary}
	\label{cor:noiseless}
	We have that
	\[
	C(\overline{W}_\text{\normalfont nn})\geq 1 - \left(\frac{|\mathcal{X}|-1}{|\mathcal{X}|^{\tau+1}}\right)\cdot \sum_{b\in \mathcal{X}} H\bigg(G_b\ \bigg \vert \sum_{j=1}^{G_b} K_j\bigg).
	\]
\end{corollary}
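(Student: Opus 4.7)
The plan is to specialize Theorem \ref{thm:noiseless} to a particular symmetric choice of the de Bruijn Markov kernel, namely the \emph{uniform} one, in which given any state $S_i=(B_1,\dots,B_\tau)$, the next base $B_{\tau+1}$ appended to form $S_{i+1}=(B_2,\dots,B_\tau,B_{\tau+1})$ is drawn uniformly at random from $\mathcal{X}$, independently of everything else. Since the right-hand side of Theorem \ref{thm:noiseless} is a maximum over kernels, evaluating the bracketed expression at any specific kernel yields a valid lower bound on $C(\overline{W}_\text{nn})$.

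First, I would verify that under this uniform kernel, the induced Markov chain over $\mathcal{X}^\tau$ is irreducible and aperiodic (both immediate from the de Bruijn graph structure, since any state reaches any other in at most $\tau$ steps, and self-loops exist at every $^\tau b$), and that its stationary distribution is the uniform distribution over the $|\mathcal{X}|^\tau$ possible $\tau$-mers. The latter follows because every state has exactly $|\mathcal{X}|$ incoming transitions each of probability $1/|\mathcal{X}|$, so the transition matrix is doubly stochastic and the uniform law is invariant. In particular, $\pi(^\tau b)=1/|\mathcal{X}|^\tau$ for every $b\in\mathcal{X}$.

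Next, I would compute the remaining two quantities: (i) the entropy rate $H(\mathscr{S})$, which equals $H(S_{i+1}\mid S_i)=H(B_{\tau+1})=\log|\mathcal{X}|$ under the uniform kernel (and which we normalize to $1$ in the units of the corollary, i.e., taking logarithms to base $|\mathcal{X}|$); and (ii) $p_b$, the probability of leaving state $^\tau b$, which equals the probability that the appended base differs from $b$, giving $p_b=(|\mathcal{X}|-1)/|\mathcal{X}|$. Substituting these three quantities into the lower bound in Theorem \ref{thm:noiseless} yields
\[
C(\overline{W}_\text{nn})\;\geq\; 1 \;-\; \sum_{b\in\mathcal{X}} \frac{1}{|\mathcal{X}|^{\tau}}\cdot\frac{|\mathcal{X}|-1}{|\mathcal{X}|}\cdot H\!\left(G_b\;\bigg|\;\sum_{j=1}^{G_b} K_j\right) \;=\; 1 \;-\; \frac{|\mathcal{X}|-1}{|\mathcal{X}|^{\tau+1}}\sum_{b\in\mathcal{X}} H\!\left(G_b\;\bigg|\;\sum_{j=1}^{G_b} K_j\right),
\]
which is exactly the claim of Corollary \ref{cor:noiseless}.

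There is no real obstacle here: the corollary is a mechanical specialization of Theorem \ref{thm:noiseless}, and the only subtlety is recognizing that the symmetry of the uniform kernel makes $\pi(^\tau b)$, $p_b$, and $H(\mathscr{S})$ take their cleanest closed forms simultaneously. A sharper bound could presumably be obtained by tuning the kernel so that the self-loop probability $1-p_b$ is reduced (shrinking $\pi(^\tau b)$ faster than $|\mathcal{X}|^{-\tau}$), at the cost of also reducing $H(\mathscr{S})$; however, for a transparent analytical expression, the uniform choice is the natural one, and any improvement it admits will in any event be captured asymptotically by the analysis in Section \ref{sec:no-loop}.
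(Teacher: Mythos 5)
Your proof is correct and follows essentially the same route as the paper: specialize Theorem~\ref{thm:noiseless} to the uniform de Bruijn Markov kernel, note that $\pi$ is uniform on $\mathcal{X}^\tau$ with $p_b = (|\mathcal{X}|-1)/|\mathcal{X}|$ and $H(\mathscr{S})=1$ (in base-$|\mathcal{X}|$ units), and substitute. You simply make explicit the double-stochasticity and entropy-rate calculations that the paper leaves as ``can be checked.''
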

\begin{proof}
	Consider the case when the probability kernel $P_{S|S^-}$ satisfies ${P}_{S|S^-}(s|s^-) = \frac{1}{|\mathcal{X}|}$, for all admissible ``next" states $s\in \mathcal{X}^\tau$ that are left-shifted versions of the given state $s^-\in \mathcal{X}^\tau$. It can be checked that in this case the stationary distribution $\pi$ is uniform on the state space $\mathcal{X}^\tau$, i.e., $\pi(s) = \frac{1}{|\mathcal{X}|^\tau}$, for all $s\in \mathcal{X}^\tau$, with $p_b = \frac{|\mathcal{X}|-1}{|\mathcal{X}|}$. Plugging in these values into Theorem \ref{thm:noiseless} proves the corollary.
\end{proof}
Before we formally prove Theorem \ref{thm:noiseless}, we discuss some details regarding the computability of the lower bound in the theorem. Clearly, to obtain a computable expression for the capacity lower bound we need to evaluate the conditional entropy term $H(G_b\ \vert \sum_{j=1}^{G_b} K_j)$ for all $b\in \mathcal{X}$, where the random variable $G_b$ is independent from each of the random variables $K_j$ in the conditioning. In what follows, we consider two simple, yet fundamental, duplication channels, and discuss the value of this conditional entropy for those settings. 
%We remark that an analytical computation of the conditional entropy term may be hard for many duplication distributions, and one may then have to resort to  numerical approximations.
\begin{example}[Elementary i.i.d. duplication channel]
	In this setting, each of the (i.i.d.) $K_j$ random variables is of the form $K_j = 1+\text{Ber}(p)$, for some $p\in (0,1)$. Then,
	\begin{equation}
		\sum_{j=1}^{G_b} K_j = G_b+\sum_{j=1}^{G_b} X_j,
	\end{equation}
	where $X_j\sim\text{Ber}(p)$. The summation on the right above corresponds to a ``thinning'' \cite{thinning} of the random variable $G_b$; the thinned random variable is again a geometric random variable $G$ with mean $p/p_b$ (see the discussion after \cite[Example 3]{thinning}), which is independent of $G_b$. The conditional entropy $H(G_b\ \vert \sum_{j=1}^{G_b} K_j)$ can hence be computed as
	\begin{align}
		H\bigg(G_b\ \bigg \vert \sum_{j=1}^{G_b} K_j\bigg) &= H\left(G_b,\sum_{j=1}^{G_b} K_j\right) - H\left(\sum_{j=1}^{G_b} K_j\right) \\
		&= H(G_b)+H\left(\sum_{j=1}^{G_b} K_j\ \bigg \vert\ G_b\right)-H\left(\sum_{j=1}^{G_b} K_j\right) \\
		 &= \frac{h_b(p_b)+h_b(p)}{p_b} - H(G+G_b). \label{eq:entropysum}
	\end{align}
Note that in \eqref{eq:entropysum} above, we have used the fact that $H(\sum_{j=1}^{G_b} K_j \vert G_b) = \E[G_b]\cdot H(K) = \frac{h_b(p)}{p_b}$ and that $H(G_b) = \frac{h_b(p_b)}{p_b}$.
%	\nir{I would add another line of explanation -- you use chain rule for $H(G_b,G+G_b)$ in two different ways. Also, it would be good to mention that $G+G_b$ is a negative binomial distribution with $r=2$. There is a nice discussion in "Expressions for the Entropy of Binomial-Type Distributions", by Cheraghchi. The goal is to just say that this is easily computable}
%	\arvind{$G$ and $G_b$ do not have the same distribution; their sum hence is not a negative binomial \ldots} 

%	\nir{So can we say something else regarding the computability of this term ? } 

	\label{eg:iid}
\end{example}

\begin{example}[Binomial duplication channel]
	The argument in Example \ref{eg:iid} above can be extended to the case when each $K_j = 1+\text{Bin}(n,p)$, for some $n$. Indeed, one can then write $K_j = 1+\sum_{r=1}^n Y_{j,r}$, where the random variables $Y_{j,r}$ are drawn i.i.d. according to Ber$(p)$. We then obtain, similar to \eqref{eq:entropysum}, that in this case,
	\begin{equation}
	H\bigg(G_b\ \bigg \vert \sum_{j=1}^{G_b} K_j\bigg) = H(G_b)+\frac{H(K)}{p_b}-H(N+G_b), \label{eq:binom}
	\end{equation}
	where $K = \text{Bin}(n,p)$ and $N$ is a negative binomial distribution, independent of $G_b$, with parameters $n$ and $p/p_b$.
\end{example}
The conditional entropies calculated in the above examples can be directly plugged into Corollary \ref{cor:noiseless} to obtain analytical lower bounds on $C(\overline{W}_\text{nn})$. The entropies $H(G+G_b)$ and $H(N+G_b)$ in \eqref{eq:entropysum} and \eqref{eq:binom}, respectively, may be computed numerically, since the PMF of $G+G_b$ (resp. $N+G_b$) is obtained by a simple convolution of the PMFs of $G$ and $G_b$ (resp. $N$ and $G_b$). Nonetheless, simple bounds on such entropies of sums of random variables can be obtained via the inequalities: $\max\{H(X),H(Y)\}\leq H(X+Y)\leq H(X)+H(Y)$, when $X$ and $Y$ are independent random variables.\footnote{Sharper estimates of the entropy of sums above perhaps can be derived using the techniques in \cite{tao-sum,madiman1} and references therein. We do not divert our attention to such directions, for reasons of scope.} Furthermore, we conjecture that it is possible to specialize the result in Theorem \ref{thm:noiseless} for other duplication distributions of interest, using perhaps the results in \cite{entropycheraghchi} for entropy computations or approximations. We now proceed towards a proof of Theorem \ref{thm:noiseless}.

Fix a stationary, ergodic, de Bruijn Markov process $S^m$, with transition kernel $P_{S|S^-}$. We then write
\begin{equation}
	\label{eq:temp1}
I(S^m;Z^{T_m}) = H(S^m)-H(S^m\mid Z^{T_m}).
\end{equation}
The first term $H(S^m)$ on the right side of \eqref{eq:temp1} is easily computable to be $H(S_1)+(m-1)H(S_2\mid S_1)$, with the entropy rate $H(\mathscr{S})$ of the stationary Markov chain with kernel $P_{S|S^-}$ being $H(S_2\mid S_1)$. Hence, our task reduces to explicitly bounding the expression $H(S^m\mid Z^{T_m})$ that is the second term on the right side of \eqref{eq:temp1}. 

The next lemma presents an upper bound on $H(S^m\mid Z^{T_m})$; the essential idea behind its proof is that given the random vector $Z^{T_m}$, the only uncertainty in determining $S^m$ is via the lengths of runs of its symbols. Furthermore, the only ambiguity in the runlengths of symbols in $S^m$, given $Z^{T_m}$, is in those runlengths corresponding to symbols of the form ${}^\tau b$, for some $b\in \mathcal{X}$. This is because \emph{only such symbols} can have runlengths larger than $1$ in $S^m$, owing to the structure of the de Bruijn Markov process. 
%\nir{Where is "Second," ? I would also start with a short paragraph of the intuitive idea behind the proof. Basically, the specific structure of the input process is such that the only input symbols that can repeat themselves are the ones with the same symbol. Thus, the only ambiguity is regarding the runs in those symbols. Maybe a simple example or a figure can also be useful} \arvind{addressed}
\begin{lemma}
	\label{lem:noiseless}
	We have that
	\[
	H(S^m\mid Z^{T_m}) \leq \sum_{b\in \mathcal{X}} \E[\nu_b(S^m)]\cdot H\bigg(G_b\ \bigg \vert \sum_{j=1}^{G_b} K_j\bigg).
	\]
\end{lemma}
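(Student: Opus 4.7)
The plan hinges on two structural features of the noise-free setting. First, $Z^{T_m}$ already reveals $\iota(S^m)$: since each $S_i$ is repeated $K_i\ge 1$ times in $Z^{T_m}$, collapsing consecutive equal entries of $Z^{T_m}$ reproduces the sequence of distinct $\tau$-mers of $S^m$ exactly. Hence $\iota(S^m)$ is a deterministic function of $Z^{T_m}$, so
\[
H(S^m\mid Z^{T_m}) \;=\; H(\rho(S^m)\mid Z^{T_m}).
\]

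Second, the de Bruijn constraint — $S_{i+1}$ equals the left-shift of $S_i$ with a new base appended — forces the equality $S_i=S_{i+1}$ to occur only at constant $\tau$-mers ${}^\tau b$. Consequently, every component of $\rho(S^m)$ whose associated $\tau$-mer is \emph{not} of the form ${}^\tau b$ equals $1$ and is already pinned down by $\iota(S^m)$. Only the subvectors $\rho^{({}^\tau b)}(S^m)$, for $b\in\mathcal{X}$, carry residual entropy, and subadditivity of conditional entropy yields
\[
H(\rho(S^m)\mid Z^{T_m}) \;\le\; \sum_{b\in\mathcal{X}} H\!\left(\rho^{({}^\tau b)}(S^m)\,\Big|\, Z^{T_m}\right).
\]

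Fix $b\in\mathcal{X}$. The runs of ${}^\tau b$ inside $Z^{T_m}$ can likewise be read off $Z^{T_m}$; if the $j$-th such run has length $L_{b,j}=\sum_{k=1}^{G_{b,j}}K_{b,j,k}$, where $G_{b,j}$ is the $j$-th runlength of ${}^\tau b$ in $S^m$ and the $K_{b,j,k}$ are the corresponding i.i.d.\ duplication counts, then $\nu_b(S^m)$ and $(L_{b,j})_{j\le \nu_b(S^m)}$ are all functions of $Z^{T_m}$. Dropping all other information in $Z^{T_m}$, conditioning on $\nu_b(S^m)=n$, and applying the chain rule gives
\[
H\!\left(\rho^{({}^\tau b)}(S^m)\mid Z^{T_m}\right) \;\le\; \sum_{n\ge 0} \P(\nu_b(S^m)=n)\,\sum_{j=1}^{n} H\!\left(G_{b,j}\mid L_{b,j}\right).
\]

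It remains to identify each inner entropy with $H\!\left(G_b\mid \sum_{k=1}^{G_b}K_k\right)$ and to recognize $\sum_{n\ge 0} n\,\P(\nu_b(S^m)=n)=\E[\nu_b(S^m)]$. I expect the main obstacle to be precisely this first point: it rests on the strong Markov property of $S^m$ at the successive returns to state ${}^\tau b$ — producing i.i.d.\ geometric dwell times $G_b$ — together with the independence of the duplication process $(K_i)$ from the source, so that the pairs $(G_{b,j},L_{b,j})$ can legitimately be treated as i.i.d.\ copies of $(G_b,\sum_{k=1}^{G_b}K_k)$ even after conditioning on $\nu_b(S^m)=n$. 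Once this conditional-independence bookkeeping is in place, summing the resulting $\E[\nu_b(S^m)]\cdot H\!\left(G_b\mid \sum_{k=1}^{G_b}K_k\right)$ over $b\in\mathcal{X}$ delivers the stated bound.
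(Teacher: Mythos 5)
Your proposal follows essentially the same path as the paper's own proof: (i) observe that collapsing runs in $Z^{T_m}$ recovers $\iota(S^m)$, reducing to $H(\rho(S^m)\mid Z^{T_m})$; (ii) note that the de Bruijn constraint confines residual uncertainty to the runlengths of the constant $\tau$-mers ${}^\tau b$; (iii) split by $b$, condition on $\nu_b$, apply the chain rule and condition-reducing inequalities to isolate single pairs $(\rho^{({}^\tau b)}_j, (\rho^{({}^\tau b)}(Z^{T_m}))_j)$; and (iv) identify each resulting term with $H(G_b\mid\sum_{j=1}^{G_b}K_j)$ and tally $\E[\nu_b(S^m)]$. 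The only cosmetic difference is that you replace the paper's claimed exact decomposition $H(\rho^{\text{alike}}\mid\cdot)=\sum_b\sum_j H(\rho^{({}^\tau b)}_j\mid\cdot)$ (justified there by the i.i.d.-geometric footnote) with a subadditivity inequality, which suffices since one only needs an upper bound. You also explicitly flag the point on which the whole argument hinges -- that after conditioning on $\nu_b(S^m)=n$, the pair $(G_{b,j},L_{b,j})$ is still distributed as an unconditioned copy of $(G_b,\sum_{k\le G_b}K_k)$ -- which the paper handles by the same appeal to the strong Markov property and i.i.d. duplications (relegated to a footnote), so this is a shared reliance rather than a new gap you have introduced.
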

\begin{proof}
Observe that
\begin{align}
	H(S^m\mid Z^{T_m})
	&= H(\iota(S^m),\rho(S^m)\mid Z^{T_m}) \\
	&= H(\rho(S^m)\mid Z^{T_m},\iota(S^m)) \label{eq:temp2}\\
	&= H(\rho(S^m)\mid Z^{T_m},\iota(S^m),\nu(S^m)), \label{eq:noiseless2}
\end{align}
where \eqref{eq:temp2} holds since given $Z^{T_m}$, the vector $\iota(S^m)$ is completely determined.

Now, by the structure of the de Bruijn Markov input process $S^m$, the only runs of length larger than $1$ are those that begin with the symbol $^\tau b$, for some $b\in \mathcal{X}$. 

Therefore, given the vector $\iota(S^m)$, the only uncertainty in the vector $\rho(S^m)$ is in the collection \begin{equation}\rho^\text{alike}(S^m) = \left\{\left(\rho^{({^\tau b})}_1,\ldots,\rho^{({^\tau b})}_{\nu_b(S^m)}\right):\ b\in \mathcal{X}\right\}.\end{equation} Hence, continuing from \eqref{eq:noiseless2}, we obtain that
\begin{equation}
	\label{eq:noiseless3}
	H(S^m\mid Z^{T_m}) = H(\rho^\text{alike}(S^m)\mid Z^{T_m},\iota(S^m),\nu(S^m)).
\end{equation}
Now, owing to the Markovity of the process $S^m$, the runlengths in $S^m$ are independent geometric random variables. Hence, from \eqref{eq:noiseless3}, we obtain that
\begin{align}
	H(S^m\mid Z^{T_m})&= H(\rho^\text{alike}(S^m)\mid Z^{T_m},\iota(S^m),\nu(S^m))\\
	&= \sum_{b\in \mathcal{X}} \Pr[\nu_b(S^m) = n_b]\sum_{j=1}^{n_b} H\left(\rho^{({^\tau b})}_j\ \big \vert\ Z^{T_m}, T_m,\iota(S^m)\right)\\
	&\leq \sum_{b\in \mathcal{X}} \Pr[\nu_b(S^m) = n_b]\sum_{j=1}^{n_b} H\left(\rho^{({^\tau b})}_j\ \bigg \vert\ \left(\rho^{(^\tau b)}(Z^{T_m})\right)_j\right)\\
	&= \sum_{b\in \mathcal{X}} \E[\nu_b(S^m)]\cdot H\bigg(G_b\ \bigg \vert \sum_{j=1}^{G_b} K_j\bigg).
\end{align}
Here, the inequality follows since conditioning reduces entropy, and the last equality arises since each of the terms $H(\rho^{({^\tau b})}_j\mid \left(\rho^{(^\tau b)}(Z^{T_m})\right)_j)$ in (c) above equals $H(G_b \vert \sum_{j=1}^{G_b} K_j)$.
\end{proof}
%We reiterate that the upper bound in Lemma \ref{lem:noiseless} is actually tight, following the results in \cite{kirsch}. 
The lemma below presents a computable expression for the quantity $\E[\nu_b(S^m)]$, for any $b\in \mathcal{X}$. Let $(^\tau b)^+$ denote the collection of $\tau$-mers of the form $(b,b,\ldots,b,a_1)$, where $a_1\neq b$, and let $(^\tau b)^-$ denote the collection of $\tau$-mers of the form $(a_2,b,\ldots,b,b)$, where $a_2\neq b$. 
\begin{lemma}
	\label{lem:noiselesshelper}
	For any $b\in \mathcal{X}$, we have
	\begin{align*}
	\E[\nu_b(S^m)]&= (m-1)\pi(^\tau b)\cdot \sum_{s\in (^\tau b)^+} P(s|^\tau b)+ \sum_{s'\in (^\tau b)^-}\pi(s')P(^\tau b|s')\\
	&= (m-1)\pi(^\tau b)\cdot p_b+\sum_{s'\in (^\tau b)^-}\pi(s')P(^\tau b|s').
	\end{align*}
\end{lemma}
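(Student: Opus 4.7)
The approach is a direct counting argument based on indicator functions. Each maximal run of the $\tau$-mer $^\tau b$ in $S^m$ is uniquely identified either by its first or its last position. Writing the count via the latter, I will use the decomposition
\[
\nu_b(S^m) \;=\; \sum_{i=1}^{m-1} \I\{S_i = {}^\tau b,\ S_{i+1} \neq {}^\tau b\} \;+\; \I\{S_m = {}^\tau b\},
\]
where the sum counts runs of ${}^\tau b$ ending with an ``exit'' transition and the boundary indicator handles a possible run extending to the terminal position. Dually, one can count by starts, $\nu_b(S^m) = \I\{S_1 = {}^\tau b\} + \sum_{i=2}^m \I\{S_{i-1} \neq {}^\tau b,\ S_i = {}^\tau b\}$, which will supply the ``entry'' form of the boundary term.

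Taking expectations under the stationary de Bruijn Markov law and invoking stationarity, each summand in the exit decomposition equals $\pi({}^\tau b)\sum_{s \neq {}^\tau b} P(s \mid {}^\tau b)$, and each summand in the entry decomposition equals $\sum_{s' \neq {}^\tau b} \pi(s')\,P({}^\tau b \mid s')$. The structural input is the de Bruijn constraint: any one-step successor of ${}^\tau b$ has the form $(b,\ldots,b,a_1)$ and lies in $({}^\tau b)^+$ iff $a_1 \neq b$, while any one-step predecessor has the form $(a_2,b,\ldots,b)$ and lies in $({}^\tau b)^-$ iff $a_2 \neq b$. Consequently the exit sum collapses to $(m-1)\,\pi({}^\tau b) \sum_{s \in ({}^\tau b)^+} P(s \mid {}^\tau b)$, and the entry contribution at a single time step collapses to $\sum_{s' \in ({}^\tau b)^-} \pi(s')\,P({}^\tau b \mid s')$, together yielding the first displayed form of the lemma.

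The equivalence of the two displayed forms then follows from the definition $p_b = \sum_{s \in ({}^\tau b)^+} P(s \mid {}^\tau b)$ together with the stationarity identity $\pi({}^\tau b) = \sum_{s'} \pi(s')\,P({}^\tau b \mid s')$: isolating the self-loop contribution $\pi({}^\tau b)(1-p_b)$ on the right-hand side produces $\sum_{s' \in ({}^\tau b)^-} \pi(s')\,P({}^\tau b \mid s') = \pi({}^\tau b)\,p_b$. No step is technically difficult; the only care needed is in correctly reading off from the de Bruijn structure which $\tau$-mers can enter or exit the state $^\tau b$ in a single step, and in handling the boundary contribution so that the expectation matches the two equivalent formulations stated in the lemma.
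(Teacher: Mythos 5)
Your counting idea and your use of stationarity are the same as in the paper, but there is an internal inconsistency in how you pass from your stated decomposition to the lemma's formula. The boundary term in your exit decomposition is $\I\{S_m = {}^\tau b\}$, whose expectation is $\pi({}^\tau b)$; and the boundary term in your entry decomposition is $\I\{S_1 = {}^\tau b\}$, whose expectation is also $\pi({}^\tau b)$. Neither has expectation $\sum_{s'\in ({}^\tau b)^-}\pi(s')P({}^\tau b|s')=\pi({}^\tau b)\,p_b$, yet that is what you report as ``the entry contribution at a single time step.'' That quantity is the expectation of $\I\{S_{m-1}\neq{}^\tau b,\,S_m = {}^\tau b\}$, a \emph{different} indicator; effectively you have taken the $m-1$ summands from the exit decomposition and appended one summand from the entry decomposition, but that combination is not a decomposition of $\nu_b(S^m)$.

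In fairness to you, the slip makes your ``derivation'' land exactly on the lemma's stated formula, because the paper's own proof also uses a flawed decomposition, namely $\nu_b(S^m)=\sum_{i=1}^{m-1}\I\{S_i={}^\tau b, S_{i+1}\neq{}^\tau b\}+\I\{S_{m-1}\neq{}^\tau b, S_m={}^\tau b\}$; this undercounts whenever a run of ${}^\tau b$ begins strictly before position $m$ and extends to $m$ (e.g.\ $S^3=(s_1,{}^\tau b,{}^\tau b)$ with $s_1\neq{}^\tau b$, where all three indicators vanish). Both of your \emph{written} decompositions are correct, and either gives $\E[\nu_b(S^m)] = (m-1)\pi({}^\tau b)p_b + \pi({}^\tau b)$, whereas the lemma as stated equals $(m-1)\pi({}^\tau b)p_b + \pi({}^\tau b)p_b = m\,\pi({}^\tau b)p_b$. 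The gap is $\pi({}^\tau b)(1-p_b)=\pi({}^\tau b)P({}^\tau b|{}^\tau b)$, an $O(1)$ term that disappears after the $1/m$ normalization in Theorem~\ref{thm:noiseless}, and it vanishes identically for no-self-loop chains (where $p_b=1$). So the discrepancy has no downstream consequence, but if you wish to make the argument airtight you should keep your boundary term $\I\{S_m={}^\tau b\}$ and report its expectation honestly as $\pi({}^\tau b)$, and note the resulting bounded discrepancy with the displayed formula.
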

\begin{proof}
	The proof follows from the observation that
	\begin{align}
		\nu_b(S^m)&= \sum_{i=1}^{m-1}\mathds{1}\{S_i ={\!} ^\tau b, S_{i+1}\neq{\!} ^\tau b\}+ \mathds{1}\{S_{m-1} \neq{\!}^\tau b, S_m ={\!} ^\tau b\}.
	\end{align}
Employing the linearity of expectation and the structure of the de Bruijn Markov input process $S^m$ gives the statement of the lemma.
\end{proof}
The proof of Theorem \ref{thm:noiseless} is now immediate.
\begin{proof}[Proof of Theorem \ref{thm:noiseless}]
	The proof follows by putting together Lemmas \ref{lem:noiseless} and \ref{lem:noiselesshelper} and then taking a maximum over de Bruijn Markov input processes governed by kernels $P_{S|S^-}$ as in Theorem \ref{thm:capacity}.
\end{proof}

In the next section, we discuss approaches for obtaining bounds on the capacity of the noisy nanopore channel $W_\text{nn}$, i.e., when the DMC $W$ is not clean.
\section{General Bounds on the Capacity of the Noisy Nanopore Channel}
%\label{sec:lower}
\label{sec:up}
In this section, we present general, computable, lower and upper bounds on $C(W_\text{nn})$, when $W$ is an arbitrary, noisy channel. We first discuss a \emph{lower} bound on the capacity. The intuition behind the bound is that if the duplication process $K^m$ (and hence the ``boundaries" corresponding to each input symbol $S_i$ in the output sequence $Y^{T_m}$) were known, the output symbols corresponding to a fixed input $\tau$-mer can be treated as ``views" through a multi-view channel (see, e.g., \cite{mitzenmacher_datarecovery,landhuber,arnw-t-it}). Before we state our result, we recall some definitions. 

%The Chernoff information $\mathsf{C}(P,Q)$ between two distributions $P$ and $Q$ defined on the same alphabet $\mathcal{X}$ is given by (see \cite[Ch. 11]{cover_thomas}) 
%\begin{equation}\mathsf{C}(P,Q) := -\min_{\lambda\in [0,1]}\ln \left(\sum_{x\in \mathcal{X}} P(x)^{1-\lambda}Q(x)^\lambda\right). \label{eq:chernoff}\end{equation}
%For the given DMC $W$, we define the quantity
%\begin{equation}
%	\label{eq:rho}
%	\rho(W) = \min_{z,z': z\neq z'} \mathsf{C}(W_{Y|z},W_{Y|z'}).
%\end{equation}
%It is well-known that for binary-input memoryless symmetric (BIMS) channels $W$ (see \cite[Ch. 4]{mct} for a definition), we have that $\rho(W) = -\ln Z_b(W)$, where $Z_b(W)$ is the Bhattacharya parameter of the BIMS channel $W$. 

% (see )\begin{equation}
%	\mathsf{H}(X|Y^d) \leq Z_g(W)^d,
%\end{equation}
%can be shown to hold , where
Fix a de Bruijn Markov input process $S^m$, with transition kernel $P_{S|S^-}$, stationary distribution $\pi$, and entropy rate denoted by $H(\mathscr{S})$. It can be checked that the process $Z^{T_m}$ is also stationary, with $\Pr[Z_1 = z] = \pi(z)$. Now, the (scaled) Bhattacharya parameter of the channel $W$ (for the given  kernel $P_{S|S^-}$)  is given by (see, e.g., \cite[Sec. 4.1.2]{sasoglufnt}))
\begin{equation}Z_g(W) := \frac{1}{|\mathcal{X}|^\tau-1}\cdot \sum_{z\neq z'} \sum_{y\in \mathcal{Y}} \sqrt{\pi(z)W(y|z)\pi(z')W(y|z')}.\end{equation}
Now, for any integer $k\geq 1$, let $W^{\otimes k}$ denote the $k$-view DMC $W$, with input alphabet $\mathcal{X}$, output alphabet $\mathcal{Y}^k$, and channel law
\begin{equation}
	W^{\otimes k}(y^k\mid z) = \prod_{i=1}^k W(y_i\mid z).
\end{equation}

\begin{theorem}
	\label{thm:lb}
	We have that
	\[
	C(W_\text{\normalfont nn})\geq \max_{P_{S|S^-}}\left(H(\mathscr{S})- H(K)-\E_K\left[Z_g(W^{\otimes K})\right]\right).
	%\left(1-H(K)-\E\left[\rho(W^{\otimes K})\right]\right).
	\]
\end{theorem}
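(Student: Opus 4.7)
The plan is to start from Theorem \ref{thm:capacity} and lower bound $\frac{1}{m} I(S^m; Y^{T_m})$ by revealing the duplication pattern $K^m$ to the decoder as side information. The key observation is that given $K^m$, the channel decouples into $m$ independent ``multi-view'' sub-channels: the $i$th input $\tau$-mer $S_i$ produces a block of $K_i$ outputs $Y^{(i)} := (Y_{T_{i-1}+1}, \ldots, Y_{T_i})$ via $W^{\otimes K_i}$. Since $S^m$ and $K^m$ are independent and $K^m$ is i.i.d.~$\sim P_K$, a standard manipulation yields
\begin{align*}
I(S^m; Y^{T_m}) &= I(S^m; Y^{T_m}, K^m) - I(S^m; K^m \mid Y^{T_m}) \\
&\geq I(S^m; Y^{T_m} \mid K^m) + I(S^m; K^m) - H(K^m \mid Y^{T_m}) \\
&\geq H(S^m) - H(S^m \mid Y^{T_m}, K^m) - m H(K),
\end{align*}
where the last line uses $I(S^m; K^m) = 0$, $H(K^m \mid Y^{T_m}) \leq H(K^m) = m H(K)$, and $H(S^m \mid K^m) = H(S^m)$. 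Normalizing by $m$, the first term tends to $H(\mathscr{S})$ by stationarity of the de Bruijn Markov process.

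The remaining work is to upper bound $\frac{1}{m} H(S^m \mid Y^{T_m}, K^m)$ by $\E_K[Z_g(W^{\otimes K})]$. To that end, apply subadditivity of conditional entropy and ``conditioning reduces entropy'':
\[
H(S^m \mid Y^{T_m}, K^m) \leq \sum_{i=1}^m H(S_i \mid Y^{T_m}, K^m) \leq \sum_{i=1}^m H(S_i \mid Y^{(i)}, K_i),
\]
where the second inequality follows because $(Y^{(i)}, K_i)$ is a deterministic function of $(Y^{T_m}, K^m)$ (the block boundaries $T_{i-1}, T_i$ are determined by $K^m$). Conditional on $K_i = k$, the marginal of $S_i$ is the stationary distribution $\pi$ of the Markov chain and $Y^{(i)}$ is the output of $W^{\otimes k}$ on $S_i$. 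Invoke the standard Bhattacharyya-based bound on equivocation (e.g., \cite[Sec. 4.1.2]{sasoglufnt}) to obtain $H(S_i \mid Y^{(i)}, K_i = k) \leq Z_g(W^{\otimes k})$; averaging over $K_i \sim P_K$ and summing gives $H(S^m \mid Y^{T_m}, K^m) \leq m \cdot \E_K[Z_g(W^{\otimes K})]$.

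Putting the pieces together, dividing by $m$, letting $m \to \infty$, and finally taking the supremum over $P_{S \mid S^-}$ via Theorem \ref{thm:capacity} yields the claimed bound. The main obstacle will be verifying that the precise form of the Bhattacharyya--equivocation inequality $H(Z \mid Y) \leq Z_g(W)$ applies to the $\pi$-weighted definition of $Z_g$ used in the paper (rather than to the uniform-input definition that is most common in polarization treatments), and to the multi-view channel $W^{\otimes k}$; once this inequality is confirmed and the appropriate logarithmic base/normalization is fixed, the decoupling argument above is essentially routine.
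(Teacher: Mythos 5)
Your proposal matches the paper's proof essentially step for step: the same decomposition of $I(S^m;Y^{T_m},K^m)$ two ways using the independence of $K^m$ from $S^m$, the same bound $I(S^m;K^m\mid Y^{T_m})\leq mH(K)$, the same subadditivity and conditioning argument reducing $H(S^m\mid Y^{T_m},K^m)$ to $\sum_i H(S_i\mid Y^{(i)},K_i)$, and the same invocation of the Bhattacharyya--equivocation bound from \cite[Prop.~4.8]{sasoglufnt}, which is indeed the $\pi$-weighted version you flag as the thing to verify. This is the paper's proof.
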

\begin{proof}
Fix the de Bruijn Markov input process $S^m$ with transition kernel ${P}_{S|S^-}$. We first write $I(S^m;Y^{T_m},K^m)$ in two ways:
\begin{equation}
	I(S^m;Y^{T_m},K^m) = I(S^m;Y^{T_m})+I(S^m;K^m\mid Y^{T_m}), \label{eq:help1}
\end{equation}
and
\begin{align}
	I(S^m;Y^{T_m},K^m) &= I(S^m;K^m)+I(S^m;Y^{T_m}\mid K^m)\\
	&=I(S^m;Y^{T_m}\mid K^m), \label{eq:help2}
\end{align}
since $K^m$ is independent of $S^m$. Putting together \eqref{eq:help1} and \eqref{eq:help2}, we obtain that
\begin{align}
	I(S^m;Y^{T_m})&= I(S^m;Y^{T_m}\mid K^m)-I(S^m;K^m\mid Y^{T_m})\\
	&\geq H(S^m)-H(S^m\mid Y^{T_m},K^m)-mH(K), \label{eq:help3}
\end{align}
where the inequality uses the fact that $I(S^m;K^m\mid Y^{T_m})\leq H(K^m) = mH(K)$.  
observe that
\begin{align}
	H(S^m\mid Y^{T_m},K^m)&\stackrel{(a)}{\leq} \sum_{i=1}^m H(S_i\mid Y^{T_m},K^m)\\
	&\stackrel{(b)}{\leq} \sum_{i=1}^m H(S_i\mid Y_{T_{i-1}},\ldots,Y_{T_i},K_i)\\
	&\stackrel{(c)}{\leq} \sum_{i=1}^m \E_K\left[Z_g(W^{\otimes K})\right] = m\cdot\E_K\left[Z_g(W^{\otimes K})\right]. \label{eq:help4}
\end{align}
Here, inequalities (a) and (b) hold since removing the conditioning on some random variables cannot decrease entropy, and (c) holds via \cite[Prop. 4.8]{sasoglufnt}. Hence, we obtain from \eqref{eq:help3} and \eqref{eq:help4} that
\begin{equation}
	\frac{1}{m}I(S^m;Y^{T_m})\geq H(\mathscr{S})- H(K)-\E_K\left[Z_g(W^{\otimes K})\right],
\end{equation}
where $H(\mathscr{S})$ is the entropy rate of the Markov chain with kernel ${P}_{S|S^-}$. The theorem then follows from Theorem \ref{thm:capacity}.
\end{proof}
\begin{remark}
	Theorem \ref{thm:lb} seems to indicate that the uncertainty in the channel $W_\text{nn}$ is primarily due to those in the ``boundaries" of each run of output symbols that arise from the same input symbol (captured by the entropy $H(K)$), and the uncertainty in the estimation of the input symbol from any run of output symbols it gives rise to (captured by the term $\E\left[Z_g(W^{\otimes K})\right]$). In Section \ref{sec:change-point}, we shall discuss a decoding algorithm that makes use of this intuition to ``denoise" the nanopore channel in the regime of large numbers of duplications (or high sampling rates).
\end{remark}
Now, consider the situation when the stationary distribution $\pi$ is uniform on the state space $\mathcal{X}^\tau$ -- this is achieved, for example, by transition probabilities $\overline{P}_{S|S^-}(s|s^-) = \frac{1}{|\mathcal{X}|}$, for all admissible ``next" states $s\in \mathcal{X}^\tau$ for a given state $s^-\in \mathcal{X}^\tau$. For such a setting, the Bhattacharya parameter $Z_g(W)$ evaluates to \begin{equation}\rho(W) := \sum_{z\neq z'}\sum_{y\in \mathcal{Y}} \sqrt{W(y|z)W(y|z')}. 
\end{equation}
This gives rise to the following corollary, obtained by fixing the input process to be $\overline{P}_{S|S^-}$:
\begin{corollary}
	\label{cor:lb}
	We have that
	\[
	C(W_\text{\normalfont nn})\geq \left(1-H(K)-\E\left[\rho(W^{\otimes K})\right]\right).
	\]
\end{corollary}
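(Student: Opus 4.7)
The plan is to specialize the general lower bound of Theorem \ref{thm:lb} to the concrete de Bruijn kernel $\overline{P}_{S|S^-}$ introduced in the paragraph immediately preceding the corollary. Since the supremum over kernels in Theorem \ref{thm:lb} is bounded below by the value achieved at any particular admissible kernel, it suffices to evaluate the three quantities $H(\mathscr{S})$, $H(K)$, and $\E_K[Z_g(W^{\otimes K})]$ at this choice and show that the first collapses to $1$ and the third collapses to $\E_K[\rho(W^{\otimes K})]$.

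First, I would verify that $\overline{P}_{S|S^-}$ induces a stationary, ergodic de Bruijn Markov process on $\mathcal{X}^\tau$ whose stationary distribution is uniform. Every state $s^- = (b_1,\ldots,b_\tau) \in \mathcal{X}^\tau$ has exactly $|\mathcal{X}|$ admissible successors, one per choice of appended base, each assigned mass $1/|\mathcal{X}|$. By symmetry every state also has exactly $|\mathcal{X}|$ admissible predecessors, and hence the transition matrix on the de Bruijn graph is doubly stochastic. Irreducibility and aperiodicity on the de Bruijn graph are standard. Consequently, the unique stationary distribution satisfies $\pi(s) = |\mathcal{X}|^{-\tau}$ for every $s \in \mathcal{X}^\tau$.

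Second, I would compute $H(\mathscr{S})$ and simplify $Z_g(W^{\otimes k})$. The entropy rate equals $H(S_2 \mid S_1)$, and because each state has exactly $|\mathcal{X}|$ equally likely successors, this equals $\log|\mathcal{X}|$; adopting the convention implicit in the statement of the corollary (logarithms in base $|\mathcal{X}|$, consistent with expressing rate as a normalized quantity), this is $1$. For the Bhattacharya term, I would substitute $\pi(z) = \pi(z') = |\mathcal{X}|^{-\tau}$ into the definition of $Z_g(W^{\otimes k})$, pull $\sqrt{\pi(z)\pi(z')}$ out of the double sum, and observe that the remaining expression matches the definition of $\rho(W^{\otimes k})$ given in the lead-up to the corollary.

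Finally, assembling these into Theorem \ref{thm:lb} yields
\[
C(W_\text{nn}) \;\geq\; H(\mathscr{S}) - H(K) - \E_K\!\left[Z_g(W^{\otimes K})\right] \;=\; 1 - H(K) - \E_K\!\left[\rho(W^{\otimes K})\right],
\]
which is the claimed bound. There is no real obstacle here; the only point requiring a little care is to trace through the normalization constants in the definition of $Z_g$ so that, under uniform $\pi$, the expression simplifies precisely to $\rho(W^{\otimes k})$ rather than to a rescaled variant. Everything else is a direct substitution into an already-proved bound.
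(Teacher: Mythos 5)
Your approach is exactly the paper's: the corollary is obtained purely by plugging the uniform de Bruijn kernel $\overline{P}_{S|S^-}$ into Theorem~\ref{thm:lb}, noting that the stationary distribution is uniform on $\mathcal{X}^\tau$, that $H(\mathscr{S})=1$ in normalized units, and that the (averaged) Bhattacharyya term then reduces to the unweighted pairwise sum $\rho$.

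However, the one step you flag as ``requiring a little care'' you do not actually carry out, and carried out literally against the paper's stated definitions it does \emph{not} close cleanly. Writing $q:=|\mathcal{X}|^\tau$ and substituting $\pi(z)=\pi(z')=q^{-1}$ into
\[
Z_g(W^{\otimes k}) \;=\; \frac{1}{q-1}\sum_{z\neq z'}\sum_{y}\sqrt{\pi(z)\,W^{\otimes k}(y\mid z)\,\pi(z')\,W^{\otimes k}(y\mid z')}
\]
gives $Z_g(W^{\otimes k})=\frac{1}{q(q-1)}\rho(W^{\otimes k})$, not $\rho(W^{\otimes k})$. So the literal specialization of Theorem~\ref{thm:lb} to $\overline{P}_{S|S^-}$ reads $C(W_\text{nn})\geq 1-H(K)-\E\bigl[\rho(W^{\otimes K})\bigr]/\bigl(q(q-1)\bigr)$, which is in fact \emph{stronger} than the corollary's claim (since $\rho\geq 0$ and $q(q-1)\geq 1$), so the corollary as stated remains true, but your justification asserts an identity ($Z_g=\rho$ under uniform $\pi$) that is false for the definitions as printed. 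This slip originates in the paper itself — its in-text remark that ``$Z_g(W)$ evaluates to $\rho(W)$'' at uniform $\pi$ has the same factor-of-$q(q-1)$ discrepancy, which then also propagates into \eqref{eq:lb-erasure} and the surrounding numerics — but a self-contained proof should trace the normalization through rather than inherit the assertion, and should then either state the tighter constant or explicitly note that it is being discarded to match the printed corollary.
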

Evidently, the bound above is non-trivial only when $H(K)< 1-\E\left[\rho(W^{\otimes K})\right].$ Moreover, the following simplification is easy to derive:
\begin{equation}
\label{eq:lb}
\rho(W^{\otimes k}) = \sum_{z\neq z'} \left(\sum_{y\in \mathcal{Y}}\sqrt{W(y|z)W(y|z')}\right)^k.
\end{equation}

% lemma illustrates the computability of the lower bound in Corollary \ref{cor:lb}.
%\begin{lemma}
%	\label{lem:lb}
%	For any $k\geq 1$, we have that
%	\[
%	\rho(W^{\otimes k}) = \sum_{z\neq z'} \left(\sum_{y\in \mathcal{Y}}\sqrt{W(y|z)W(y|z')}\right)^k.
%	\]
%\end{lemma}
%\begin{proof}
%	Observe that 
%	\begin{align}
%		\rho(W^{\otimes k}) &= \sum_{z\neq z'}\sum_{y^k\in \mathcal{Y}^k} \prod_{i=1}^k\sqrt{W(y_i|z)W(y_i|z')}\\ 
%		&= \sum_{z\neq z'}\sum_{y^k\in \mathcal{Y}^k} \prod_{i=1}^k\sqrt{W(y_i|z)W(y_i|z')}\\
%		&= \sum_{z\neq z'} \left(\sum_{y\in \mathcal{Y}}\sqrt{W(y|z)W(y|z')}\right)^k.
%	\end{align}
%\end{proof}
As examples, consider the cases when $W$ is a $q$-ary erasure channel or a $q$-ary symmetric channel, where $q = |\mathcal{X}|^\tau$. The $q$-ary erasure channel EC$(\epsilon)$, where $\epsilon\in (0,1)$, has $\mathcal{Y} = \{?\}\cup \mathcal{X}^\tau$, with $W(z|z) = 1-\epsilon$ and $W(?|z) = \epsilon$, for all $z\in \mathcal{X}^\tau$. The $q$-ary symmetric channel, SC$(p)$, where $p\in (0,1)$, has $\mathcal{Y} = \mathcal{X}^\tau$, with $W(z|z) = 1-p$ and $W(y|z) = \frac{p}{|\mathcal{X}|^\tau-1}$, for $y\neq z$, for all $z\in \mathcal{X}^\tau$. Let $W_{\text{nn, EC}}$ and $W_{\text{nn, SC}}$ denote the nanopore channels when $W$ is an erasure channel and a symmetric channel, respectively. Putting together \eqref{eq:lb} and Theorem \ref{thm:lb}, via the symmetry of these channels, it can be derived that
\begin{equation}
	\label{eq:lb-erasure}
	C\left(W_{\text{nn, EC}}\right)\geq 1-H(K)-|\mathcal{X}|^\tau (|\mathcal{X}|^\tau-1)\cdot \E\left[\epsilon^K\right],
\end{equation}
and 
\begin{equation}
	C\left(W_{\text{nn, SC}}\right)\geq 1-H(K)-|\mathcal{X}|^\tau (|\mathcal{X}|^\tau-1)\cdot\E\left[g(p)^K\right],
\end{equation}
where 
\begin{equation}
	g(p):=  2\left(\frac{p(1-p)}{|\mathcal{X}|^\tau-1}\right)^{1/2}+\frac{p(|\mathcal{X}|^\tau-2)}{|\mathcal{X}|^\tau-1}.
\end{equation}
%The following simple lemmas then hold.
%\begin{lemma}
%	\label{lem:lb-compute1}
%	For any $k\geq 1$, we have that 
%	\begin{equation}
%		\rho\left((\text{\normalfont EC}(\epsilon))^{\otimes k}\right) = {\epsilon^k}\cdot|\mathcal{X}|^\tau\cdot (|\mathcal{X}|^\tau-1).
%	\end{equation}
%\end{lemma}
%\begin{proof}
%	Fix any pair $(z,z')\in \mathcal{X}^\tau\times \mathcal{X}^\tau$. Then, the expression $\sqrt{W(y^k|z)W(y^k|z')}$ is non-zero for $y^k\in \mathcal{Y}^k$ if and only if $y^k = ?^k$. Thus,
%	\begin{align}
%		\rho\left((\text{\normalfont EC}(\epsilon))^{\otimes k}\right)&= \sum_{z\neq z'}\sum_{y^k\in \mathcal{Y}^k} \sqrt{W(y^k|z)W(y^k|z')}\\
%		&={\epsilon^k}\cdot|\mathcal{X}|^\tau\cdot (|\mathcal{X}|^\tau-1).
%	\end{align}
%\end{proof}
%Following \ref{lem:lb-compute1}, one can expect the lower bound in Theorem \ref{thm:lb} to be non-trivial if $\epsilon$ is much smaller than $|\mathcal{X}|^\tau$ or if $K$ is large, with high probability.
%\begin{lemma}
%	\label{lem:lb-compute2}
%	
%\end{lemma}

%and

%As a corollary, consider the situation when the stationary distribution $\pi$ is uniform on the state space $\mathcal{X}^\tau$ -- this is achieved, for example, by transition probabilities $P_{S|S^-}(s|s^-) = \frac{1}{|\mathcal{X}|}$, for all admissible ``next" states $s\in \mathcal{X}^\tau$ for a given state $s^-\in \mathcal{X}^\tau$. For such a setting, the Bhattacharya parameter $Z_g(W)$ evaluates to $\rho(W) := \sum_{y\in \mathcal{Y}} \sqrt{W(y|z)W(y|z')}$, and hence we obtain
%\begin{equation}
%	C(W_\text{nn})\geq 1-H(K)-\E[\rho(W)^K].
%\end{equation}

As an illustrative example of the performance of our lower bound, let the duplication channel be an elementary i.i.d. duplication channel (see Example \ref{eg:iid}). Figure \ref{fig:erasureplot-lb} shows a plot of the lower bound obtained via \eqref{eq:lb-erasure} for the case when $|\mathcal{X}| = 3$, $\tau = 2$, and the parameter $p = 0.999$ for the i.i.d. duplication channel.\footnote{Note here that, following the intuition in Section \ref{sec:change-point}, we set the duplication parameter $p$ to be high, so as to allow for more ``views" of each input symbol at the decoder, in expectation.}  Evidently, for small memory lengths and base alphabet sizes, and large duplication parameter $p$, the lower bound in Theorem \ref{thm:lb} is quite reasonable. However, these lower bounds are often quite poor for moderate alphabet sizes and memory lengths found in practical nanopore channels (see \cite{mcbaininfo1} for typical values of memory lengths). For example, consider the nanopore channel that is the $W_\text{nn,EC}$, with base alphabet $\mathcal{X} = \{\mathsf{A},\mathsf{T},\mathsf{G},\mathsf{C}\}$ (representing the four DNA bases) and memory length $\tau = 4$.  It can be checked then even for  erasure probabilities as small as $\epsilon \approx 1.3\times 10^{-4}$, the lower bound in \eqref{eq:lb-erasure} turns out to be negative. This provides motivation for the use of alternative methods as in Sections \ref{sec:erasure} and \ref{sec:change-point}.
\begin{figure}[!t]
	\centering
	\subfloat[]{\includegraphics[width = 0.5\linewidth]{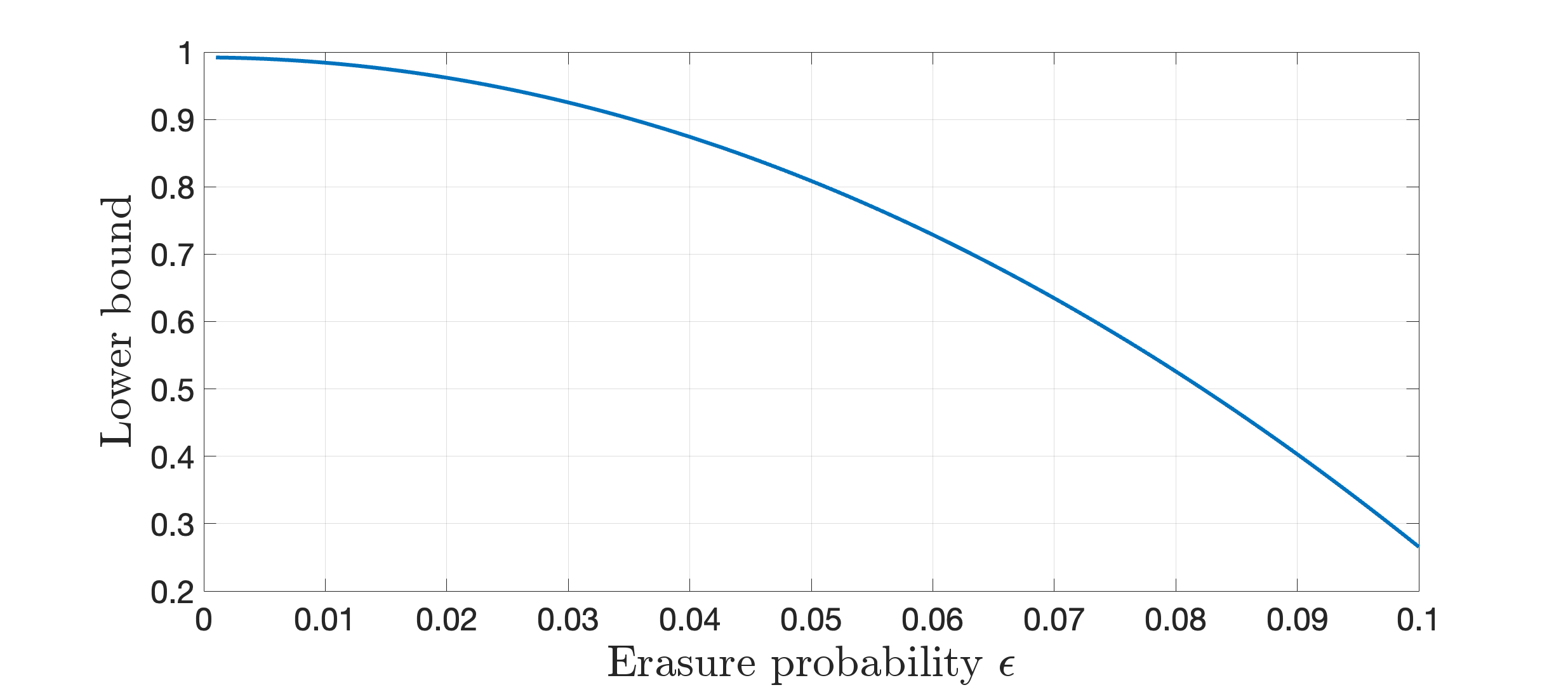} \label{fig:erasureplot-lb}}
	\subfloat[]{\includegraphics[width = 0.5\linewidth]{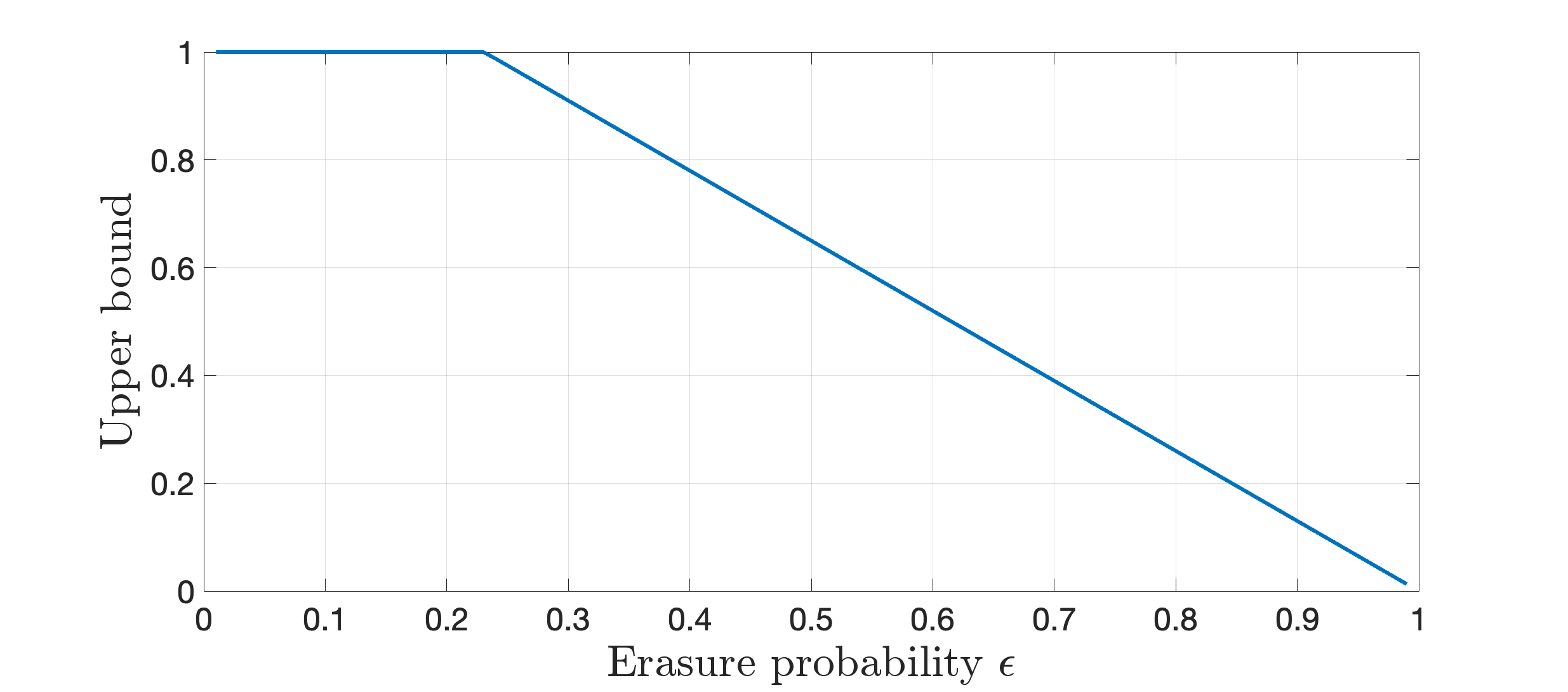} \label{fig:erasureplot-ub}}
	\caption{(a) Our lower bound for $C(W_\text{nn,EC})$, for an i.i.d. duplication channel with parameter $p = 0.999$; (b) Our upper bound for $C(W_\text{nn,EC})$, for an i.i.d. duplication channel with parameter $p = 0.3$. In both cases, we use $|\mathcal{X}| = 3$ and $\tau = 2$.}
	\label{fig:erasureplot}
\end{figure}

We now turn our attention to deriving computable \emph{upper} bounds on $C(W_\text{nn})$. Our first bound is na\"ive, but the second makes use of more structural information about $W_\text{nn}$. We mention that our bounds hold generally for \emph{any} stationary, ergodic Markov input process $P_{S|S^-}$, which is not necessarily a de Bruijn Markov input process.

Let $C(W)$ denote the capacity of the memoryless channel $W$ in the definition of the channel $W_\text{nn}$. Our first bound is as follows. 
\begin{theorem}
	\label{thm:upper1}
	We have that
	\[
	C(W_\text{\normalfont nn})\leq  \E[K]\cdot C(W),
	\]
	where $K\sim P_K$.
\end{theorem}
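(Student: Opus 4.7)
The plan is to reduce the problem to a standard memoryless-channel bound by conditioning on the random output length $T_m$, and then exploiting the fact that, conditionally on $T_m$, the inner channel from $Z^{T_m}$ to $Y^{T_m}$ is just the DMC $W$ used a fixed number of times. The capacity of that DMC is $C(W)$, and $\E[T_m] = m\E[K]$, which is exactly where the factor $\E[K]$ will emerge. I expect no real obstacle; the main care is in bookkeeping for the variable-length output.

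First, since $T_m$ is determined by the length of the output vector $Y^{T_m}$, we have the identity $I(S^m;Y^{T_m}) = I(S^m;Y^{T_m},T_m)$. Next, because $K^m$ is independent of $S^m$ (and $T_m$ is a deterministic function of $K^m$), we have $I(S^m;T_m)=0$, so
\[
I(S^m;Y^{T_m}) \;=\; I(S^m;Y^{T_m}\mid T_m) \;=\; \sum_{t} \Pr[T_m = t]\,I(S^m;Y^t \mid T_m = t).
\]

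Now fix any $t$ with $\Pr[T_m=t]>0$. Conditionally on $T_m=t$, the chain $S^m \to Z^t \to Y^t$ is still Markov (the channel $W$ acts memorylessly on $Z^t$ regardless of the value of $T_m$), so the data-processing inequality gives $I(S^m;Y^t\mid T_m=t) \leq I(Z^t;Y^t\mid T_m=t)$. For the latter, writing
\[
I(Z^t;Y^t\mid T_m=t) \;\leq\; \sum_{i=1}^{t} I(Z_i;Y_i\mid T_m=t),
\]
where the inequality uses that $H(Y^t\mid T_m=t) \leq \sum_i H(Y_i\mid T_m=t)$ while $H(Y^t\mid Z^t,T_m=t) = \sum_i H(Y_i\mid Z_i)$ by the memorylessness of $W$. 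Each term on the right corresponds to a particular input distribution on $Z_i$ through the fixed DMC $W$, and therefore is at most $C(W)$. Hence $I(Z^t;Y^t\mid T_m=t) \leq t\cdot C(W)$.

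Combining the displays,
\[
I(S^m;Y^{T_m}) \;\leq\; \sum_{t}\Pr[T_m=t]\cdot t\cdot C(W) \;=\; \E[T_m]\,C(W) \;=\; m\,\E[K]\,C(W).
\]
Dividing by $m$, taking the limit as $m\to\infty$, and then the supremum over $P_{S\mid S^-}$ as in Theorem~\ref{thm:capacity} yields $C(W_\text{nn})\leq \E[K]\cdot C(W)$, as claimed.
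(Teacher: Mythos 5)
Your proof is correct, and it takes a different route from the paper's. The paper conditions on the full duplication sequence $K^m$, writes $I(S^m;Y^{T_m}) \leq I(S^m;Y^{T_m}\mid K^m) = H(Y^{T_m}\mid K^m) - H(Y^{T_m}\mid K^m,S^m)$, and then evaluates each of these two conditional entropies as $\E[T_m]\cdot H(Y)$ and $\E[T_m]\cdot H(Y\mid Z)$ respectively (invoking Wald's lemma and the stationarity of the marginal of $Z_i$), finally bounding $H(Y)-H(Y\mid Z)\leq C(W)$. You instead condition only on $T_m$, which you note is already a deterministic function of the output length as well as being independent of $S^m$, so $I(S^m;Y^{T_m}) = I(S^m;Y^{T_m}\mid T_m)$ exactly; then for each fixed $t$ you apply the data-processing inequality to the conditional Markov chain $S^m\to Z^t\to Y^t$ and single-letterize the DMC to get $I(Z^t;Y^t\mid T_m=t)\leq tC(W)$, averaging at the end. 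The two routes yield the same constant, but yours sidesteps Wald's lemma entirely and does not need stationarity of the $Z_i$ marginals — the per-letter terms $I(Z_i;Y_i\mid T_m=t)$ are each bounded by $C(W)$ regardless of the (possibly time-varying, possibly $t$-dependent) distribution of $Z_i$. That makes your argument somewhat more robust; the paper's decomposition, on the other hand, makes the structure ``duplication known $\Rightarrow$ parallel memoryless uses'' a bit more transparent and recycles the $\E[T_m]\cdot H(\cdot)$ identities it uses elsewhere. One small phrasing note: in your single-letterization you write $H(Y^t\mid Z^t,T_m=t)=\sum_i H(Y_i\mid Z_i)$; strictly this should be $\sum_i H(Y_i\mid Z_i,T_m=t)$, which is what then pairs with $H(Y_i\mid T_m=t)$ to form $I(Z_i;Y_i\mid T_m=t)$ — the bound by $C(W)$ still goes through unchanged.
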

\begin{proof}
Observe that
\begin{align}
	I(S^m;Y^{T_m}) &\leq I(S^m;Y^{T_m},K^m)\\
	&= I(S^m;K^m)+I(S^m;Y^{T_m}\mid K^m) \\ &= I(S^m;Y^{T_m}\mid K^m), \label{eq:temp3}
\end{align}
where \eqref{eq:temp3} holds due to the independence of the duplication process $K^m$ from the input process $S^m$. Further, we note that $I(S^m;Y^{T_m}\mid K^m) = H(Y^{T_m}\mid K^m) - H(Y^{T_m}\mid K^m,S^m)$. Now, 
\begin{align}
	H(Y^{T_m}\mid K^m,S^m)&= H(Y^{T_m}\mid K^m,S^m, T_m) \\
	 &= \E[T_m]\cdot H(Y\mid Z), \label{eq:up1}
\end{align}
where $Z,Y$ are random variables with the stationary marginal distributions of $Z^{T_m}$ and $Y^{T_m}$, respectively, and where \eqref{eq:up1} follows via Wald's lemma (see, e.g., \cite[Thm. 13.3]{mitzupfal}).

Moreover, similarly,
\begin{equation}
	\label{eq:up2}
	H(Y^{T_m}\mid K^m)\leq H(Y^{T_m}\mid T_m) = \E[T_m]\cdot H(Y). 
\end{equation}
Putting together \eqref{eq:up1} and \eqref{eq:up2} and plugging back into \eqref{eq:temp3} above, we obtain that $I(S^m;Y^{T_m})\leq \E[T_m]\cdot (H(Y)-H(Y\mid Z))$. This then leads to
\begin{align}
	C(W_\text{nn})&\leq \left(\lim_{m\to \infty} \frac{1}{m}\ \E[T_m]\right)\cdot \max\left(H(Y)-H(Y\mid Z)\right) \\
	&= \E[K]\cdot C(W),
\end{align}
since $\E[T_m] = m\E[K]$, where $K\sim P_K$, and using Wald's lemma. 
\end{proof}
For example, for the nanopore channel $W_\text{nn,EC}$ with an i.i.d. duplication channel (see Example \ref{eg:iid}), the upper bound in Theorem \ref{thm:upper1} can be computed as:
\begin{equation}
	\label{eq:erasure-ub}
	C(W_\text{nn,EC})\leq (1+p)\cdot (1-\epsilon),
\end{equation}
which is non-trivial (smaller than $1$) for small $p$ and large $\epsilon$ values, in contrast to the lower bound in \eqref{eq:lb-erasure}, which is non-trivial (larger than $0$) for large $p$ and small $\epsilon$ values. Figure \ref{fig:erasureplot-ub} shows a plot of the upper bound obtained via \eqref{eq:erasure-ub} for the case when $|\mathcal{X}| = 3$, $\tau = 2$, and the parameter $p$ when the duplication channel is the i.i.d. duplication channel (see Example \ref{eg:iid}) equals $0.3$. 

While the upper bound above relates the capacity of $W_\text{nn}$ with the capacity of the DMC $W$, a simple upper bound (via the data processing inequality \cite[Thm. 2.8.1]{cover_thomas}) on $C(W_\text{nn})$ is:
\begin{equation}
	\label{eq:upper2}
C(W_\text{nn})\leq C(\overline{W}_\text{nn}),
\end{equation}
i.e., $C(W_\text{nn})$ is upper bounded by the capacity of the noiseless nanopore channel. 
%Using \cite[Lemma 1]{kirsch}, it is possible to show, as mentioned earlier, that the lower bound on $C(\overline{W}_\text{nn})$ in Theorem \ref{thm:noiseless} is in fact tight, thereby resulting in a single-letter upper bound on $C(W_\text{nn})$, via \eqref{eq:upper2}.

In the following theorem, we present an upper bound on $C(W_\text{nn})$ that, for selected duplication distributions $P_K$ and DMCs $W$, improves on the bound in \eqref{eq:upper2}. For any fixed $P_{S|S^-}$, let $Z,Y$ be random variables with the stationary marginal distributions of $Z^{T_m}$ and $Y^{T_m}$, respectively.

\begin{theorem}
	\label{thm:upper2}
	We have that
	\begin{align*}
	C(W_\text{\normalfont nn})\leq \max_{P_{S|S^-}}\left[\lim_{m\to \infty} \frac{1}{m} I(S^m;Z^{T_m}) -(\E[K]\cdot H(Z\mid Y)-H(K))\right].
	\end{align*}
\end{theorem}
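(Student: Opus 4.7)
The proof starts from the Markov chain $S^m \to Z^{T_m} \to Y^{T_m}$---the DMC $W$ depends on $S^m$ only through $Z^{T_m}$---which yields the identity
\[
I(S^m; Y^{T_m}) = I(S^m; Z^{T_m}) - I(S^m; Z^{T_m} \mid Y^{T_m}).
\]
Dividing by $m$, passing to the limit $m \to \infty$, and taking the supremum over $P_{S \mid S^-}$ in view of Theorem~\ref{thm:capacity}, the claim is equivalent to the lower bound
\[
\liminf_{m \to \infty} \frac{1}{m}\, I(S^m; Z^{T_m} \mid Y^{T_m}) \;\geq\; \E[K]\, H(Z\mid Y) - H(K).
\]
The natural decomposition $I(S^m; Z^{T_m} \mid Y^{T_m}) = H(Z^{T_m} \mid Y^{T_m}) - H(Z^{T_m} \mid S^m, Y^{T_m})$ reduces the task to bounding these two terms separately.

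The subtracted term admits a clean upper bound: since $Z^{T_m}$ is a deterministic function of $(S^m, K^m)$ with $K^m$ i.i.d.\ and independent of $S^m$, we have $H(Z^{T_m} \mid S^m, Y^{T_m}) \leq H(Z^{T_m} \mid S^m) \leq H(K^m) = m H(K)$. For the leading term $H(Z^{T_m} \mid Y^{T_m})$, the plan is to apply the identity $H(Z^{T_m} \mid Y^{T_m}) = H(Z^{T_m}) + H(Y^{T_m} \mid Z^{T_m}) - H(Y^{T_m})$ and evaluate $H(Y^{T_m} \mid Z^{T_m}) = m\, \E[K]\, H(Y \mid Z)$ by a Wald-type identity, which exploits the memorylessness of $W$, the stationarity of the marginal of $Z$, and the independence $K^m \perp S^m$ (mirroring the computation already used in the proof of Theorem~\ref{thm:upper1}). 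The remaining difference $H(Z^{T_m}) - H(Y^{T_m})$ is then compared with $m\, \E[K]\, (H(Z) - H(Y))$ through an asymptotic analysis of random-length block entropies, yielding --- via the single-letter identity $H(Z\mid Y) = H(Z) + H(Y\mid Z) - H(Y)$ --- the target lower bound $H(Z^{T_m} \mid Y^{T_m}) \geq m\, \E[K]\, H(Z\mid Y) + o(m)$. Assembling the two bounds produces $I(S^m; Z^{T_m} \mid Y^{T_m}) \geq m\, \E[K]\, H(Z\mid Y) - m\, H(K) + o(m)$, which delivers the claim.

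The principal obstacle is the asymptotic comparison of $H(Z^{T_m}) - H(Y^{T_m})$ with $m\, \E[K]\, (H(Z) - H(Y))$: while a naive subadditivity argument gives only $H(X^{T_m}) \leq \E[T_m]\, H(X) = m\,\E[K]\,H(X)$ for both $X \in \{Z, Y\}$, the directional refinement needed here must carefully exploit that $Y^{T_m}$ is a memoryless noisy version of the highly-structured duplicated sequence $Z^{T_m}$, so that the two sublinear slacks are compatible. This step likely proceeds by a Wald-type bound applied to the stationary extension of the duplication process, together with an ergodic-theoretic identification of the per-$S$-symbol conditional-entropy rate $\lim_m \tfrac{1}{m} H(Z^{T_m} \mid Y^{T_m})$ with $\E[K]\, H(Z \mid Y)$ (up to $o(m)$ corrections arising from the randomness of $T_m$).
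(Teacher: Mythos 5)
Your proposal follows essentially the same skeleton as the paper's proof. Both decompositions are equivalent: you write $I(S^m;Y^{T_m}) = I(S^m;Z^{T_m}) - I(S^m;Z^{T_m}\mid Y^{T_m})$ and then split $I(S^m;Z^{T_m}\mid Y^{T_m}) = H(Z^{T_m}\mid Y^{T_m}) - H(Z^{T_m}\mid S^m,Y^{T_m})$; the paper lower-bounds $H(S^m\mid Y^{T_m})$ via $H(S^m,Z^{T_m}\mid Y^{T_m}) - H(Z^{T_m}\mid S^m,Y^{T_m})$ and after expansion arrives at exactly the same two quantities. Your handling of the subtracted term via $H(Z^{T_m}\mid S^m,Y^{T_m})\leq H(Z^{T_m}\mid S^m)\leq mH(K)$ is the paper's step too.

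The gap is precisely where you say it is. You identify $H(Z^{T_m}\mid Y^{T_m})\geq m\E[K]\,H(Z\mid Y) + o(m)$ as the pivotal step and only sketch a plan (Wald plus ergodic-theoretic identification) without executing it, so as written your argument is incomplete. Worth noting: the paper dispatches this step with the bare assertion $H(Z^{T_m}\mid Y^{T_m}) = \E[T_m]\cdot H(Z\mid Y)$ ``by the memorylessness of the channel $W$,'' but memorylessness alone gives only the forward identity $H(Y^{T_m}\mid Z^{T_m}) = \E[T_m]\,H(Y\mid Z)$; converting this into a bound on the reverse conditional entropy requires, exactly as you observe, comparing $H(Z^{T_m}) - H(Y^{T_m})$ with $\E[T_m]\left(H(Z) - H(Y)\right)$, which does not follow from memorylessness and is not obvious because $Z^{T_m}$ is far from memoryless. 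So your hesitation is well-founded: you located the right pressure point, but neither you nor the paper's one-line appeal to memorylessness actually closes it. To complete your proof you would need to supply the asymptotic-entropy-rate argument you gesture at, or replace it with a direct lower bound on $H(Z^{T_m}\mid Y^{T_m})$ that exploits the structure of the duplication process.
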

For distributions $P_K$ with low entropy, but relatively high expected value in comparison, the upper bound in Theorem \ref{thm:upper2} is likely to be tighter than that in \eqref{eq:upper2}.
\begin{proof}
	For any input transition kernel $P_{S|S^-}$, we write
	\begin{equation}
		\label{eq:up3}
		I(S^m;Y^{T_m}) = H(S^m)-H(S^m\mid Y^{T_m}).
	\end{equation}
	Now, note that 
	\begin{align}
		H(S^m\mid Y^{T_m}) &= H(S^m,Z^{T_m}\mid Y^{T_m}) - H(Z^{T_m}\mid S^m,Y^{T_m})\\ &\stackrel{(a)}{\geq} H(S^m,Z^{T_m}\mid Y^{T_m})  - H(Z^{T_m}\mid S^m) \\&= H(Z^{T_m}\mid Y^{T_m})+H(S^m\mid Z^{T_m},Y^{T_m}) - H(Z^{T_m}\mid S^m)\\
		&\stackrel{(b)}{=} H(Z^{T_m}\mid Y^{T_m})+H(S^m\mid Z^{T_m}) - mH(K)\\
		&\stackrel{(c)}{=} m(\E[K]\cdot H(Z\mid Y)-H(K)) + H(S^m\mid Z^{T_m}),
	%	&=  H(Z^{T_m}\mid Y^{T_m})+\sum_B \E[L_B(S^m)]\cdot H\bigg(G_B\mid \sum_{j=1}^{G_B} K_j\bigg)-mH(K).
	\end{align}
where $(a)$ holds since conditioning reduces the entropy and $(b)$ holds since $S^m$ -- $Z^{T_m}$ -- $Y^{T_m}$ forms a Markov chain. Finally, $(c)$ follows from the fact that $H(Z^{T_m}\mid Y^{T_m}) = \E[T_m]\cdot H(Z\mid Y)$, by the memorylessness of the channel $W$; further, we have $\E[T_m] = m\E[K]$. Plugging into \eqref{eq:up3} gives us the theorem.
\end{proof}
\begin{remark}
The limiting mutual information rate $\lim_{m\to \infty} \frac{1}{m} I(S^m;Z^{T_m})$, for any fixed $P_{S|S^-}$ can be computed from Lemmas \ref{lem:noiseless}  and \ref{lem:noiselesshelper}, given the tightness of the bound in Lemma \ref{lem:noiseless}, as proved in \cite{kirsch}. Further, the joint distribution $P_{Z,Y}$ obeys	$P_{Z,Y}(z,y) = \pi(z)W(y|z)$, 
which allows for a computation of $H(Z\mid Y)$ in Theorem \ref{thm:upper2}. 
\end{remark}
Until now, we have presented general bounds on the capacity of the nanopore channel, with arbitrary duplication and noise processes, and for any value of memory length $\tau$. However, these bounds tend to be somewhat poor for regimes of practical interest, which require large $\tau$-mer lengths. To address this issue, in the next section, we shall focus on the specific class of NNCs with erasure noise, with long $\tau$-mer lengths. Our results show, somewhat surprisingly, that in the limit of large $\tau$, rates arbitrarily close to $1$ can be achieved over this channel, using low-complexity encoding and decoding schemes.

%Up until this point, we have derived explicit lower bounds (achievable rates) and upper bounds (converse) based on manipulations of information-theoretic inequalities.  In the next section, we turn our attention to present an explicit coding scheme over $W_\text{nn}$, in the simple case that  the channel $W$ is an erasure channel.
\section{Achievable Rates Over NNCs with Erasure Noise For Long $\tau$-mer Lengths}
\label{sec:erasure}
In this section, we focus on the special case when $W$ is an erasure channel EC$(\epsilon)$, where $\epsilon\in (0,1)$ (see the discussion following Theorem \ref{thm:lb} for a definition). Thus, our nanopore channel is $W_\text{nn, EC} = W_\text{nn}(\mathcal{X},\mathcal{X}^\tau\cup \{?\},\tau,P_K,W)$. To make the dependence on the ``memory" of the nanopore explicit, we write $C^{(\tau)}(W_\text{nn, EC})$ as the capacity of the NNC of interest. 

Our main objective in this section is to show that for an NNC with erasure noise, one can achieve rates arbitrarily close to $1$, so long as $\tau$ is large enough. Recall that all through, we assume that the quantities $|\mathcal{X}|$ and $|\Lambda|$ are fixed (recall that $\Lambda\setminus \{0\}$ is the support set of the duplication distribution $P_K$). Before we present the theorem statement, we need some more definitions. 

For every $\tau\geq 1$, we define a specific input process $P^{\star}_{S|S^-} = P_{S|S^-}^{\star, (\tau)}$, which is a maximal entropy de Bruijn Markov input process, under the constraint that it \emph{does not have self-loops} on any of its states $s\in \mathcal{X}^\tau$. Note that the only possible self-loops in a general de Bruijn Markov input process are on states (or symbols) of the form ${}^\tau b$, for some $b\in \mathcal{X}$. The class of ``no-self-loop" de Bruijn Markov processes that we consider eliminates such self-loops.

We remark that if $S^\text{no-loop}_\tau$ denotes the constrained system that consists of sequences generated by de Bruijn Markov input processes on $\mathcal{X}^\tau$ with no self-loops, then the code generated by $P^{\star}_{S|S^-}$ has the largest rate $C^\text{no-noise, no-loop}_\tau$, which is also called the (noiseless) capacity of $S^\text{no-loop}_\tau$ (see \cite[Chap. 3]{Roth} for more details).\footnote{Furthermore, for a Markov chain generated using $P^{\star}_{S|S^-}$, its entropy rate equals $C^\text{no-noise, no-loop}_\tau$ (see \cite[Thm. 3.23]{Roth}).} Likewise, we let $C^\text{no-noise}_\tau$ denote the noiseless capacity of the constrained system $S_\tau$ consisting of sequences generated by any de Bruijn Markov input process.
%More formally, we wish to compute the iterated limit (see Theorem \ref{thm:capacity}). \begin{equation}\overline{C}(W_\text{nn, EC}):=  {\lim_{\tau\to \infty} \sup_{P_{S|S^-}}\lim_{m\to \infty} \frac{1}{m} I(S^m;Y^{T_m})}.\end{equation}

%We present our arguments in two regimes of possible interest: (i) when $\tau$ is allowed to grow as a function of $m$, and (ii) when the capacity $C^{(\tau)}(W_\text{nn, EC})$ is evaluated at every fixed $\tau\geq 1$ and then $\tau$ is allowed to grow to infinity.  For the first regime, we let $\beta = \beta(m):= \frac{1}{(\log m)^{1+\delta}}$, for some small $\delta>0$, and consider paths of $\tau = \tau(m)$ values such that $\tau(m) \geq \frac{\log(\beta \mathbb{E}[K]\cdot m)}{\log (1/\epsilon)}$ (the bases of the logarithms here are arbitrary, but fixed). We are then interested in computing the double limit $\overline{C}_1(W_\text{nn, EC}):= \sup_{P_{S|S^-}} \lim_{m, \tau \to \infty} \frac{1}{m} I(S^m;Y^{T_m})$, over such paths of $\tau$ values. 

Our main result in this section is summarized in the theorem below:

\begin{theorem}
	\label{thm:erasuremain}
	We have that 
	$$
	{C}^{(\tau)}(W_\text{\normalfont nn, EC}) \geq C_\tau^\text{\normalfont no-noise, no-loop} -O(\tau\epsilon^{\tau}).
	$$
	Hence, $\lim_{\tau\to \infty} {C}^{(\tau)}(W_\text{\normalfont nn, EC}) = 1$.
\end{theorem}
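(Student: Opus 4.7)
The plan is to specialize the input in Theorem \ref{thm:capacity} to the maximum-entropy no-self-loop de Bruijn Markov process $P^\star_{S|S^-}$, whose entropy rate equals $C_\tau^\text{no-noise, no-loop}$. Writing
\[
\frac{1}{m}\, I(S^m; Y^{T_m}) = \frac{1}{m}\, H(S^m) - \frac{1}{m}\, H(S^m \mid Y^{T_m})
\]
and noting $\frac{1}{m} H(S^m) \to C_\tau^\text{no-noise, no-loop}$ by ergodicity, the task reduces to showing $\limsup_{m\to\infty} \frac{1}{m}\, H(S^m \mid Y^{T_m}) = O(\tau \epsilon^\tau)$.

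The key observation is that, since each base $B_j$ of the underlying sequence $B^{m+\tau-1}$ appears in the $\tau$ consecutive $\tau$-mers $S_{j-\tau+1},\ldots,S_j$ (away from boundaries), and since each $K_i \geq 1$, the base $B_j$ is ``invisible'' in $Y^{T_m}$ (meaning every one of its copies in $Z^{T_m}$ is erased) with probability at most
\[
\prod_{i=j-\tau+1}^{j} \E\!\left[\epsilon^{K_i}\right] = \left(\E[\epsilon^K]\right)^{\tau} \leq \epsilon^{\tau},
\]
using the independence of the $K_i$'s and of the erasures, together with $K\geq 1$ almost surely. Switching to the base representation and applying subadditivity, $H(S^m \mid Y^{T_m}) = H(B^{m+\tau-1} \mid Y^{T_m}) \leq \sum_{j=1}^{m+\tau-1} H(B_j \mid Y^{T_m})$, so it suffices to prove the per-base bound $H(B_j \mid Y^{T_m}) = O(\tau \epsilon^\tau)$.

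To obtain the per-base bound, I would construct an estimator $\hat{B}_j(Y^{T_m})$ that correctly recovers $B_j$ whenever it is visible, so that its error probability satisfies $p_e^j := \Pr[\hat{B}_j \neq B_j] \leq \epsilon^\tau$, and then invoke Fano's inequality,
\[
H(B_j \mid Y^{T_m}) \leq h_b(p_e^j) + p_e^j \log(|\mathcal{X}|-1) = O(\tau \epsilon^\tau),
\]
with the $\tau$ factor coming from the asymptotic $h_b(\epsilon^\tau) \sim \tau \epsilon^\tau \log(1/\epsilon)$ as $\tau \to \infty$ with $\epsilon$ fixed. Summing over $j$ and dividing by $m$ then yields the desired rate-loss bound. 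The ``hence'' conclusion follows since $\tau \epsilon^\tau \to 0$ and, by the formalization in Section \ref{sec:no-loop}, $C_\tau^\text{no-noise, no-loop} \to 1$ as $\tau \to \infty$ (the no-self-loop constraint, which forbids runs of $\tau+1$ identical bases, becomes asymptotically vacuous).

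The principal obstacle is rigorously constructing the estimator $\hat{B}_j$ from $Y^{T_m}$ alone. A genie-aided version with access to the duplication vector $K^m$ is immediate: the block boundaries are known, and any non-erased position in a block containing $B_j$ exposes the entire $\tau$-mer and hence $B_j$ itself. Without $K^m$, entirely-erased blocks are not directly observable, and the block indexing must be inferred from the pattern of $\tau$-mer value changes in $Y^{T_m}$. A plausible route is to introduce $K^m$ as auxiliary side information and separately control $H(K^m \mid Y^{T_m})$, using that under $P^\star$ the block boundaries in $Z^{T_m}$ coincide with distinct adjacent $\tau$-mer values so that most boundaries are revealed by any color change in the non-erased portion of $Y^{T_m}$; one must then ensure that the residual contribution from erasure clusters straddling boundaries, and from entirely-erased blocks, does not inflate the rate loss beyond the target order $O(\tau \epsilon^\tau)$.
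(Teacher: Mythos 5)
Your high-level plan --- restrict to the maximum-entropy no-self-loop de Bruijn process $P^\star_{S|S^-}$, split $I(S^m;Y^{T_m}) = H(S^m) - H(S^m\mid Y^{T_m})$, and control the conditional entropy --- coincides with the paper's, and your invisibility computation $\prod_{i=j-\tau+1}^{j}\E[\epsilon^{K_i}] = (\E[\epsilon^K])^{\tau}\leq\epsilon^{\tau}$ is correct as far as it goes. But the step you flag as the ``principal obstacle'' is a genuine gap, not a technicality to be deferred: visibility of $B_j$ does not allow recovery of $B_j$, and the suggested repair through $H(K^m\mid Y^{T_m})$ cannot be pushed to order $O(\tau\epsilon^{\tau})$. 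The difficulty is a deletion-type synchronization failure. Whenever some block $S_i$ is \emph{entirely} erased --- an event of probability $\E[\epsilon^{K_i}]\leq\epsilon$, i.e.\ only a single factor of $\epsilon$, since $K_i\geq 1$ --- the receiver may be unable to tell whether that block was present at all. For instance, when $S_i = {}^{\tau}b$, the $\tau$-mer $S_{i+1}$ is simultaneously a valid single left-shift of $S_i$ and of $S_{i-1}$, so the received window $\ldots, S_{i-1}, ?, \ldots, ?, S_{i+1}, \ldots$ is consistent both with ``$S_i$ present and all $K_i$ copies erased'' and with ``no intervening block, the erasures consumed from the flanking runs.'' This ambiguity shifts the index correspondence between $Y$-positions and base indices for the entire suffix, so downstream bases become unlocalizable even though each one is ``visible'' in your sense. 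Per-base Fano driven by $p_e^j\leq\epsilon^{\tau}$ is therefore the wrong decomposition: the dominant obstacle to recovering $B_j$ is not its own invisibility but the $O(\epsilon)$-probability boundary ambiguity, which your estimator $\hat{B}_j$ would have to resolve and which you never address.

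The paper argues instead at the $\tau$-mer codeword level. It isolates the structural fact (Lemma~\ref{lem:noloop}) that in a no-self-loop de Bruijn codeword any window of at most $\tau+1$ consecutive $\tau$-mers is determined by its two endpoints, and it analyzes the decoder $\mathscr{D}_\text{clean}$ that fills in every erasure burst of length at most $\tau-1$ from the flanking non-erased $\tau$-mers. The error event is taken to be ``some erasure burst has length $\geq\tau$,'' whose probability is bounded by $m\E[K]\epsilon^{\tau}$ via a union bound over the $T_m\approx m\E[K]$ output positions (Lemma~\ref{lem:erasure-a}); this is plugged into a Fano-type bound to yield Proposition~\ref{prop:erasure}. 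If you want to pursue your route you should replace your per-base invisibility event with the paper's burst event and invoke Lemma~\ref{lem:noloop}, since only the $\tau$-mer-level criterion gives you a handle on boundary localization; and even then, note that the ${}^{\tau}b$ example above shows that the boundary-counting step deserves explicit treatment, as the number of input blocks hidden inside a short burst is not automatically determined by the flanking symbols alone.
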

Interestingly, the theorem above suggests that longer memory lengths $\tau$ give rise to higher capacities of the associated NNC with erasure noise. Intuitively, such a result arises because larger $\tau$ values imply that in the de Bruijn Markov input process, longer lengths of paths are with high probability uniquely determined by their endpoints, thereby allowing for longer bursts of erasures.

The proof of Theorem \ref{thm:erasuremain} relies on obtaining a lower bound on $ \overline{C}(W_\text{\normalfont nn, EC})$, via the specific class of no-self-loop de Bruijn Markov input processes above. In what follows, we collect some useful facts about this class of input processes.

\subsection{Properties of de Bruijn Markov Processes With No Self-Loops}
\label{sec:no-loop}
The main attribute of no-self-loop de Bruijn Markov input processes that is useful for our analysis is that all sequences (or codewords) generated by such processes are such that any $\tau$-mer in the codeword has runs of length only $1$, if it occurs. A first observation about codewords generated by such a process is presented as a lemma below (the straightforward proof is omitted). 

\begin{lemma}
	\label{lem:noloop}
	For any codeword $\mathbf{c} = (c_1,\ldots,c_n)$ generated by a no-self-loop de Bruijn Markov input process, we have that for all $i, j\in [n]$ such that $i\leq j\leq i+\tau$, the symbols $c_i$ and $c_{j}$ completely determine $c_{i+1},\ldots,c_{j-1}$.
\end{lemma}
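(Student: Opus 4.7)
The plan is to translate the $\tau$-mer statement into a base-level statement and exploit the hypothesis $j \leq i+\tau$ to show that the two length-$\tau$ base-windows determined by $c_i$ and $c_j$ jointly cover a contiguous interval of the underlying base sequence that is long enough to reconstruct every intermediate $\tau$-mer. I note that only the de Bruijn transition rule is actually needed here; the no-self-loop property plays no role in this particular lemma and is simply the context in which it is used in the sequel.

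First, I would introduce the underlying base sequence $B_1, B_2, \ldots, B_{n+\tau-1} \in \mathcal{X}$ associated with the codeword $\mathbf{c}$, characterized by the identity $c_k = (B_k, B_{k+1}, \ldots, B_{k+\tau-1})$ for every $k \in [n]$. This decomposition is well-defined because, by the de Bruijn constraint stated in Section \ref{sec:channelmodel}, $c_{k+1}$ is obtained from $c_k$ by deleting its leftmost base and appending a single new rightmost base. In this representation, knowing $c_i$ is equivalent to knowing the bases $B_i, \ldots, B_{i+\tau-1}$, and knowing $c_j$ is equivalent to knowing $B_j, \ldots, B_{j+\tau-1}$.

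Next, I would verify that the hypothesis $j \leq i+\tau$ forces the two index windows $[i, i+\tau-1]$ and $[j, j+\tau-1]$ to overlap or be exactly adjacent, so that their union equals the contiguous interval $[i, j+\tau-1]$ with no gap. Consequently, $c_i$ and $c_j$ together pin down $B_k$ for every index $k$ in this whole contiguous block.

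Finally, for any intermediate index $k$ with $i < k < j$, the $\tau$-mer $c_k = (B_k, B_{k+1}, \ldots, B_{k+\tau-1})$ depends only on base indices in $[k, k+\tau-1] \subseteq [i+1, j+\tau-2] \subseteq [i, j+\tau-1]$, which have already been determined in the previous step; hence $c_{i+1}, \ldots, c_{j-1}$ are all uniquely recovered from $c_i$ and $c_j$. There is no genuine obstacle in this argument: the only point requiring care is ensuring that the two length-$\tau$ windows leave no gap, and this is exactly what the inequality $j \leq i+\tau$ guarantees.
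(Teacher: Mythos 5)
Your proof is correct, and since the paper explicitly omits the argument as ``straightforward,'' the base-level reconstruction you give is evidently the intended reasoning: pass to the underlying base sequence $B_1,\ldots,B_{n+\tau-1}$, observe that the windows $[i,i+\tau-1]$ and $[j,j+\tau-1]$ are contiguous when $j\leq i+\tau$, and read off each intermediate $\tau$-mer from the recovered bases. Your side observation that the no-self-loop property is not actually used in the lemma itself (only the de Bruijn shift structure) is accurate; that property becomes relevant later, when the decoder $\mathscr{D}_\text{clean}$ must segment the duplicated output into runs, which it can only do unambiguously if consecutive input $\tau$-mers never repeat.
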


We next state a useful fact about the noiseless capacities $C_\tau^\text{\normalfont no-noise, no-loop}$ and $C_\tau^\text{\normalfont no-noise}$.

\begin{theorem}
	\label{thm:nonoisecap}
	We have that 
	\[
	\lim_{\tau \to \infty} C_\tau^\text{\normalfont no-noise, no-loop} = \lim_{\tau \to \infty} C_\tau^\text{\normalfont no-noise} = 1.
	\]
\end{theorem}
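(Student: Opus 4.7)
The plan is to combine a trivial upper bound with explicit constructions of de Bruijn Markov kernels whose entropy rates approach $1$ (measured in base $|\mathcal{X}|$). Observe first that from any state $s^- \in \mathcal{X}^\tau$ the de Bruijn structure admits at most $|\mathcal{X}|$ successors, so every kernel $P_{S|S^-}$ -- with or without the no-self-loop restriction -- satisfies $H(S_2 \mid S_1) \le \log_{|\mathcal{X}|} |\mathcal{X}| = 1$. This yields $C_\tau^\text{\normalfont no-noise} \le 1$ and $C_\tau^\text{\normalfont no-noise, no-loop} \le 1$. The matching lower bound for $C_\tau^\text{\normalfont no-noise}$ is immediate: the uniform kernel $P_{S|S^-}(s \mid s^-) = 1/|\mathcal{X}|$ over the $|\mathcal{X}|$ admissible $s$ has entropy rate exactly $1$, so $C_\tau^\text{\normalfont no-noise} = 1$ for every $\tau \ge 1$.

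For $C_\tau^\text{\normalfont no-noise, no-loop}$, I would construct a specific no-self-loop de Bruijn Markov kernel $P^{\#}_{S \mid S^-}$ that is uniform over its non-forbidden successors: assign probability $1/|\mathcal{X}|$ to each admissible successor when $s^- \ne {}^\tau b$ for any $b$, and probability $1/(|\mathcal{X}|-1)$ to each non-self-loop successor when $s^- = {}^\tau b$. Let $\pi^{\#}$ denote its stationary distribution and set $\pi^{\#}_{\text{loop}} := \sum_{b \in \mathcal{X}} \pi^{\#}({}^\tau b)$. The per-state conditional entropies equal $\log_{|\mathcal{X}|}(|\mathcal{X}|-1)$ on the loop states and $1$ everywhere else, so
\[
H_{P^{\#}}(\mathscr{S}) = 1 - \pi^{\#}_{\text{loop}} \cdot \log_{|\mathcal{X}|}\!\frac{|\mathcal{X}|}{|\mathcal{X}|-1},
\]
and because $C_\tau^\text{\normalfont no-noise, no-loop} \ge H_{P^{\#}}(\mathscr{S})$, it remains to show that $\pi^{\#}_{\text{loop}} \to 0$ as $\tau \to \infty$.

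To bound $\pi^{\#}_{\text{loop}}$, I would project the chain onto the trailing run-length $r_i \in \{1, \ldots, \tau\}$ of the current $\tau$-mer $S_i$. Because $P^{\#}$ is invariant under permutations of the alphabet $\mathcal{X}$, this projection is itself Markovian -- not merely a hidden-Markov image -- with transitions $r \to r+1$ with probability $1/|\mathcal{X}|$ and $r \to 1$ with probability $(|\mathcal{X}|-1)/|\mathcal{X}|$ whenever $r < \tau$, and $\tau \to 1$ deterministically. Solving the resulting balance equations yields $\mu(r) = \mu(1)\cdot |\mathcal{X}|^{-(r-1)}$, and normalization gives $\mu(\tau) = (|\mathcal{X}|-1)/(|\mathcal{X}|^\tau - 1)$. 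Since $\pi^{\#}_{\text{loop}} = \mu(\tau) = O(|\mathcal{X}|^{-\tau})$, substitution gives $C_\tau^\text{\normalfont no-noise, no-loop} \ge 1 - O(|\mathcal{X}|^{-\tau})$, which converges to $1$. The only mildly delicate point is verifying the exact Markovianity of the run-length projection (rather than it being a genuine hidden Markov process); this follows directly from the alphabet symmetry of $P^{\#}$, and I do not anticipate any more substantial obstacle.
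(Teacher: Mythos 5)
Your argument is correct, and it takes a genuinely different route from the paper's. The paper appeals to the Perron--Frobenius characterization $C_\tau = \log_{|\mathcal{X}|}\lambda(A_{G_\tau})$, computes $\lambda(A_{G_\tau}) = |\mathcal{X}|$ directly, and then proves a separate monotonicity lemma (via a block-diagonal comparison of adjacency matrices, Appendix~A) to conclude that $\lambda(A_{G_\tau^{\text{no-loop}}})$ increases toward~$|\mathcal{X}|$. You instead bypass the spectral machinery entirely: you write down an explicit no-self-loop kernel $P^{\#}$ uniform over admissible successors, express its entropy rate as $1 - \pi^{\#}_{\text{loop}}\log_{|\mathcal{X}|}\tfrac{|\mathcal{X}|}{|\mathcal{X}|-1}$, and then bound $\pi^{\#}_{\text{loop}}$ by lumping the de Bruijn chain onto the trailing run-length coordinate $r \in \{1,\dots,\tau\}$. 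The lumping is indeed exact (not merely a hidden Markov image): the next run length depends only on $r$ and on whether the appended base matches the last one, and under $P^{\#}$ that match probability is $1/|\mathcal{X}|$ uniformly over non-loop states, independent of the rest of the $\tau$-mer — so the alphabet symmetry you invoke is sufficient (the cleanest way to see it is this state-independence of the match probability). Solving the birth--death balance equations gives $\pi^{\#}_{\text{loop}} = (|\mathcal{X}|-1)/(|\mathcal{X}|^\tau - 1)$, hence $C_\tau^{\text{no-noise, no-loop}} \ge 1 - O(|\mathcal{X}|^{-\tau})$. What your approach buys is an explicit, elementary, and \emph{quantitative} lower bound with an exponential rate of convergence in $\tau$, which the paper's monotonicity argument does not itself deliver; what it gives up is the structural information that the sequence $\lambda(A_{G_\tau^{\text{no-loop}}})$ is \emph{strictly increasing}, which the paper's Lemma~\ref{lem:eig} establishes and which may be of independent interest. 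Both are valid; yours is arguably the more self-contained proof of the stated limit.
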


Before we prove Theorem \ref{thm:nonoisecap}, we need additional notation and a helpful lemma. For any fixed $\tau\geq 1$, let $G_\tau$ denote the (irreducible, lossless) graph that presents $S$ (see \cite[Ch. 2]{Roth} for definitions). Further, let $A_{G_\tau}$ denote the $|\mathcal{X}|^\tau\times |\mathcal{X}|^\tau$ adjacency matrix of $G_\tau$. Likewise, let $G^\text{no-loop}_\tau$ denote the graph presenting the constrained system $S^\text{no-loop}_\tau$, and let $A_{G_\tau^\text{no-loop}}$ denote its adjacency matrix. For any square matrix $B$, let $\lambda(B)$ denote its largest eigenvalue.

The following lemma, proved in Appendix \ref{app:eig}, then holds.
\begin{lemma}
	\label{lem:eig}
	For any $\tau\geq 2$, we have that
	\[
	\lambda(A_{G_{\tau-1}^\text{\normalfont no-loop}})< \lambda(A_{G_{\tau}^\text{\normalfont no-loop}}).
	\]
\end{lemma}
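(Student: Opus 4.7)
The plan is to identify the constrained system $S_\tau^\text{\normalfont no-loop}$ with a standard runlength-limited shift and then use an elementary polynomial comparison on Perron eigenvalues. First I would observe that a sequence of $\tau$-mers lies in $S_\tau^\text{\normalfont no-loop}$ if and only if, reading off its underlying base sequence $B_1, B_2, \ldots$ via the de Bruijn left-shift relation, no symbol appears more than $\tau$ times in a row. The key point is that the only self-loops of $G_\tau$ are at the vertices $^\tau b$, $b \in \mathcal{X}$, and taking the self-loop at $^\tau b$ once corresponds to $\tau+1$ consecutive occurrences of the base $b$ in the base sequence. Consequently $S_\tau^\text{\normalfont no-loop}$ is exactly the $(0,\tau)$-runlength-limited constraint over $\mathcal{X}$.

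Next I would invoke the standard fact that the capacity of a sofic shift equals $\log \lambda$ for any irreducible lossless presentation, and moreover equals the exponential growth rate of admissible length-$n$ sequences (see \cite[Ch.~3]{Roth}). Both $G_\tau^\text{\normalfont no-loop}$ and the compact runlength-state presentation, whose states are pairs $(b,r)$ with $b \in \mathcal{X}$ and $1 \leq r \leq \tau$, are irreducible lossless presentations of the same constrained system, so they share the same Perron eigenvalue $\alpha_\tau := \lambda(A_{G_\tau^\text{\normalfont no-loop}})$. A direct count of length-$n$ base sequences with runs bounded by $\tau$, split by the length of the last run, yields the linear recurrence
\[
a_n^{(\tau)} = (|\mathcal{X}|-1)\sum_{r=1}^{\tau} a_{n-r}^{(\tau)}
\]
for all large $n$. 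Hence $\alpha_\tau$ is the largest real root of
\[
g_\tau(x) := x^\tau - (|\mathcal{X}|-1)\bigl(x^{\tau-1} + x^{\tau-2} + \cdots + 1\bigr).
\]

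Finally, comparing $\alpha_{\tau-1}$ and $\alpha_\tau$ reduces to the one-line algebraic identity
\[
g_\tau(x) = x \cdot g_{\tau-1}(x) - (|\mathcal{X}|-1),
\]
verified by direct expansion. Evaluating at $x = \alpha_{\tau-1}$ and using $g_{\tau-1}(\alpha_{\tau-1}) = 0$ gives $g_\tau(\alpha_{\tau-1}) = -(|\mathcal{X}|-1) < 0$. Since $g_\tau(x) \to \infty$ as $x \to \infty$, the largest real root $\alpha_\tau$ of $g_\tau$ must strictly exceed $\alpha_{\tau-1}$, yielding the desired strict inequality $\lambda(A_{G_{\tau-1}^\text{\normalfont no-loop}}) < \lambda(A_{G_\tau^\text{\normalfont no-loop}})$. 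The only conceptual subtlety is justifying the invariance of $\lambda$ across the de Bruijn and runlength-state presentations, which is the standard Perron-eigenvalue invariance for lossless presentations of the same sofic shift; the polynomial manipulation itself is routine and, pleasantly, requires no case analysis in $(\tau, |\mathcal{X}|)$ (in particular, the argument goes through even at the edge case $|\mathcal{X}| = 2$, $\tau = 2$, where $\alpha_{\tau-1} = 1$).
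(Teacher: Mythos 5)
Your proof is correct, and it takes a genuinely different route from the paper's. You identify $S_\tau^{\text{no-loop}}$ with the runlength-limited constraint over $\mathcal{X}$ in which no base repeats more than $\tau$ times, pass to a small runlength-state presentation (or equivalently to the companion matrix of the counting recurrence $a_n^{(\tau)} = (|\mathcal{X}|-1)\sum_{r=1}^{\tau} a_{n-r}^{(\tau)}$, valid for $n>\tau$), and read off the Perron eigenvalue as the largest root of $g_\tau(x)=x^\tau-(|\mathcal{X}|-1)(x^{\tau-1}+\cdots+1)$; the identity $g_\tau(x)=x\,g_{\tau-1}(x)-(|\mathcal{X}|-1)$ then gives $g_\tau(\alpha_{\tau-1})=-(|\mathcal{X}|-1)<0$, forcing $\alpha_\tau>\alpha_{\tau-1}$. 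The paper instead works directly with the $|\mathcal{X}|^\tau\times|\mathcal{X}|^\tau$ adjacency matrix: it block-decomposes $A_{G_\tau^{\text{no-loop}}}$, notes that the diagonal blocks sum to $A_{G_{\tau-1}^{\text{no-loop}}}$, and chains $\lambda(A_{G_{\tau-1}^{\text{no-loop}}})\leq\lambda(\overline{A})<\lambda(A_{G_\tau^{\text{no-loop}}})$, where the first inequality is proved via an entropy-rate decomposition of the max-entropic Markov chain and the second via subgraph monotonicity of the Perron eigenvalue. Your route is shorter, yields an explicit characteristic polynomial with a clean two-term recursion, and makes the strict gap quantitatively transparent (exactly $-(|\mathcal{X}|-1)$ when evaluated at $\alpha_{\tau-1}$); the paper's route stays entirely at the level of the de Bruijn presentation and does not rely on recognizing the RLL structure, so it would adapt more readily to de Bruijn-type constraints that forbid a more irregular set of self-loops or short cycles. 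The one point worth phrasing carefully in a final write-up is the invariance of $\lambda$ across presentations: the de Bruijn graph and the runlength-state graph present shifts over different alphabets ($\mathcal{X}^\tau$ versus $\mathcal{X}$), related by a higher-block conjugacy, and it is this conjugacy (or, more elementarily, the length-$(\tau-1)$ shift in the path-counting bijection) that justifies equality of the spectral radii; as stated, ``presentations of the same constrained system'' is a little loose but the underlying argument is sound.
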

With this lemma in place, we are in a position to prove Theorem \ref{thm:nonoisecap}. The proof relies on key ideas in the theory of constrained systems (see \cite{Roth} for more details). 

\begin{proof}[Proof of Thm. \ref{thm:nonoisecap}]
	By the definition of a de Bruijn Markov process, we see that each row of $A_{G_\tau}$ consists of $|\mathcal{X}|$ $1$s and $|\mathcal{X}|^\tau - |\mathcal{X}|$ $0$s. From \cite[Thm.  3.23]{Roth}, we have that $C_\tau^\text{no-noise} = \log_{|\mathcal{X}|}(\lambda(A_{G_\tau}))$, where $\lambda(A_{G_\tau})$ is the largest eigenvalue of $A_{G_\tau}$. By standard arguments (see, e.g., \cite[Prop. 3.14]{Roth}), we have that $\lambda(A_{G_\tau}) = |\mathcal{X}|$, implying that for any $\tau\geq 1$, we have $C_\tau^\text{no-noise} = 1$.
	
	Hence, it remains to be shown that $\lim_{\tau \to \infty} C_\tau^\text{\normalfont no-noise, no-loop} =1$. Now, observe that those rows of $A_{G_\tau^\text{no-loop}}$ corresponding to a state of the form $^\tau b$, for some $b\in \mathcal{X}$, have exactly $|\mathcal{X}|-1$ $1$s, and all other rows have exactly $|\mathcal{X}|$ $1$s. Again, from \cite[Prop. 3.14]{Roth} and \cite[Thm. 3.23]{Roth}, we see that \begin{equation}\log_{|\mathcal{X}|}(|\mathcal{X}|-1)\leq C_\tau^\text{\normalfont no-noise, no-loop} = \log_{|\mathcal{X}|}(\lambda(A_{G_\tau^\text{no-loop}}))\leq 1.\end{equation}
	%	\nir{For any given finite $|\mathcal{X}|$, the term on the LHS is constant. Perhaps dependency on $\tau$ is missing ? }
	From Lemma \ref{lem:eig}, we see that $\lambda(A_{G_\tau^\text{no-loop}})$ strictly increases with $\tau$, thereby proving the theorem.
\end{proof}

In the next section, we prove Theorem \ref{thm:erasuremain}.
\subsection{Proof of Theorem \ref{thm:erasuremain}}
\label{sec:erasure-b}
%In this section, we advance an operational argument for the proof of Theorem \ref{thm:erasuremain}. 
%More precisely, we demonstrate that for any fixed $\tau\geq 1$, it is possible to achieve rates over an NNC with $W = \text{EC}(\epsilon)$ that are equal to those over an NNC with a much lower erasure probability  $\overline{\epsilon}:=\epsilon^\tau$. 

%Our proof will follow via restricting the class of input distributions $P_{S|S^-}$ to a special subclass and then arguing that rates achievable using such a subclass over a noiseless nanopore channel are optimal.

Recall that we employ the input process $P^\star_{S|S^-}$, which is a no-self-loop de Bruijn Markov process. The proof of Theorem \ref{thm:erasuremain} relies on the use of a low-complexity decoder, $\mathscr{D}_\text{clean}$, which proceeds as follows. The decoder $\mathscr{D}_\text{clean}$ replaces all bursts of erasures of length $\tau-1$ or less with the collection of true symbols that were erased, in the same order, but repeated arbitrarily so that the length of the decoded burst of erasures equals the length of the burst itself. Indeed, observe from Lemma \ref{lem:noloop} that given the symbols immediately before the start and after the end of such a burst of erasures, the actual sequence of $\tau$-mers in the transmitted codeword corresponding to the burst can be decoded. The decoder $\mathscr{D}_\text{clean}$ then repeats each symbol in this actual sequence of $\tau$-mers arbitrarily, so that the length of the decoded output equals the length of the burst. 

The next proposition establishes a helpful identity for the case when the input process of the nanopore channel is $P_{S|S^-}^{\star,(\tau)}$.
Let $I_{P^{\star, (\tilde{\tau})}}(S^{m};Y^{T_m})$ denote the mutual information between $S^m$ and $Y^{T_m}$ when the input process is $P^{\star, ({\tau})}_{S|S^-}$. 
%, over the nanopore channel $W_{\text{nn, EC}}^{(\tilde{\tau})}$
%The next theorem establishes an ordering among the NNCs $W_{\text{nn,EC}}^{(\tau)}$ in terms of their information rates using no-self-loop de Bruijn Markov processes. For any $\tilde{\tau}\geq 1$, we denote by $I_{\tilde{\tau}}$ the mutual information $I_{P^{\star, (\tilde{\tau})}}(S^{m};Y^{T_m})$ obtained using the no-self-loop de Bruijn Markov input process $P^{\star, (\tilde{\tau})}_{S|S^-}$, over the nanopore channel $W_{\text{nn, EC}}^{(\tilde{\tau})}$.
%\begin{theorem}
%	\label{thm:degradation}
%	We have that $I_\tau \leq I_{\tau'}$, for $1\leq \tau'\leq \tau$.
%\end{theorem}
%\begin{proof}
%	Consider the nanopore channel $W_{\text{nn,EC}}^{(\tau)}$. Following standard arguments on the degradation order of erasure channels (see, e.g., \cite[Sec. 3.1]{mct} and \cite[Sec. 4.1.11]{mct}), we note that the memoryless erasure channel $W^{(\tau)}$ can be written as the cascade of $W^{(\tau')}$ and another memoryless channel $\widehat{W}$. The theorem then follows by an application of the data processing inequality \cite[Thm. 2.8.1]{cover_thomas}.
%\end{proof}
\begin{proposition}
	\label{prop:erasure}
	We have that
	$$\lim_{m\to \infty}\frac{1}{m}I_{P^{\star, (\tau)}}(S^m;Y^{T_m})\geq C_\tau^\text{\normalfont no-noise, no-loop} - \E[K]\epsilon^\tau\cdot \tau\log|\mathcal{X}|.$$
\end{proposition}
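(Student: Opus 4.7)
The plan is to lower bound the mutual information by the standard decomposition $I(S^m;Y^{T_m}) = H(S^m) - H(S^m\mid Y^{T_m})$ and to handle each term separately. The first term is straightforward for the chosen input: by stationarity and the Markov property, $H(S^m) = H(S_1) + (m-1)H(S_2\mid S_1)$, so $\lim_{m\to\infty} H(S^m)/m = H(\mathscr{S})$. Since $P^{\star,(\tau)}_{S\mid S^-}$ is the maximal-entropy kernel on the no-self-loop constrained system, this entropy rate equals $C_\tau^{\text{no-noise, no-loop}}$ (as recalled after Theorem \ref{thm:nonoisecap}, invoking \cite[Thm.~3.23]{Roth}). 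The remaining task is to establish $\limsup_{m\to\infty}\frac{1}{m}H(S^m\mid Y^{T_m}) \leq \E[K]\,\tau\,\epsilon^\tau\,\log|\mathcal{X}|$.

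To bound $H(S^m\mid Y^{T_m})$, I will use the decoder $\mathscr{D}_{\text{clean}}$ to isolate where the residual uncertainty lies. Call a block $i\in[m]$ \emph{failed} (given $Y^{T_m}$) if all $K_i$ copies of $S_i$ in $Y^{T_m}$ are erased \emph{and} the burst of erasures containing them has length at least $\tau$; let $F$ be the (random) set of failed blocks. Since $F$ is a function of $Y^{T_m}$ and $\mathscr{D}_{\text{clean}}$ recovers $S_i$ exactly for every $i\notin F$, the support-size bound on entropy gives $H(S^m\mid Y^{T_m}) \leq \log|\mathcal{X}|\cdot\E[U]$, where $U$ is the number of bases in the underlying codeword $(B_1,\dots,B_{m+\tau-1})$ that remain undetermined by $Y^{T_m}$. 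The crux—and the main obstacle—is to show $U\leq|F|$ by exploiting de Bruijn overlap. Concretely, partition $F$ into maximal runs $[i_1,i_2]$ of consecutive indices, and set $r := i_2 - i_1 + 1$. The neighbors $S_{i_1-1}$ and $S_{i_2+1}$ lie outside $F$ and are hence known; applying Lemma \ref{lem:noloop} (at the base level) then fixes the bases at positions $\{i_1-1,\dots,i_1+\tau-2\}\cup\{i_2+1,\dots,i_2+\tau\}$, leaving only the $(r-\tau+1)^+$ bases $B_{i_1+\tau-1},\dots,B_{i_2}$ undetermined---i.e., at most one new base per failed block in the run. Summing over runs gives $U\leq|F|$.

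It remains to bound $\E[|F|]$ via a union bound. Because every failed block's $K_i\geq 1$ positions all lie inside a burst of length $\geq \tau$ in $Y^{T_m}$, $|F|$ is at most the number of $Y^{T_m}$-positions that sit inside some long burst. A single position lies in such a burst only if at least one of the (at most) $\tau$ length-$\tau$ windows containing it is fully erased, an event of probability at most $\tau\epsilon^\tau$ by a union bound and the memorylessness of $W$. Summing over positions and using $\E[T_m]=m\E[K]$ yields $\E[|F|]\leq m\,\E[K]\,\tau\,\epsilon^\tau$, and combining with the previous step, $H(S^m\mid Y^{T_m})/m \leq \E[K]\,\tau\,\epsilon^\tau\,\log|\mathcal{X}|$ uniformly in $m$. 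Subtracting this from the limit of $H(S^m)/m$ computed above yields the proposition.
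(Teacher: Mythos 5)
Your decomposition mirrors the paper's: split $I(S^m;Y^{T_m}) = H(S^m) - H(S^m\mid Y^{T_m})$, identify $\lim_m H(S^m)/m$ with $C_\tau^{\text{no-noise, no-loop}}$, and union-bound the residual uncertainty over long erasure bursts. Your per-base accounting $H(S^m\mid Y^{T_m})\leq \log|\mathcal{X}|\cdot\E[U]$ is a clean reformulation of that residual-entropy step, and the chain ``$|F|\leq$ number of $Y$-positions in long bursts, each position is in a long burst with probability at most $\tau\epsilon^\tau$'' correctly yields $\E[|F|]\leq m\,\E[K]\,\tau\,\epsilon^\tau$.

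The gap is in the crux step $U\leq|F|$, equivalently in the assertion that $\mathscr{D}_{\text{clean}}$ recovers $S_i$ for every $i\notin F$. Lemma \ref{lem:noloop} determines the walk from $c_i$ to $c_j$ only for \emph{known} indices $i,j$; but when a short burst erases all copies of one or more blocks, the number of blocks swallowed by that burst is \emph{not} determined by $Y^{T_m}$. Take $\mathcal{X}=\{0,1\}$, $\tau=3$, $m=7$, $\Lambda=\{1,2\}$, and the realizations
\begin{align*}
S^7 &= (001,\,010,\,100,\,001,\,011,\,110,\,100), \qquad K^7=(1,2,1,1,1,1,1),\\
\tilde S^7 &= (001,\,011,\,110,\,100,\,001,\,010,\,100), \qquad \tilde K^7=(1,1,1,1,1,2,1).
\end{align*}
Both are valid no-self-loop de Bruijn chains with $T_m=8$, and erasing positions $2,3,6,7$ of $Z^{8}$ gives the \emph{same} $Y^8=(001,\,?,\,?,\,100,\,001,\,?,\,?,\,100)$. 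Each burst has length $2<\tau$, so $F=\varnothing$ and your bound forces $U=0$, yet $S^7\neq\tilde S^7$ are both positive-probability preimages of this $y^8$, so $H(S^7\mid Y^8=y^8)>0$. The ambiguity arises because both bursts join $001$ to $100$, which admit \emph{both} a two-step and a three-step de Bruijn walk, and the global constraint on the total number of invisible blocks does not single one out. The same issue is latent in the stated definition of $\mathscr{D}_{\text{clean}}$ (and in Lemma \ref{lem:erasure-a}); a repair needs to additionally control the probability that a short burst's boundary $\tau$-mers admit more than one compatible overlap offset, which changes (but does not destroy) the $O(\tau\epsilon^\tau)$ error term.
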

The proof of Proposition \ref{prop:erasure} requires a helpful lemma. 
%For each $\ell\in [m]$, we let $f_\ell:(\mathcal{X}^\tau)^{\ell-1}\times \mathcal{Y}^*\to \mathcal{X}^\tau$ be the maximum aposteriori probability (MAP) estimator of $S_\ell$ given $(S^{\ell-1},Y^{T_m})$. It is well known that $f_\ell$ has the lowest error probability among all possible estimators of $S_\ell$ given $(S^{\ell-1},Y^{T_m})$. For each $\ell\in [m]$, we then define
%\begin{align}
%	\mathcal{E}_\ell:=  \{S_\ell \neq f_\ell(S^{\ell-1},&Y^{T_m})\}
%\end{align}
%as the event that $f_\ell$ returns the incorrect value of $S_\ell$. 
Let
\begin{align}
	\mathcal{E}:= \left\{S^m\neq f(Y^{T_m})\right\},
\end{align}
where $f: \mathcal{Y}^*\to \left(\mathcal{X}^\tau\right)^m$ is the MAP estimator of $S^m$ given $Y^{T_m}$. It is well known that $f_\ell$ has the lowest error probability among all possible estimators of $S^m$ given $Y^{T_m}$.
%\begin{equation}
%	\Pr[\mathcal{E}]\leq \Pr\left[\cup_{\ell=1}^m \mathcal{E}_\ell\right]\leq 
%\end{equation}
%\nir{As defined, I think that this event has probability $1$, since the function can be differently defined for any given past and output. I think that the definition of the functions $f_{\ell}$ should be outside of the definition of the event. I mean, define $f_{\ell}$ as the optimal guessing function of $S_{\ell}$ given $S^{\ell-1}$ and $Y^{T_m}$. Then, the event is well define, and Lemma IV.3 will state that the correct probability of this guessing function tends to one as $\tau\to\infty$.}
The following lemma then holds.

%Since the memory $\tau$ increases with $m$, it is reasonable to expect that with sufficiently large probability, for large $m$, we will have that $S_\ell$, for any $\ell\in [m]$, will be completely determined by $(S^{\ell-1},Y^{T_m})$. The following lemma makes this intuition precise.
\begin{lemma}
	\label{lem:erasure-a}
	We have that 
	$
	\Pr[\mathcal{E}] \leq m\E[K]\cdot \epsilon^\tau.
	$
\end{lemma}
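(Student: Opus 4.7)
The plan is to exploit the fact that the MAP estimator $f$ has the minimum error probability among all decoders, so it suffices to upper bound the error probability of the explicit decoder $\mathscr{D}_\text{clean}$ described earlier in the section. I would then reduce the error event of $\mathscr{D}_\text{clean}$ to a purely ``erasure-pattern'' event and bound that via a union bound together with Wald's identity.

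First I would show that $\mathscr{D}_\text{clean}$ recovers $S^m$ exactly whenever no maximal run of erasures in $Y^{T_m}$ has length at least $\tau$. Suppose the burst spans positions $j_1,\ldots,j_2$ in $Y^{T_m}$ with $j_2-j_1+1\leq \tau-1$; let the flanking non-erased symbols $Y_{j_1-1}$ and $Y_{j_2+1}$ correspond to $S_{i_1}$ and $S_{i_2}$ respectively (they are not erased, so their values are the true $\tau$-mers). Because each $S_i$ appears for a contiguous block of positions in $Z^{T_m}$, the number of distinct $\tau$-mers involved in the burst is at most the burst length, so $i_2-i_1\leq (\tau-1)+1=\tau$. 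Since $P_{S|S^-}^{\star,(\tau)}$ generates no-self-loop de Bruijn sequences, Lemma \ref{lem:noloop} applied to $S^m$ implies that $S_{i_1}$ and $S_{i_2}$ completely determine $S_{i_1+1},\ldots,S_{i_2-1}$; this is precisely what $\mathscr{D}_\text{clean}$ uses to fill the burst, and therefore it recovers $S^m$ correctly.

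Consequently,
\[
\Pr[\mathcal{E}]\;\leq\;\Pr[\mathscr{D}_\text{clean}\text{ fails}]\;\leq\;\Pr\big[\exists\,j:\ Y_j=Y_{j+1}=\cdots=Y_{j+\tau-1}=\,?\big].
\]
Conditionally on $K^m$ (and hence $T_m$), the erasures at the $T_m$ output positions are i.i.d.\ $\text{Ber}(\epsilon)$ and independent of $S^m$, so a union bound over at most $T_m$ possible starting positions of a length-$\tau$ window gives
\[
\Pr\big[\text{burst of }\geq\tau\text{ erasures}\,\big\vert\, T_m\big]\;\leq\;T_m\,\epsilon^\tau.
\]
Taking expectations and using Wald's identity $\E[T_m]=m\E[K]$ (since $K^m$ is i.i.d.) yields the stated bound $\Pr[\mathcal{E}]\leq m\E[K]\epsilon^\tau$.

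The only delicate point is the correctness argument in the first step: one must carefully translate positions in the expanded sequence $Z^{T_m}$ back to indices of $S^m$, and verify that the gap $i_2-i_1$ between the flanking $\tau$-mer indices is indeed at most $\tau$, so that Lemma \ref{lem:noloop} applies. The boundary cases where the burst abuts the first or last position of $Y^{T_m}$ (so that a flanking symbol is unavailable) contribute negligibly and can be absorbed in the slack of the union bound by treating each output position as a potential burst start.
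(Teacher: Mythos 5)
Your proposal is correct and follows essentially the same route as the paper's proof: bound $\Pr[\mathcal{E}]$ by the error probability of the explicit decoder $\mathscr{D}_\text{clean}$ (using MAP optimality), reduce failure of $\mathscr{D}_\text{clean}$ to the occurrence of an erasure burst of length at least $\tau$, union-bound over the $T_m$ possible start positions conditionally on $T_m$, and apply Wald's identity $\E[T_m]=m\E[K]$. The extra detail you supply on translating burst positions back to $S^m$-indices so that Lemma~\ref{lem:noloop} applies is a sound elaboration of a step the paper leaves implicit.
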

\begin{proof}
	Let $\overline{\mathcal{E}}$ denote the following event: \begin{equation}\overline{\mathcal{E}}:= \{\text{Some burst of erasures in $Y^{T_m}$ has length at least $\tau$}\}.\end{equation}
	Note that the probability that a given burst of erasures has length at least $\tau$, equals $\epsilon^\tau$. From the structure of the no-self-loop Markov input process, we see from Lemma \ref{lem:noloop} that if the event $\overline{\mathcal{E}}$ \emph{does not hold} for any $\ell\in [m]$, then the sequence $S^m$ is exactly reconstructible from $Y^{T_m}$ via $\mathscr{D}_\text{clean}$, and hence by the MAP decoder $f$. Thus, we have that
	\begin{align}
		\Pr[\mathcal{E}]&\leq \Pr[\overline{\mathcal{E}}]\\
		&= \E[\Pr[\mathcal{E}\mid T_m]]\\
		&\leq \E[T_m\cdot \epsilon^\tau]= m\E[K]\cdot \epsilon^\tau.
	\end{align}
	Here, the second inequality is via a union bound on the probability of a burst of erasures of length at least $\tau$, starting at some index $i\in [T_m]$, for fixed $T_m$. The statement of the proposition then follows readily.
\end{proof}
Lemma \ref{lem:erasure-a} then affords a proof of Proposition  \ref{prop:erasure}.
\begin{proof}
	Fix the input distribution $P^{\star, (\tau)}_{S|S^-}$. We then have that
	\begin{align}
		\frac{1}{m}I_{P^{\star, (\tau)}}(S^m;Y^{T_m})
		&=\frac{1}{m}\left[H(S_1)+(m-1)H(S_2\mid S_1)-H(S^m\mid Y^{T_m})\right]\\
		%	&= \frac{1}{m}\left[H(S_1)+(m-1)H(S_2\mid S_1)-\sum_{\ell=1}^m H(S_\ell\mid S^{\ell-1}, Y^{T_m})\right]\\
		&\geq \frac{1}{m}\bigg[(m-1)H(S_2\mid S_1)\ -  \bigg( 1+\Pr[\mathcal{E}]\cdot \tau \log |\mathcal{X}|\bigg)\bigg]\\
		&\geq \frac{1}{m}\bigg[(m-1)H(S_2\mid S_1)\ -  \bigg( 1+m\E[K]\epsilon^\tau\cdot \tau \log |\mathcal{X}|\bigg)\bigg].
	\end{align}
	%Here, the last inequality follows by an application of Fano's inequality \cite[Thm. 2.10.1]{cover_thomas}. From Lemma \ref{lem:erasure-a} and by our choice of $\tau(m)$ and $\beta(m)$, we obtain that each term inside the summation above vanishes to zero, in 
	Taking the limit as $m\to \infty$, we get that
	\begin{align}
		\lim_{m\to \infty}\frac{1}{m}I_{P^{\star, (\tau)}}(S^m;Y^{T_m})
		&\geq H(S_2|S_1) - \E[K]\epsilon^\tau\cdot \tau\log|\mathcal{X}|\\
		&= C_\tau^\text{no-noise, no-loop} - \E[K]\epsilon^\tau\cdot \tau\log|\mathcal{X}|,
	\end{align}
	thereby proving the proposition. Here, the last equality follows from the fact that $P^{\star, (\tau)}_{S|S^-}$ has the maximal entropy, i.e., achieves the noiseless capacity of the constraint $S_\tau^{\text{no-loop}}$ (see \cite[Thm. 3.23]{Roth}).
	%
	%	Thus, under the distribution $P^\star_{S|S^-}$ and our chosen paths of $\tau$ values, we obtain that
	%	\begin{equation}
	%		1\geq	\lim_{m,\tau\to \infty} \frac{1}{m}I(S^m;Y^{T_m})\geq H(\mathcal{S}) = \lim_{\tau\to \infty} C_\tau^{\text{no-noise, no-loop}}.
	%	\end{equation}
	%	Appealing to Theorem \ref{thm:nonoisecap} then completes the proof of this part of Theorem \ref{thm:erasuremain}.
\end{proof}
The proof of Theorem \ref{thm:erasuremain} then follows immediately.
\begin{proof}[Proof of Thm. \ref{thm:erasuremain}]
		Via Proposition \ref{prop:erasure}, we see that
	\begin{align}
		{C}^{(\tau)}(W_\text{\normalfont nn, EC})%&= \sup_{P_{S|S^-}}\lim_{m\to \infty} \frac{1}{m} I(S^m;Y^{T_m})\\
		&\geq  \lim_{m\to \infty} \frac{1}{m} I_{P^{\star, (\tau)}}(S^m;Y^{T_m})\\
		%	&= \sup_{\tau\geq 1} \lim_{m\to \infty} \frac{1}{m}I_{P^{\star, (\tau)}}(S^m;Y^{T_m})\\
		&\geq C_\tau^\text{\normalfont no-noise, no-loop} - O(\tau\epsilon^\tau).
	\end{align}
	The proof that $\lim_{\tau\to \infty} {C}^{(\tau)}(W_\text{\normalfont nn, EC}) = 1$ then follows via Theorem \ref{thm:nonoisecap}, using the trivial observation that ${C}^{(\tau)}(W_\text{\normalfont nn, EC})\leq 1$, for all $\tau\geq 1$.
\end{proof}
In the next section, we shall focus on an interesting regime of operation of NNCs with \emph{arbitrary} (but regular) noise distributions, which is that when the sampling rates are chosen to be high, so as to give rise to large numbers of $\tau$-mer duplications. Once again, we shall see that interestingly, rates arbitrarily close to $1$ can be achieved in this setting, using practical encoding and decoding schemes.
\section{A Change-Point Detection-Based Decoder for High Sampling Rates}
\label{sec:change-point}
In this section, we propose a decoding algorithm for general nanopore channels $W_\text{nn}$, for the case when the rates of measurements of the electric currents at the end of the nanopore channel (also called ``sampling rates") are high. High sampling rates give rise to duplication random variables $K_i$, $i\in [m]$, which are typically high. In addition, the work \cite[Sec. II-B]{mao} also mentions the possibility of using change-point detection algorithms such as those employed in practice \cite{laszlo} for ``finding the transitions of the dwelling" $\tau$-mers. 

Fix an arbitrary no-self-loop de Bruijn Markov input process $P_{S|S^-}$ and a general nanopore channel $W_\text{nn}(\mathcal{X},\mathcal{Y},\tau,P_K,W)$. Assume, in addition, the natural regularity condition that the distributions $W_{Y|z}$ and $W_{Y|z'}$ are not identical, for any pair $z\neq z'$. 

Our decoding algorithm consists of two stages. In the first stage, an optimal change-point detection algorithm (see, e.g., \cite{veeravalli} for details on quickest change detection) is employed for estimating the time intervals $T_i$, $i\in [m]$, which form the ``boundaries" of the run of output symbols that arise from a single input symbol. Note that the time intervals $T_i$, $i\in [m]$, are indeed change-points, since the distribution of output symbols changes from $W_{Y|z}$ to $W_{Y|z'}$, for some  $z, z'\in \mathcal{X}^\tau$ with $z'\neq z$. After suitable processing of the estimates from the first stage, the second stage of our algorithm performs optimal (maximum aposteriori, or MAP) decoding on the output symbols within each estimated boundary, to decode the corresponding input symbol. The intuition is that if the estimates produced in the first stage are fairly accurate, then in the setting of high sampling rates, there are sufficiently many samples within each boundary to decode each input symbol correctly with high probability.

Fix a length $m\geq 1$ of the input sequence $S_1,\ldots,S_m$. Let $\ell_m:= m^2(\ln m)^3$ and $h_m:=\gamma m^2(\ln m)^3$, for some $\gamma >1$. We set  the sampling rates high enough so that 
\begin{equation}
	\label{eq:sampling}
	P_K(\ell_m \leq K\leq h_m)\geq 1-\frac{1}{m^{1+\eta}},
\end{equation}
for some $\eta>0$. Clearly, this implies via a union bound that
\begin{equation}
	\label{eq:e0}
	\lim_{m\to \infty} \Pr\left[\ell_m\leq K_i\leq h_m,\ \text{for all}\ i\in [m]\right] = 1.
\end{equation}
Further, set a false alarm probability $\alpha_m:= \frac{1}{m^3(\ln m)^4}$ and a ``trimming length" $c_m:=m(\ln m)^2$. 
\begin{algorithm}[t]
	\caption{A decoder for high sampling rates}
	\label{alg:decode}
	\begin{algorithmic}[1]	
		%		\State: Construct a one-one mapping $\phi_m: \mathbb{F}_2^{N_m-K_m}\to \mathcal{C}_{\log_2\left(\frac{N_m-K_m}{2^{-\left \lceil \log_2(d+1)\right \rceil}R}\right)}(R)$, where $\mathcal{C}_{\log_2\left(\frac{N_m-K_m}{2^{-\left \lceil \log_2(d+1)\right \rceil}R}\right)}(R)$ is constructed as in \eqref{eq:rmlb1}.
		\Procedure{\textsc{Decode}}{$y^{t_m}$}
		\State Set \texttt{start} $\gets 1$, \texttt{end} $\gets 1$, and $j\gets 1$.
		\While{\texttt{end}$< t_m$}
			\State Compute a change-point estimate $\widehat{t}_j$ using the Shiryaev algorithm \cite{shiryaev1}, \cite[Alg. 1]{veeravalli} on samples $y_i$, $i\geq \texttt{start}$, with $\alpha_m$ as input. \label{step:boundary}
			\State Decode $\widehat{s}_{j}\gets \textsc{MAP}(y_\texttt{start},\ldots,y_{\widehat{t}_j-c_m})$. \label{step:decode}
			\State Update $j\gets j+1$ and \texttt{start} $\gets \widehat{t}_j+1$.
		\EndWhile
		\State Return $(\widehat{s}_1,\ldots,\widehat{s}_j)$.
		\EndProcedure	
	\end{algorithmic}
\end{algorithm} 

Our decoding algorithm, shown as Algorithm \ref{alg:decode}, acts on any given instance $y^{t_m}$ of the output sequence $Y^{T_m}$. It consists of two stages:
\begin{enumerate}
	\item The first stage, shown as Step \ref{step:boundary}, uses the well-known, optimal change-point detection algorithm (for Bayesian quickest change detection) that is the Shiryaev algorithm \cite{shiryaev1}, \cite[Alg. 1]{veeravalli}, which on input of the false alarm probability, computes estimates of the intervals $T_i$, $i\in [m]$, sequentially.
	\item The second stage, shown as Step \ref{step:decode}, first uses the trimming interval to throw away the last $c_m$ of the samples in $y_\texttt{start},\ldots,y_{\widehat{t}_j}$, in each iteration. The remaining samples are treated as noisy views \cite{arnw-t-it} of a single input symbol via the DMC $W$, which are then decoded to a single symbol $\widehat{s}_j\in \mathcal{X}^\tau$, using the optimal MAP decoder.
\end{enumerate}
We now proceed to analyze the performance of our decoding algorithm. The following well-known result \cite[Thm. 3.2]{veeravalli}, \cite{tartakovsky-veer} will be useful to us. 
\begin{theorem}
	\label{thm:change}
	Let $\{X_i\}_{i\geq 1}$ be an i.i.d. sequence of random variables such that $X_1,\ldots,X_K\sim P_0$ and $X_{K+1},\ldots \sim P_1$, for some unknown, random $K\sim P_K$. Then, for any $a_m\xrightarrow{m\to \infty}0$, the change-point estimate $K_s$ returned by the Shiryaev algorithm has false alarm probability $\Pr[K_s<K]\leq a_m$, with
	\[
	\E[\max\{K_s-K,0\}]\leq \frac{-\ln \alpha}{D(P_1||P_0)}\cdot (1+\delta),
	\]
	for any $\delta>0$.
\end{theorem}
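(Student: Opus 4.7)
The plan is to invoke the classical analysis of the Shiryaev procedure for Bayesian quickest change detection; the full argument is developed in \cite{shiryaev1,tartakovsky-veer,veeravalli}, but I sketch the structure here. The algorithm tracks the posterior probability that a change has already occurred, $\rho_n := \Pr[K \leq n \mid X_1^n]$, which admits a Bayes-update recursion involving the likelihood ratio $\Lambda_n := \frac{dP_1}{dP_0}(X_n)$ and the prior hazard of $P_K$. The stopping rule is of the form $K_s = \inf\{n : \rho_n \geq 1 - \alpha\}$, where $\alpha$ is identified with the input $a_m$.

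First I would obtain the false-alarm bound, which is essentially immediate from the stopping rule. On the event $\{K_s < K\}$, the posterior of ``no change yet'' at the stopping time satisfies $1 - \rho_{K_s} = \Pr[K > K_s \mid X_1^{K_s}] \leq \alpha$. Using the tower property,
\[
\Pr[K_s < K] = \E\!\left[\mathbf{1}\{K > K_s\}\right] = \E\!\left[\Pr[K > K_s \mid X_1^{K_s}]\right] = \E[1 - \rho_{K_s}] \leq \alpha,
\]
which gives the stated false-alarm control.

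Next I would bound the expected detection delay via a Wald-type argument on the log-posterior-odds $L_n := \log \frac{\rho_n}{1 - \rho_n}$. A standard manipulation of the Bayes recursion for $\rho_n$ shows that $L_{n+1} - L_n = \log \Lambda_{n+1} + \xi_n$, where $\xi_n$ is a prior-hazard correction that contributes only an $O(1)$ shift over the relevant horizon. Conditioning on $\{K = k\}$ and on $n > k$, the samples $X_{k+1}, X_{k+2}, \ldots$ are i.i.d.\ $P_1$, and so by the strong law of large numbers $\frac{1}{n-k}\sum_{j=k+1}^n \log \Lambda_j \to D(P_1 \| P_0)$ almost surely. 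Hence the time for $L_n$ to cross the detection threshold $\log \frac{1 - \alpha}{\alpha} \sim -\ln \alpha$ is, to first order, $\frac{-\ln \alpha}{D(P_1 \| P_0)}$, and averaging against the prior on $K$ reproduces the leading-order expression.

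The main obstacle is upgrading this heuristic into the sharp bound with multiplicative error $(1 + \delta)$ rather than an additive $O(1)$ correction that would be swamped by the leading term only in the limit. This is where nonlinear renewal theory enters: following Woodroofe and its Bayesian adaptation by Tartakovsky and Veeravalli, one shows that the overshoot of $L_n$ above the detection threshold is stochastically bounded uniformly in $\alpha$, and that the hazard-perturbation term $\xi_n$ accumulates only $O(1)$ over the crossing horizon. Combining a uniform integrability argument for the overshoot with the SLLN drift yields the asymptotic upper bound $\E[\max\{K_s - K, 0\}] \leq \frac{-\ln \alpha}{D(P_1 \| P_0)}(1 + \delta)$ for any $\delta > 0$ as $\alpha \downarrow 0$, exactly as claimed.
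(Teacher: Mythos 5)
The paper does not actually prove Theorem~\ref{thm:change}; it is stated as a ``well-known result'' and cited directly from Tartakovsky--Veeravalli and the Shiryaev literature, so there is no ``paper's own proof'' to compare against. Your proposal is a reasonable sketch of the classical argument those cited works carry out: the false-alarm bound via the tower property applied to the stopped posterior, and the delay bound via a Wald/nonlinear-renewal analysis on the log-posterior-odds. That is precisely the route the paper is leaning on by citation, so you are in agreement with the paper's (implicit) approach rather than offering a genuinely different one.

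One technical point worth flagging in your sketch: the hazard-correction term $\xi_n$ in the recursion $L_{n+1}-L_n = \log\Lambda_{n+1}+\xi_n$ does \emph{not} in general contribute only an $O(1)$ total shift over the crossing horizon. For the canonical geometric prior it contributes a constant positive drift per step (of size $|\log(1-\rho)|$ for prior hazard $\rho$), so the true asymptotic rate of climb of $L_n$ after the change is $D(P_1\|P_0)+|\log(1-\rho)|$, and the sharp Shiryaev delay asymptotic has that \emph{sum} in the denominator. Your description would lead to the tighter constant if it were taken at face value; as written it understates the post-change drift. This does not break the theorem you are proving, because dropping the nonnegative $\xi_n$ contribution only loosens the delay estimate, so $\tfrac{-\ln\alpha}{D(P_1\|P_0)}(1+\delta)$ remains a valid upper bound --- but the stated reason (``$\xi_n$ is $O(1)$'') is not the correct justification; the correct one is that the extra drift is nonnegative and hence can be discarded in an upper bound. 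Also, both the theorem statement and your sketch require that the prior $P_K$ have sufficiently light tails for the nonlinear-renewal argument to control the overshoot uniformly in $\alpha$; this hypothesis should be made explicit rather than left implicit.
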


Our claim is captured in the following theorem.
\begin{theorem}
	\label{thm:decodemain}
	For any no-self-loop de Bruijn Markov input process $P_{S|S^-}$, in the high sampling-rate regime \eqref{eq:sampling}, we have
	\[
	\lim_{m\to \infty}\Pr\left[\textsc{Decode}(Y^{T_m})\neq S^m\right] = 0.
	\]
\end{theorem}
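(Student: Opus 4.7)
My plan is to identify a high-probability ``good event'' on which (i)~every true change-point $T_j$ is estimated with no false alarm and with detection delay at most $c_m$, so that the trimmed window $[\texttt{start},\widehat{t}_j-c_m]$ in iteration $j$ lies strictly inside the true interval $[T_{j-1}+1,T_j]$, and (ii)~each MAP decoder then sees a long stretch of clean i.i.d. samples from $W_{Y|S_j}$ and decodes $S_j$ correctly with overwhelming probability. Formally, I define $\mathcal{A}_0 := \{\ell_m \leq K_i \leq h_m \text{ for all } i \in [m]\}$, $\mathcal{A}_1 := \{\widehat{T}_j \geq T_j \text{ for all } j \in [m]\}$ (no false alarms), $\mathcal{A}_2 := \{\widehat{T}_j - T_j \leq c_m \text{ for all } j \in [m]\}$ (delay bounded by the trimming length), and $\mathcal{A}_3 := \{\widehat{S}_j = S_j \text{ for all } j \in [m]\}$, and I aim to show $\Pr[\mathcal{A}_0 \cap \mathcal{A}_1 \cap \mathcal{A}_2 \cap \mathcal{A}_3] \to 1$. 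Observe that on $\mathcal{A}_0 \cap \mathcal{A}_1 \cap \mathcal{A}_2$ one has $\widehat{T}_{j-1}+1 \in [T_{j-1}+1, T_{j-1}+c_m+1]$ and $\widehat{T}_j - c_m \in [T_j - c_m, T_j]$; since $K_j \geq \ell_m \gg c_m$, the window $[\widehat{T}_{j-1}+1, \widehat{T}_j - c_m]$ is non-empty, contained in $[T_{j-1}+1, T_j]$, and of length at least $\ell_m - 2c_m \sim m^2(\ln m)^3$.

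For $\mathcal{A}_0^c$, \eqref{eq:sampling} and a union bound yield $\Pr[\mathcal{A}_0^c] \leq m^{-\eta} \to 0$. For $\mathcal{A}_1$ and $\mathcal{A}_2$, I apply Theorem~\ref{thm:change} sequentially to each of the $m$ detections with tolerance $\alpha_m = m^{-3}(\ln m)^{-4}$: this yields per-detection false-alarm probability at most $\alpha_m$ and expected positive-part delay at most $\frac{(1+\delta)\ln(1/\alpha_m)}{D_{\min}} = O(\ln m)$, where $D_{\min} := \min_{z \neq z'} D(W_{Y|z'} \,\|\, W_{Y|z}) > 0$ by the regularity assumption on $W$. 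A union bound over $j$ gives $\Pr[\mathcal{A}_1^c] \leq m\alpha_m = o(1)$, and Markov's inequality with threshold $c_m = m(\ln m)^2$ followed by a union bound gives $\Pr[\mathcal{A}_2^c] \leq m \cdot O(\ln m)/c_m = O(1/\ln m) \to 0$. For $\mathcal{A}_3$, conditional on the preceding events, each per-symbol MAP decoder operates on at least $\ell_m - 2c_m$ i.i.d. samples from $W_{Y|S_j}$; since the distributions $\{W_{Y|z}\}_{z\in\mathcal{X}^\tau}$ are pairwise distinct, a standard multi-hypothesis Chernoff bound yields a per-symbol error of at most $(|\mathcal{X}|^\tau - 1)\exp(-(\ell_m - 2c_m)\,C_{\min})$, where $C_{\min} > 0$ is the minimum pairwise Chernoff information. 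Summing over $m$ symbols this is super-polynomially small, and combining the four bounds completes the proof.

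The main obstacle is the \emph{sequential} application of Theorem~\ref{thm:change}, which is formulated for a single change-point in an otherwise i.i.d. stream. The key observation is that on $\mathcal{A}_1 \cap \mathcal{A}_2$ each restart $\texttt{start} = \widehat{T}_{j-1}+1$ lies strictly inside $(T_{j-1}, T_j)$, so the post-restart samples genuinely form an i.i.d. $W_{Y|S_j}$ sequence up to the next true change at $T_j$; Theorem~\ref{thm:change} therefore applies conditionally to each iteration and the bounds telescope through the union bound. A secondary technical point is ensuring that the delay bound of Theorem~\ref{thm:change} is uniform across iterations in $m$; this is immediate because $D_{\min}$ depends only on $W$, and the effective prior on the next change-time is the law of $K_j$, which on $\mathcal{A}_0$ is supported in $[\ell_m,h_m]$ and hence does not disturb the $O(\ln(1/\alpha_m))$ expected-delay estimate.
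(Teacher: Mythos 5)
Your proof follows exactly the paper's four-event decomposition (bounded duplication counts, no false alarms, bounded detection delays, correct per-symbol MAP decisions) and controls each event with the same tools: a union bound over \eqref{eq:sampling}, Theorem~\ref{thm:change} plus a union bound for false alarms, Markov's inequality with threshold $c_m$ for delays, and a Chernoff/Bhattacharyya multi-hypothesis bound for MAP errors. The two minor points where you differ — a per-iteration window bound of $\ell_m-2c_m$ in place of the paper's looser $\ell_m-mc_m$, and a more explicit justification that restarting the Shiryaev procedure inside $(T_{j-1},T_j)$ keeps Theorem~\ref{thm:change} applicable iteration-by-iteration — are refinements of, not departures from, the paper's argument.
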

Towards proving Theorem \ref{thm:decodemain}, we define the following error events. Let
\begin{align}
	\mathcal{E}_1&:= \{K_i>h_m\ \text{or}\ K_i<\ell_m,\ \text{for some } i\in [m]\},\\
	\mathcal{E}_2&:= \{\text{A false alarm occurs for some $T_i$, $i\in [m]$}\},\\
	\mathcal{E}_3&:=\{\text{Detection delay for $T_i$ is larger than $c_m$,}\   \text{for some $i\in [m]$}\},\\
	\mathcal{E}_4&:=\{\text{MAP decoder decodes some $S_i$, $i\in [m]$, incorrectly}\}.
\end{align}
We now prove Theorem \ref{thm:decodemain}.
\begin{proof}
	The events $\mathcal{E}_i$, $i\in [4]$ constitute the error events for the decoder in Algorithm \ref{alg:decode}, in that $\{\textsc{Decode}(Y^{T_m})\neq S^m\} \subseteq \cup_{i=1}^4 \mathcal{E}_i$, and our proof shows that $\lim_{m\to \infty} \Pr\left[\bigcap_{i=1}^4 \mathcal{E}_i^c\right] = 1$. Fix a sufficiently large length $m$ of the input sequence. First, from \eqref{eq:sampling}, we see that $\Pr[\mathcal{E}_1^c] \geq 1-\frac{1}{m^\eta} =: 1-\zeta_{1,m}$.
	
	Next, consider $\Pr\left[\mathcal{E}_2^c\mid \mathcal{E}_1^c\right]$. Via a union bound, conditioned on the event $\{K_i\leq h_m,\ \text{for all }i\in [m]\}$, we have that 
	\begin{align}
		\Pr\left[\mathcal{E}_1^c\mid \mathcal{E}_0^c\right]&\geq 1-mh_m\cdot \alpha_m\\
		&=1-\frac{\gamma}{\ln m}:=1-\zeta_{2,m}. \label{eq:dec1}
	\end{align}
	Now, consider $\Pr\left[\mathcal{E}_3^c\mid \mathcal{E}_1^c, \mathcal{E}_2^c\right]$. By conditioning on $\mathcal{E}_1^c$, we cannot have false alarms in the detection of \emph{any} of the intervals $T_i$, $i\in [m]$. This implies that the number of iterations of the loop in Algorithm \ref{alg:decode} is at most $m$. Now, via Theorem \ref{thm:change} and an application of the Markov inequality, we see that for any $i\in [m]$, if $\widehat{T}_i$ is the estimate of $T_i$ returned by Algorithm \ref{alg:decode}, 
	\begin{equation}
		\Pr\left[\max\{\widehat{T}_i-T_i,0\}\geq c_m\right]\leq \frac{-(1+\delta)\cdot \ln \alpha}{c_m\cdot \min_{z\neq z'} D(W_{Y|z}||W_{Y|z'})},
	\end{equation}
where $\delta>0$ is some fixed constant. Hence, via a union bound, we have
\begin{align}
		\Pr\left[\max\{\widehat{T}_i-T_i,0\}\geq c_m,\ \text{for some }i\in [m]\right]&\leq \frac{- (1+\delta)m\cdot \ln \alpha_m}{c_m\cdot \min_{z\neq z'} D(W_{Y|z}||W_{Y|z'})}\\
		&\leq  \frac{rm\ln m}{m(\ln m)^2} = \frac{r}{\ln m},
\end{align}
for some absolute constant $r>0$. Hence, we have that \begin{equation}\Pr\left[\mathcal{E}_3^c\mid \mathcal{E}_1^c, \mathcal{E}_2^c\right]\geq 1-\frac{r}{\ln m}:=1-\zeta_{3,m}.\end{equation}
Finally, consider the probability $\Pr\left[\mathcal{E}_4^c\mid \mathcal{E}_1^c, \mathcal{E}_2^c, \mathcal{E}_3^c\right]$. Note now that conditioned on $\mathcal{E}_1^c$ and $\mathcal{E}_3^c$, since $\ell_m-mc_m>0$, there are exactly $m$ ``boundaries" (including the boundary at $T_m$) estimated by the change-point detection procedure. Further, the length of each of these boundaries is at least $\ell_m-mc_m =  m^2(\ln m)^3-m^2(\ln m)^2\geq m^2 (\ln m)^2:=\iota_m$, for sufficiently large $m$. Hence, by a union bound, using \cite[Prop. 4.7]{sasoglufnt}, we have that
\begin{align}
	\Pr\left[\mathcal{E}_4^c\mid \mathcal{E}_1^c, \mathcal{E}_2^c, \mathcal{E}_3^c\right]&\geq 1-m\cdot Z_g(W^{\otimes \iota_m})\\
	&\geq 1-m\cdot \text{exp}\left({-\frac{\iota_m}{2}\cdot \mathsf{C}(W)+\Theta(\ln \left(\iota_m|\mathcal{X}|^\tau\right)}\right) =: 1-\zeta_{4,m}, \label{eq:dec4}
\end{align}
where $\mathsf{C}(W):= \min_{z\neq z'} \mathsf{C}(W_{Y|z},W_{Y|z'})>0$, with \begin{equation}\mathsf{C}(P_0,P_1):=-\min_{\lambda\in [0,1]}\ln \left(\sum_{x\in \mathcal{X}} P_0(x)^{1-\lambda}P_1(x)^\lambda\right)\end{equation} being the standard Chernoff distance \cite[Ch. 11]{cover_thomas} between distributions $P_0, P_1$ on the same alphabet. We mention that the  inequality in \eqref{eq:dec4} uses the upper bound on the Bhattacharya parameter via the conditional entropy in \cite[Prop. 4.8]{sasoglufnt} and \cite[Thm. 3.1]{arnw-t-it}.

Putting everything together, we obtain that
\begin{align}
	\lim_{m\to \infty}\Pr\left[\bigcap_{i=1}^4 \mathcal{E}_i^c\right] &\geq \lim_{m\to \infty} \prod_{i=1}^4 (1-\zeta_{i,m})\\
	&\geq 1-\lim_{m\to \infty} \sum_{i=1}^4 \zeta_{i,m} = 1,
\end{align}
implying that $\lim_{m\to \infty}\Pr\left[\textsc{Decode}(Y^{T_m})\neq S^m\right] = 0$, as required.
\end{proof}
As a direct corollary of Theorem \ref{thm:decodemain}, we obtain the following statement that shows the effectiveness of our algorithm in ``denoising" the nanopore channel $W_\text{nn}$, for sufficiently large sampling rates. 
%Let $\mathcal{P}^\text{no-loop}$ denote the class of transition kernels of no-self-loop de Bruijn Markov input processes.
\begin{corollary}
	\label{cor:decodemain}
	In the high sampling-rate regime \eqref{eq:sampling}, rates of up to $C_\tau^\text{\normalfont no-noise, no-loop}$ are achievable using the decoder in Algorithm \ref{alg:decode}.
\end{corollary}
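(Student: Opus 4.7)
The plan is to combine Theorem~\ref{thm:decodemain} with a standard random coding argument, exploiting the fact (noted in Section~\ref{sec:no-loop}, via \cite[Thm.~3.23]{Roth}) that the entropy rate $H(\mathscr{S})$ of the Markov chain governed by $P^{\star}_{S|S^-}$ equals $C_\tau^\text{no-noise, no-loop}$. Informally, since Algorithm~\ref{alg:decode} recovers the entire $\tau$-mer sequence $S^m$ from $Y^{T_m}$ with vanishing error probability, any rate strictly below the entropy rate of the input process is achievable just by using $S^m$ itself as the codeword.

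Concretely, fix $R < C_\tau^\text{no-noise, no-loop}$, let $M = \lceil 2^{mR} \rceil$, and construct a random codebook $\{\mathbf{s}(w)\}_{w\in[M]}$ in which each $\mathbf{s}(w)$ is an independent realization of the stationary Markov process with kernel $P^{\star}_{S|S^-}$. The encoder sends $\mathbf{s}(W)$ for a uniformly chosen message $W \in [M]$, and the receiver first computes $\widehat{S}^m = \textsc{Decode}(Y^{T_m})$ via Algorithm~\ref{alg:decode} and then returns the unique $\widehat{w}$ with $\mathbf{s}(\widehat{w}) = \widehat{S}^m$ (declaring an arbitrary message if no unique match exists).

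The error event decomposes into (i) the event $\{\widehat{S}^m \neq \mathbf{s}(W)\}$, whose probability is $o(1)$ by Theorem~\ref{thm:decodemain}, and (ii) the collision event that some $\mathbf{s}(w') = \mathbf{s}(W)$ for $w' \neq W$. For (ii), I would use a union bound together with the AEP for the ergodic Markov source $P^{\star}_{S|S^-}$: restricting to a strongly typical set of probability $1 - o(1)$, on which each sequence carries probability at most $2^{-m(H(\mathscr{S}) - \epsilon)}$, the collision probability is upper bounded by $M \cdot 2^{-m(H(\mathscr{S}) - \epsilon)} + o(1)$, which vanishes as $m \to \infty$ provided $R < H(\mathscr{S}) = C_\tau^\text{no-noise, no-loop}$. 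A standard derandomization argument then yields a deterministic codebook with vanishing error probability at rate $R$.

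I do not anticipate a significant obstacle; the argument is a direct synthesis of Theorem~\ref{thm:decodemain} with textbook random coding. A cleaner, essentially one-line alternative is to apply Fano's inequality to conclude from Theorem~\ref{thm:decodemain} that $\tfrac{1}{m} H(S^m \mid \widehat{S}^m) \to 0$, whence, by the data processing inequality on the Markov chain $S^m \to Y^{T_m} \to \widehat{S}^m$, $\tfrac{1}{m} I(S^m; Y^{T_m}) \to H(\mathscr{S}) = C_\tau^\text{no-noise, no-loop}$, and the achievability of this rate through a random code drawn from $P^{\star}_{S|S^-}$ followed by the decoder in Algorithm~\ref{alg:decode} is a standard consequence of Theorem~\ref{thm:capacity}.
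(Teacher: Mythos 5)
Your ``cleaner, essentially one-line alternative'' at the end is precisely the paper's proof: apply Fano's inequality to $\Pr[\textsc{Decode}(Y^{T_m})\neq S^m]$ (which vanishes by Theorem~\ref{thm:decodemain}), obtain $\tfrac{1}{m}I(S^m;Y^{T_m})\to H(\mathscr{S})=C_\tau^{\text{no-noise, no-loop}}$ for the max-entropic no-self-loop input process, and invoke Theorem~\ref{thm:capacity}. Your primary argument -- explicit random coding with codewords drawn i.i.d.\ from $P^\star_{S|S^-}$, a two-step decoder (Algorithm~\ref{alg:decode} followed by codebook matching), and a collision bound via the AEP for the ergodic Markov source -- is also correct and is a genuinely different route: it builds an operational encoder/decoder pair directly rather than going through mutual information. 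What the paper's Fano route buys is brevity (Theorem~\ref{thm:capacity} has already done the work of equating the operational capacity with the mutual-information supremum); what your random-coding route buys is that it makes the codebook and the role of the decoder explicit, and in particular makes transparent that the rate limitation comes from the source entropy and not the channel noise. Neither version has a gap; the only small care needed in your first route is the standard one of verifying the AEP for the stationary ergodic Markov source $P^\star_{S|S^-}$, which holds, and in your second route the minor notational point that Fano gives $H(S^m\mid\widehat{S}^m)\le h_b(P_e^{(m)})+P_e^{(m)}\,m\,\tau\log|\mathcal{X}|$, so the per-letter contribution still vanishes.
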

\begin{proof}
	Let $P_e^{(m)}:=\Pr\left[\textsc{Decode}(Y^{T_m})\neq S^m\right]$. Observe that for any fixed no-self-loop de Bruijn Markov input process $P_{S|S^-}$, we have, using the decoder in Algorithm \ref{alg:decode}, that
	\begin{align}
		I(S^m;Y^{T_m})\geq H(S^m)-h_b\left(P_e^{(m)}\right)-P_e^{(m)}\cdot \log |\mathcal{X}|^\tau,
	\end{align}
due to Fano's inequality \cite[Thm. 2.10.1]{cover_thomas}. Since we have from Theorem \ref{thm:decodemain} that $\lim_{m\to \infty} P_e^{(m)} = 0$, the statement of the corollary follows.
\end{proof}
We remark that by choosing $\tau$ large enough, we can achieve rates that are arbitrarily close to the optimal rate of $1$, via Theorem \ref{thm:nonoisecap}. We mention also that our specific choices of parameters $\ell_m, h_m, \alpha_m, c_m$ can be changed suitably to other values, to ensure that $\lim_{m\to \infty} \Pr\left[\bigcap_{i=1}^4 \mathcal{E}_i^c\right] = 1$.
\section{Conclusion and Future Work}
\label{sec:conclusion}
In this paper, we continued the study of the noisy nanopore channel (NNC), introduced in \cite{mcbaininfo1}, and presented explicit achievable rates over the channel. In particular, we discussed a (tight) computable lower bound on the capacity of the \emph{noiseless} nanopore channel. We then discussed computable lower and upper bounds on the capacity of NNCs with general noise distributions. Future work calls for a sharpening of these bounds to be accurate in regimes of moderate/large memory length. Next, for an NNC with erasure noise, we showed that for large memory lengths, the capacity of the NNC can be made to approach $1$ arbitrarily closely. We then presented an explicit decoding algorithm for the regime of high sampling rates, which relies on a change-point detection procedure. We argue that using this decoder, one can achieve rates arbitrarily close to the noise-free capacity over such a channel.

An important direction for future research will be to tighten the non-asymptotic upper and lower bounds on the capacity in this paper. One can also try to extend our results on NNCs with large $\tau$-mer lengths from the setting of erasure noise to more general noise distributions. Another direction is to design explicit codes over general NNCs, for \emph{fixed} memory lengths and \emph{bounded} duplication noise, and analytically compute the rates they achieve.

\ifCLASSOPTIONcaptionsoff
  \newpage
\fi

% trigger a \newpage just before the given reference
% number - used to balance the columns on the last page
% adjust value as needed - may need to be readjusted if
% the document is modified later
%\IEEEtriggeratref{8}
% The "triggered" command can be changed if desired:
%\IEEEtriggercmd{\enlargethispage{-5in}}

% references section

% can use a bibliography generated by BibTeX as a .bbl file
% BibTeX documentation can be easily obtained at:
% http://mirror.ctan.org/biblio/bibtex/contrib/doc/
% The IEEEtran BibTeX style support page is at:
% http://www.michaelshell.org/tex/ieeetran/bibtex/
%\bibliographystyle{IEEEtran}
% argument is your BibTeX string definitions and bibliography database(s)
%\bibliography{IEEEabrv,../bib/paper}
%
% <OR> manually copy in the resultant .bbl file
% set second argument of \begin to the number of references
% (used to reserve space for the reference number labels box)
\bibliographystyle{IEEEtran}
{\footnotesize
	\bibliography{references}}

% Generated by IEEEtran.bst, version: 1.14 (2015/08/26)
\begin{thebibliography}{10}
\providecommand{\url}[1]{#1}
\csname url@samestyle\endcsname
\providecommand{\newblock}{\relax}
\providecommand{\bibinfo}[2]{#2}
\providecommand{\BIBentrySTDinterwordspacing}{\spaceskip=0pt\relax}
\providecommand{\BIBentryALTinterwordstretchfactor}{4}
\providecommand{\BIBentryALTinterwordspacing}{\spaceskip=\fontdimen2\font plus
\BIBentryALTinterwordstretchfactor\fontdimen3\font minus
  \fontdimen4\font\relax}
\providecommand{\BIBforeignlanguage}[2]{{%
\expandafter\ifx\csname l@#1\endcsname\relax
\typeout{** WARNING: IEEEtran.bst: No hyphenation pattern has been}%
\typeout{** loaded for the language `#1'. Using the pattern for}%
\typeout{** the default language instead.}%
\else
\language=\csname l@#1\endcsname
\fi
#2}}
\providecommand{\BIBdecl}{\relax}
\BIBdecl

\bibitem{dnachurch}
\BIBentryALTinterwordspacing
G.~M. Church, Y.~Gao, and S.~Kosuri, ``Next-generation digital information
  storage in {{DNA}},'' \emph{Science}, vol. 337, no. 6102, pp. 1628--1628,
  2012. [Online]. Available:
  \url{https://www.science.org/doi/abs/10.1126/science.1226355}
\BIBentrySTDinterwordspacing

\bibitem{dnagoldman}
N.~Goldman, P.~Bertone, S.~Chen, C.~Dessimoz, E.~M. LeProust, B.~Sipos, and
  E.~Birney, ``\BIBforeignlanguage{en}{Towards practical, high-capacity,
  low-maintenance information storage in synthesized {{DNA}}},''
  \emph{\BIBforeignlanguage{en}{Nature}}, vol. 494, no. 7435, pp. 77--80, Jan.
  2013.

\bibitem{dnagrass}
\BIBentryALTinterwordspacing
R.~N. Grass, R.~Heckel, M.~Puddu, D.~Paunescu, and W.~J. Stark, ``Robust
  chemical preservation of digital information on {{DNA}} in silica with
  error-correcting codes,'' \emph{Angewandte Chemie International Edition},
  vol.~54, no.~8, pp. 2552--2555, 2015. [Online]. Available:
  \url{https://onlinelibrary.wiley.com/doi/abs/10.1002/anie.201411378}
\BIBentrySTDinterwordspacing

\bibitem{dnaerlich}
\BIBentryALTinterwordspacing
Y.~Erlich and D.~Zielinski, ``{{DNA} Fountain} enables a robust and efficient
  storage architecture,'' \emph{Science}, vol. 355, no. 6328, pp. 950--954,
  2017. [Online]. Available:
  \url{https://www.science.org/doi/abs/10.1126/science.aaj2038}
\BIBentrySTDinterwordspacing

\bibitem{dnayazdi}
S.~M. H.~T. Yazdi, R.~Gabrys, and O.~Milenkovic,
  ``\BIBforeignlanguage{en}{Portable and error-free {DNA-based} data
  storage},'' \emph{\BIBforeignlanguage{en}{Sci. Rep.}}, vol.~7, no.~1, p.
  5011, Jul. 2017.

\bibitem{dnaorganick}
\BIBentryALTinterwordspacing
L.~Organick \emph{et~al.}, ``Random access in large-scale {{DNA}} data
  storage,'' \emph{Nature Biotechnology}, vol.~36, no.~3, pp. 242--248, Mar
  2018. [Online]. Available: \url{https://doi.org/10.1038/nbt.4079}
\BIBentrySTDinterwordspacing

\bibitem{shomorony}
\BIBentryALTinterwordspacing
I.~Shomorony and R.~Heckel, ``Information-theoretic foundations of {{DNA}} data
  storage,'' \emph{Foundations and Trends® in Communications and Information
  Theory}, vol.~19, no.~1, pp. 1--106, 2022. [Online]. Available:
  \url{http://dx.doi.org/10.1561/0100000117}
\BIBentrySTDinterwordspacing

\bibitem{nir_merhav}
N.~Weinberger and N.~Merhav, ``The {{DNA}} storage channel: Capacity and error
  probability bounds,'' \emph{IEEE Transactions on Information Theory},
  vol.~68, no.~9, pp. 5657--5700, 2022.

\bibitem{lenz}
A.~Lenz, P.~H. Siegel, A.~Wachter-Zeh, and E.~Yaakobi, ``The noisy drawing
  channel: Reliable data storage in {DNA} sequences,'' \emph{IEEE Transactions
  on Information Theory}, vol.~69, no.~5, pp. 2757--2778, 2023.

\bibitem{sequencingsurvey}
\BIBentryALTinterwordspacing
D.~Deamer, M.~Akeson, and D.~Branton, ``Three decades of nanopore sequencing,''
  \emph{Nature Biotechnology}, vol.~34, no.~5, pp. 518--524, May 2016.
  [Online]. Available: \url{https://doi.org/10.1038/nbt.3423}
\BIBentrySTDinterwordspacing

\bibitem{oxford}
\BIBentryALTinterwordspacing
{Oxford Nanopore Technologies}. [Online]. Available:
  \url{https://nanoporetech.com}
\BIBentrySTDinterwordspacing

\bibitem{yazdi-nanopore}
S.~K. Tabatabaei, B.~Pham, C.~Pan, J.~Liu, S.~Chandak, S.~A. Shorkey, A.~G.
  Hernandez, A.~Aksimentiev, M.~Chen, C.~M. Schroeder, and O.~Milenkovic,
  ``\BIBforeignlanguage{en}{Expanding the molecular alphabet of {DNA-based}
  data storage systems with neural network nanopore readout processing},''
  \emph{\BIBforeignlanguage{en}{Nano Lett.}}, vol.~22, no.~5, pp. 1905--1914,
  Mar. 2022.

\bibitem{chakra-nanopore}
R.~Chakraborty, M.~Xiong, N.~Athreya, S.~K. Tabatabaei, O.~Milenkovic, and
  J.-P. Leburton, ``\BIBforeignlanguage{en}{Solid-state {MoS$_{2}$} nanopore
  membranes for discriminating among the lengths of {RNA} tails on a
  double-stranded {DNA}: A new simulation-based differentiating algorithm},''
  \emph{\BIBforeignlanguage{en}{ACS Appl. Nano Mater.}}, vol.~6, no.~6, pp.
  4651--4660, Mar. 2023.

\bibitem{mao}
W.~Mao, S.~N. Diggavi, and S.~Kannan, ``Models and information-theoretic bounds
  for nanopore sequencing,'' \emph{IEEE Transactions on Information Theory},
  vol.~64, no.~4, pp. 3216--3236, 2018.

\bibitem{chandaknanopore}
R.~Hulett, S.~Chandak, and M.~Wootters, ``On coding for an abstracted nanopore
  channel for {DNA} storage,'' in \emph{2021 IEEE International Symposium on
  Information Theory (ISIT)}, 2021, pp. 2465--2470.

\bibitem{hamoumnanopore}
B.~Hamoum, E.~Dupraz, L.~Conde-Canencia, and D.~Lavenier, ``Channel model with
  memory for {DNA} data storage with nanopore sequencing,'' in \emph{2021 11th
  International Symposium on Topics in Coding (ISTC)}, 2021, pp. 1--5.

\bibitem{mcbaininfo1}
B.~McBain, E.~Viterbo, and J.~Saunderson, ``Information rates of the noisy
  nanopore channel,'' \emph{IEEE Transactions on Information Theory}, vol.~70,
  no.~8, pp. 5640--5652, 2024.

\bibitem{mcbainsurvey}
B.~McBain and E.~Viterbo, ``An information-theoretic approach to nanopore
  sequencing for {DNA} storage,'' \emph{IEEE BITS the Information Theory
  Magazine}, vol.~3, no.~3, pp. 95--108, 2023.

\bibitem{scrappie}
\BIBentryALTinterwordspacing
Scrappie technology demonstrator. [Online]. Available:
  \url{https://github.com/nanoporetech/scrappie}
\BIBentrySTDinterwordspacing

\bibitem{mcbainmodel}
B.~McBain, E.~Viterbo, and J.~Saunderson, ``Finite-state semi-{Markov} channels
  for nanopore sequencing,'' in \emph{2022 IEEE International Symposium on
  Information Theory (ISIT)}, 2022, pp. 216--221.

\bibitem{mcbaininfo2}
B.~McBain, J.~Saunderson, and E.~Viterbo, ``On noisy duplication channels with
  {Markov} sources,'' in \emph{2024 IEEE International Symposium on Information
  Theory (ISIT)}, 2024, pp. 3438--3443.

\bibitem{Dob67}
R.~L. Dobrushin, ``Shannon's theorems for channels with synchronization
  errors,'' \emph{Problemy Peredachi Informatsii}, vol.~3, no.~4, pp. 11--26,
  1967.

\bibitem{diggavidel}
S.~Diggavi and M.~Grossglauser, ``On information transmission over a finite
  buffer channel,'' \emph{IEEE Transactions on Information Theory}, vol.~52,
  no.~3, pp. 1226--1237, 2006.

\bibitem{diggaviupper}
S.~Diggavi, M.~Mitzenmacher, and H.~D. Pfister, ``Capacity upper bounds for the
  deletion channel,'' in \emph{2007 IEEE International Symposium on Information
  Theory}, 2007, pp. 1716--1720.

\bibitem{drinea1}
E.~Drinea and M.~Mitzenmacher, ``On lower bounds for the capacity of deletion
  channels,'' \emph{IEEE Transactions on Information Theory}, vol.~52, no.~10,
  pp. 4648--4657, 2006.

\bibitem{drinea3}
M.~Mitzenmacher and E.~Drinea, ``A simple lower bound for the capacity of the
  deletion channel,'' \emph{IEEE Transactions on Information Theory}, vol.~52,
  no.~10, pp. 4657--4660, 2006.

\bibitem{drinea2}
E.~Drinea and M.~Mitzenmacher, ``Improved lower bounds for the capacity of
  i.i.d. deletion and duplication channels,'' \emph{IEEE Transactions on
  Information Theory}, vol.~53, no.~8, pp. 2693--2714, 2007.

\bibitem{kirsch}
A.~Kirsch and E.~Drinea, ``Directly lower bounding the information capacity for
  channels with i.i.d. deletions and duplications,'' \emph{IEEE Transactions on
  Information Theory}, vol.~56, no.~1, pp. 86--102, 2010.

\bibitem{fertonani}
D.~Fertonani and T.~M. Duman, ``Novel bounds on the capacity of the binary
  deletion channel,'' \emph{IEEE Transactions on Information Theory}, vol.~56,
  no.~6, pp. 2753--2765, 2010.

\bibitem{aravind}
A.~R. Iyengar, P.~H. Siegel, and J.~K. Wolf, ``On the capacity of channels with
  timing synchronization errors,'' \emph{IEEE Transactions on Information
  Theory}, vol.~62, no.~2, pp. 793--810, 2016.

\bibitem{mercier}
H.~Mercier, V.~Tarokh, and F.~Labeau, ``Bounds on the capacity of discrete
  memoryless channels corrupted by synchronization and substitution errors,''
  \emph{IEEE Transactions on Information Theory}, vol.~58, no.~7, pp.
  4306--4330, 2012.

\bibitem{rahmati1}
M.~Rahmati and T.~M. Duman, ``Achievable rates for noisy channels with
  synchronization errors,'' \emph{IEEE Transactions on Communications},
  vol.~62, no.~11, pp. 3854--3863, 2014.

\bibitem{coding-1}
\BIBentryALTinterwordspacing
B.~Haeupler and A.~Shahrasbi, ``Synchronization strings: Codes for insertions
  and deletions approaching the singleton bound,'' \emph{J. ACM}, vol.~68,
  no.~5, Sep. 2021. [Online]. Available: \url{https://doi.org/10.1145/3468265}
\BIBentrySTDinterwordspacing

\bibitem{coding-2}
V.~Guruswami and R.~Li, ``Polynomial time decodable codes for the binary
  deletion channel,'' \emph{IEEE Transactions on Information Theory}, vol.~65,
  no.~4, pp. 2171--2178, 2019.

\bibitem{coding-3}
\BIBentryALTinterwordspacing
F.~Pernice, R.~Li, and M.~Wootters, ``Efficient capacity-achieving codes for
  general repeat channels,'' in \emph{2022 IEEE International Symposium on
  Information Theory (ISIT)}.\hskip 1em plus 0.5em minus 0.4em\relax IEEE
  Press, 2022, p. 3097–3102. [Online]. Available:
  \url{https://doi.org/10.1109/ISIT50566.2022.9834386}
\BIBentrySTDinterwordspacing

\bibitem{coding-4}
R.~Con and A.~Shpilka, ``Improved constructions of coding schemes for the
  binary deletion channel and the {Poisson} repeat channel,'' \emph{IEEE
  Transactions on Information Theory}, vol.~68, no.~5, pp. 2920--2940, 2022.

\bibitem{sloanedel}
\BIBentryALTinterwordspacing
N.~J.~A. Sloane, \emph{On single-deletion-correcting codes}.\hskip 1em plus
  0.5em minus 0.4em\relax Berlin, New York: De Gruyter, 2002, pp. 273--292.
  [Online]. Available: \url{https://doi.org/10.1515/9783110198119.273}
\BIBentrySTDinterwordspacing

\bibitem{laszlo}
\BIBentryALTinterwordspacing
A.~H. Laszlo \emph{et~al.}, ``Decoding long nanopore sequencing reads of
  natural {DNA},'' \emph{Nature Biotechnology}, vol.~32, no.~8, pp. 829--833,
  Aug 2014. [Online]. Available: \url{https://doi.org/10.1038/nbt.2950}
\BIBentrySTDinterwordspacing

\bibitem{han-verdu}
S.~Verdu and T.~S. Han, ``A general formula for channel capacity,'' \emph{IEEE
  Transactions on Information Theory}, vol.~40, no.~4, pp. 1147--1157, 1994.

\bibitem{thinning}
P.~Harremoes, O.~Johnson, and I.~Kontoyiannis, ``Thinning and the law of small
  numbers,'' in \emph{2007 IEEE International Symposium on Information Theory},
  2007, pp. 1491--1495.

\bibitem{tao-sum}
T.~Tao, ``Sumset and inverse sumset theory for {Shannon} entropy,''
  \emph{Combinatorics, Probability and Computing}, vol.~19, no.~4, p.
  603–639, 2010.

\bibitem{madiman1}
M.~Madiman, ``On the entropy of sums,'' in \emph{2008 IEEE Information Theory
  Workshop}, 2008, pp. 303--307.

\bibitem{entropycheraghchi}
M.~Cheraghchi, ``Expressions for the entropy of binomial-type distributions,''
  in \emph{2018 IEEE International Symposium on Information Theory (ISIT)},
  2018, pp. 2520--2524.

\bibitem{mitzenmacher_datarecovery}
M.~Mitzenmacher, ``On the theory and practice of data recovery with multiple
  versions,'' in \emph{2006 IEEE International Symposium on Information
  Theory}, 2006, pp. 982--986.

\bibitem{landhuber}
\BIBentryALTinterwordspacing
I.~Land and J.~Huber, ``Information combining,'' \emph{Foundations and Trends®
  in Communications and Information Theory}, vol.~3, no.~3, pp. 227--330, 2006.
  [Online]. Available: \url{http://dx.doi.org/10.1561/0100000013}
\BIBentrySTDinterwordspacing

\bibitem{arnw-t-it}
V.~A. Rameshwar and N.~Weinberger, ``Information rates over multi-view
  channels,'' \emph{IEEE Transactions on Information Theory}, vol.~71, no.~2,
  pp. 847--861, 2025.

\bibitem{sasoglufnt}
\BIBentryALTinterwordspacing
E.~Şaşoğlu, ``Polarization and polar codes,'' \emph{Foundations and Trends®
  in Communications and Information Theory}, vol.~8, no.~4, pp. 259--381, 2012.
  [Online]. Available: \url{http://dx.doi.org/10.1561/0100000041}
\BIBentrySTDinterwordspacing

\bibitem{mitzupfal}
M.~Mitzenmacher and E.~Upfal, \emph{Probability and Computing: Randomized
  Algorithms and Probabilistic Analysis}.\hskip 1em plus 0.5em minus
  0.4em\relax Cambridge University Press, 2005.

\bibitem{cover_thomas}
T.~M. Cover and J.~A. Thomas, \emph{Elements of Information Theory},
  2nd~ed.\hskip 1em plus 0.5em minus 0.4em\relax Wiley-India, 2010.

\bibitem{Roth}
\BIBentryALTinterwordspacing
B.~H. Marcus, R.~M. Roth, and P.~H. Siegel, ``An introduction to coding for
  constrained systems,'' lecture notes. [Online]. Available:
  \url{https://ronny.cswp.cs.technion.ac.il/wp-content/uploads/sites/54/2016/05/chapters1-9.pdf}
\BIBentrySTDinterwordspacing

\bibitem{veeravalli}
V.~V. Veeravalli and T.~Banerjee, \emph{Quickest change detection}.\hskip 1em
  plus 0.5em minus 0.4em\relax Elsevier, 2014, vol.~3, pp. 209--255.

\bibitem{shiryaev1}
\BIBentryALTinterwordspacing
A.~N. Shiryaev, ``On optimum methods in quickest detection problems,''
  \emph{Theory of Probability \& Its Applications}, vol.~8, no.~1, pp. 22--46,
  1963. [Online]. Available: \url{https://doi.org/10.1137/1108002}
\BIBentrySTDinterwordspacing

\bibitem{tartakovsky-veer}
\BIBentryALTinterwordspacing
A.~G. Tartakovsky and V.~V. Veeravalli, ``General asymptotic {Bayesian} theory
  of quickest change detection,'' \emph{Theory of Probability \& Its
  Applications}, vol.~49, no.~3, pp. 458--497, 2005. [Online]. Available:
  \url{https://doi.org/10.1137/S0040585X97981202}
\BIBentrySTDinterwordspacing

\end{thebibliography}
%\clearpage
\appendices
\section{Proof of Lemma \ref{lem:eig}}
\label{app:eig}
\begin{proof}[Proof of Lemma \ref{lem:eig}]
Without loss of generality, assume that $\mathcal{X} = \{0,1,\ldots,q-1\}$, for some positive integer $q$. Consider the adjacency matrix $A_{G_\tau^\text{no-loop}}$ for some fixed $\tau\geq 2$. We first reorder the rows and columns of $A_{G_\tau^\text{no-loop}}$ so that they are indexed by states $s\in \mathcal{X}^\tau$ in the standard lexicographic order on strings in $\mathcal{X}^\tau$, i.e., if $\mathbf{z} = (z_1,\ldots,z_\tau)$ and $\mathbf{z}^{\prime} = (z_1^{\prime},\ldots,z_\tau^{\prime})$ are two states, then, $\mathbf{z}$ occurs before $\mathbf{z}^{\prime}$ iff for some $i\geq 1$, we have $z_j = z_j^{\prime}$ for all $j<i$, and $z_i < z_i^{\prime}$. 

Let
\begin{equation}
A_{G_\tau^\text{no-loop}} = 
\begin{bmatrix}
	A_1 & B_{1,1} & B_{1,2} & \ldots & B_{1,|\mathcal{X}|-1}\\
	B_{2,1} & A_2 & B_{2,2} &\ldots & B_{2,|\mathcal{X}|-1}\\
	\vdots &\vdots &\vdots &\ddots &\vdots\\
	B_{|\mathcal{X}|,1} & B_{|\mathcal{X}|,2} & B_{|\mathcal{X}|,3} & \ldots & A_{|\mathcal{X}|}
\end{bmatrix},
\end{equation}
where each $A_i$, $i\in [q]$ and each $B_{i,j}$, $i\in [q], j\in [q-1]$, is a matrix of order $q^{\tau-1}\times q^{\tau-1}$. By the structure of de Bruijn Markov processes (without self-loops), it can be checked that
\begin{equation}
A_{G_{\tau-1}^\text{no-loop}}  = \sum_{i=1}^{q} A_i.
\end{equation}
To see why, observe that via our ordering of states, each $A_i$, $i\in [q]$, is such exactly $q^{\tau-2}$ of its rows have non-zero entries; these are precisely those rows $\mathbf{z}\in \mathcal{X}^{\tau}$ of $A_{G_\tau^\text{no-loop}}$ lying in $A_i$ (i.e., with $z_1 = i-1$) such that $z_2 = i-1$.
Now, let us define
\begin{equation}
\overline{A}:= 
\begin{bmatrix}
	A_1 & \mathbf{0} & \mathbf{0} & \ldots & \mathbf{0}\\
	\mathbf{0}& A_2 & \mathbf{0} &\ldots & \mathbf{0}\\
	\vdots &\vdots &\vdots &\ddots &\vdots\\
	\mathbf{0}& \mathbf{0} & \mathbf{0} & \ldots & A_{|\mathcal{X}|}
\end{bmatrix},
\end{equation}
where $\mathbf{0}$ denotes the all-zeros matrix of order $|\mathcal{X}|^{\tau-1}\times |\mathcal{X}|^{\tau-1}$. Let $\overline{G}_\tau$ denote the directed graph whose adjacency matrix is $\overline{A}$.

From \cite[Problem 3.26]{Roth} and \cite[Prop. 3.12]{Roth}, we obtain that $\lambda(A_{G_\tau^\text{no-loop}})>\lambda(\overline{A})$. We now claim that $\lambda(\overline{A})\geq A_{G_{\tau-1}^\text{no-loop}}$. In what follows, we prove this claim. Let $P_{t}^\star$ be the transition kernel corresponding to a max-entropic Markov chain supported on the graph $G_{t}^\text{no-loop}$, for any $t\geq 1$, and let $H(P_{\tau-1}^\star)$ denote its entropy rate. Also, let $\overline{P}_\tau^\star$ denote the max-entropic Markov chain supported on $\overline{G}_\tau$. Further, for each $i\in \mathcal{X}$, let $\mathcal{S}_i$ denote the collection of states $\mathbf{z}\in \mathcal{X}^\tau$ with $z_1 = z_2= i-1$; recall that these are precisely the rows of $A_{G_\tau^\text{no-loop}}$ lying in $A_i$ that have at least one non-zero entry. 

The following sequence of inequalities then holds:
\begin{align}
	\log_{|\mathcal{X}|} (\lambda(A_{G_{\tau-1}^\text{no-loop}}))&=H(P_{\tau-1}^\star)\\
	&=\sum_{i=1}^{q} H(S\mid S^-\in \mathcal{S}_i)\Pr[S^-\in \mathcal{S}_i]\\
	&\leq \max_{i\in [q]} H(S\mid S^-\in \mathcal{S}_i)\\
	&\leq H(\overline{P}_{\tau}^\star)= \log_{|\mathcal{X}|} (\lambda(\overline{A})),
\end{align}
implying that $\lambda(\overline{A})\geq \lambda(A_{G_{\tau-1}^\text{no-loop}})$. Here, the second inequality holds via the structure of Markov chains supported on $\overline{G}_\tau$ (see also \cite[Thm. 3.1]{Roth}). Finally, using the fact that $\lambda(\overline{A})<\lambda(A_{G_\tau^\text{no-loop}})$, we complete the proof of the lemma.
\end{proof}

% biography section
% 
% If you have an EPS/PDF photo (graphicx package needed) extra braces are
% needed around the contents of the optional argument to biography to prevent
% the LaTeX parser from getting confused when it sees the complicated
% \includegraphics command within an optional argument. (You could create
% your own custom macro containing the \includegraphics command to make things
% simpler here.)
%\begin{IEEEbiographynophoto}[{\includegraphics[width=1in,height=1.25in,clip,keepaspectratio]{mshell}}]{Michael Shell}
% or if you just want to reserve a space for a photo:

%\begin{IEEEbiographynophoto}{V.~Arvind Rameshwar}
%Biography text here.
%\end{IEEEbiographynophoto}
%
%% if you will not have a photo at all:
%\begin{IEEEbiographynophoto}{Navin Kashyap}
%Biography text here.
%\end{IEEEbiographynophoto}

% insert where needed to balance the two columns on the last page with
% biographies

% You can push biographies down or up by placing
% a \vfill before or after them. The appropriate
% use of \vfill depends on what kind of text is
% on the last page and whether or not the columns
% are being equalized.

%\vfill

% Can be used to pull up biographies so that the bottom of the last one
% is flush with the other column.
%\enlargethispage{-5in}

% that's all folks
\end{document}